\newcommand{\dd}{\mathrm{d}}
\newcommand{\bbR}{\mathbb{R}}
\newcommand{\normal}{\mathrm{N}}
\newcommand{\tsfrac}[2]{{\textstyle \frac{#1}{#2}}}
\newcommand{\twobyone}[2]{\begin{array}{c} #1 \\ #2 \end{array}}
\newcommand{\twobytwo}[4]{\begin{array}{cc} #1 & #2 \\ #3 & #4 \end{array}}
\newcommand{\nrm}[1]{\Vert #1 \Vert}
\newtheoremstyle{egal}% all statements are equal
 {.5em}% Space above
 {.5em}% Space below
 {}% Body font
 {}% Indent amount
 {\bf}% Theorem head font
 {}% Punctuation after theorem head
 {1.5ex}% Space after theorem head
 {}% Theorem head spec (can be left empty, meaning `normal')
\theoremstyle{egal}
\newtheorem{theorem}{Theorem}[section]
\newtheorem{corollary}{Corollary}[section]
\newtheorem{lemma}{Lemma}[section]
\newtheorem{assumption}[theorem]{Assumption}
\title{Metropolis-Hastings algorithms with autoregressive proposals,\\ and a few examples}
\author{Richard A. Norton \& Colin Fox}      % Use First, Middle, Initial: appears on title page
\date{}     % Leave this here but empty
\begin{document}
\maketitle

\begin{abstract}
We analyse computational efficiency of Metropolis-Hastings algorithms with stochastic AR(1) process proposals.  These proposals include, as a subclass, discretized Langevin diffusion (e.g. MALA) and discretized Hamiltonian dynamics (e.g. HMC).  

We derive expressions for the expected acceptance rate and expected jump size for MCMC methods with general stochastic AR(1) process proposals for the case where the target distribution is absolutely continuous with respect to a Gaussian and the covariance of the Gaussian is allowed to have off-diagonal terms.  This allows us to extend what is known about several MCMC methods as well as determining the efficiency of new MCMC methods of this type.  In the special case of Hybrid Monte Carlo, we can determine the optimal integration time and the effect of the choice of mass matrix.

By including the effect of Metropolis-Hastings we also extend results by Fox and Parker, who used matrix splitting techniques to analyse the performance and improve efficiency of stochastic AR(1) processes for sampling from Gaussian distributions.

%Our research enables analysis of MCMC methods that draw samples from non-Gaussian target distributions by using stochastic AR(1) process proposals in Metropolis-Hastings algorithms, by analysing the matrix splitting of the precision matrix for a local Gaussian approximation of the non-Gaussian target.
\end{abstract}

\cleardoublepage        % TOC needs to start on an odd page
\tableofcontents

\ETRmain
\chapter{Introduction}
\label{sec intro}

We consider Metropolis-Hastings (MH) algorithms for sampling from a target distribution $\pi_d$ using a stochastic AR(1) process proposal; given current state $x \in \bbR^d$ the proposal $y \in \bbR^d$ is given by
\begin{equation}
\label{eq ar1}
	y = G x + g + \nu
\end{equation}
where $G \in \bbR^{d\times d}$ is the iteration matrix, $g \in \bbR^d$ is a fixed vector and $\nu$ is an independent and identically distributed (i.i.d.) draw from $\normal(0,\Sigma)$.  In general, $G$, $g$ and $\Sigma$ may depend on $x$.  We will refer to \eqref{eq ar1} as an \emph{AR(1) proposal} or \emph{stochastic AR(1) proposal}.  The proposal is accepted with probability 
$$
	\alpha(x,y) = 1 \wedge \frac{\pi_d(y) q(y,x)}{\pi_d(x) q(x,y)}
$$
where $\pi_d(x)$ denotes the target probability density function, $q(x,\dd y) = q(x,y) \dd y$ is the transition kernel for the proposal $y$ given current state $x$, and $p \wedge q = \min\{p,q\}$.  

Algorithms using AR(1) proposals include: the random-walk Metropolis algorithm (RWM) \cite{RGG1997}, the simplified Langevin algorithm (SLA) \cite{BRS2009}, the so-called $\theta$-SLA method \cite{BRS2009}, preconditioned versions of RWM and SLA \cite{BRS2009}, and the Crank-Nicolson (CN) and preconditioned Crank-Nicolson (pCN) proposals \cite{CRSW2013}.  When the target distribution is Gaussian, then the Metropolis-adjusted Langevin algorithm (MALA) \cite{RR1998} and the Hybrid Monte Carlo algorithm (HMC) \cite{DKPR1987,BPRSS2010,N1993} can also be written in the form of \eqref{eq ar1}, and MALA is the same as SLA.  For any target distribution, one step of HMC is the same as MALA \cite{BPRSS2010}.

Analysis of these Markov chain Monte Carlo (MCMC) methods is almost exclusively limited to the case when the target distribution $\pi_d$ is a change of measure from a reference product distribution so that 
\begin{equation}
\label{eq target0}
	\frac{\dd \pi_d}{\dd \tilde{\pi}_d}(x) = \exp( - \phi_d(x))
\end{equation}
for some $\phi_d: \bbR^d \mapsto \bbR$ where $\tilde{\pi}_d$ is a product distribution of the form
\begin{equation}
\label{eq target1}
	\tilde{\pi}_d(x) = \prod_{i=1}^d \lambda_i f\left( \lambda_i x_i  \right)
\end{equation}
for some $f : \bbR \mapsto \bbR$ and sequence $\{ \lambda_i \}_{i=1}^d \subset \bbR$.  We can view $\pi_d$ and $\tilde{\pi}_d$ as finite-dimensional approximations of infinite-dimensional measures $\pi$ and $\tilde{\pi}$ satisfying $\tsfrac{\dd \pi}{\dd \tilde{\pi}}(x) = \exp( -\phi(x))$ on some state space.  

Examples of inverse problems that yield posterior distributions of this form can be found in \cite{S2010,BRS2009,BRSV2008}.

%Many inverse problems yield such target distributions \cite{S2010,BRS2009,BRSV2008}, but we are also interested in the performance of these MCMC methods on more general target distributions.  \redtext{EXPAND!!}

Analysis of MH algorithms with AR(1) proposals is also typically limited to the cases when $d=1$ or $d \rightarrow \infty$.  Of which the $d \rightarrow \infty$ case is more important because it is used as an approximation for the practical computational problem when $d$ is large but finite.  Precisely what is meant by `large' is problem dependent.  For example, \cite[\S 3]{RR1998} demonstrates that for MALA with $\phi_d = 0$, $f(x) \propto \exp(-x^2/2)$ and $\lambda_i=1$, then $d=5$ is large, but if $f$ is non-symmetric then large $d$ is greater than $10$.  

Analyses of RWM and MALA began with the case when $\phi = 0$ and $\lambda_i = 1$ for all $i$ (so that $\pi_d$ has product form with i.i.d. components) based on discretizations of a Langevin diffusion process.  Roberts, Gelman and Gilks \cite{RGG1997} for RWM and then Roberts and Rosenthal \cite{RR1998} for MALA showed that as $d \rightarrow \infty$, the first component of the Markov chain converges to a Langevin diffusion process and the `speed' of the diffusion process is maximised when the acceptance rate is $0.234$ for RWM and $0.574$ for MALA.  This is equivalent to maximising the expected squared jump size of the Markov chain.  The non i.i.d. cases when $\phi = 0$ and $\lambda_i \neq 1$ for RWM and MALA are subsequently treated in \cite{RR2001,B2007,B2008,BR2008}.  In these articles it is noted that while the optimal acceptance rate for RWM and MALA remains the same, the expected jump size of RWM decreases as the $l^2$-norm of the sequence $\{ \lambda_i \}$ increases, while MALA depends on the $l^6$-norm.  An `inhomogeneous' RWM proposal is also considered in \cite{B2007}, which is what we will call `preconditioning'.  In \cite{BR2008}, it is also noted that their results also hold when the target distribution is a multivariate normal whatever the covariance matrix since orthogonal transformations can transform the target to one with independent components.  We will exploit this fact throughout this article.  A non-product form of target for RWM is considered in \cite{BR2000}.  

The case of RWM and SLA (a simplified version of MALA) for non-product target distributions, when $\phi \neq 0$, is considered in \cite{BRS2009}.  The optimal acceptance rates remain $0.234$ for RWM and $0.574$ for SLA (same as MALA), under certain conditions on $\phi_d$ and $\{ \lambda_i \}_{i=1}^\infty$.  Again, the expected jump sizes for these algorithms decrease with the $l^2$- and $l^6$-norms of $\{ \lambda_i \}$.  It is also suggested how to precondition the RWM and SLA proposals in \cite{BRS2009}.  

HMC was analysed in \cite{BPRSS2010} for the case when the target distribution has product form with i.i.d. components.  Similarly to the analyses of RWM, MALA and SLA, the authors of \cite{BPRSS2010} showed that the expected squared jump size for HMC is maximised when the acceptance rate is $0.651$, and this corresponds to $\mathcal{O}(d^{1/4})$ steps to traverse state space.  This compares favourably with RWM and MALA which require $\mathcal{O}(d)$ and $\mathcal{O}(d^{1/3})$ steps respectively for the same problem, but still blows up as $d \rightarrow \infty$.

More recent analyses has shown that some methods can be modified so they are well-defined in the infinite-dimensional function space setting, so in the limit as $d \rightarrow \infty$, the methods achieve a positive acceptance rate without zero step size and only $\mathcal{O}(1)$ steps are required to traverse state space.  This requires modifying the proposals by `preconditioning' and/or a coordinate transformation.  For example, in the case when the target is a change of measure from a Gaussian reference measure, when
\begin{equation}
\label{eq ref}
	\tilde{\pi}_d(x) \propto \exp \left( - \frac{1}{2} x^T A x + b^T x \right),
\end{equation}
for symmetric positive definite matrix $A \in \bbR^{d \times d}$ and vector $b \in \bbR^d$, the CN and pCN proposals are analysed in \cite{CRSW2013}, and a variant of HMC is analysed in \cite{BPSSS2011}.  Other examples of this approach include \cite{BRSV2008,BS2009a,PST2012}.  Another way to view preconditioning of MALA and HMC is given by \cite{GC2011}.

In pCN and the variant of HMC in \cite{BPSSS2011} it is necessary to draw independent samples from $\normal(0,A^{-1})$ or compute a spectral decomposition of $A$ which effectively transforms the reference measure to a product distribution; both of which could be computationally infeasible when $d$ is large.  In CN, the action of $(I + t A)^{-1}$ for some $t >0$ is required per iteration of the Markov chain, see \cite{CRSW2013}, which may be expensive to compute.

The results in \cite{BRS2009}, where $\tilde{\pi}_d$ has product form, easily extend to the case where $\tilde{\pi}_d$ is Gaussian \eqref{eq ref} where $A$ may have non-zero off-diagonal terms.  This is obvious once we recognise that the Markov chains for RWM and SLA are invariant to orthogonal coordinate transformations.  That is, there exists an orthogonal coordinate transformation that diagonalizes the covariance matrix of $\pi_d$ and $G$ and $\Sigma$ in \eqref{eq ar1}, see e.g. Lemma \ref{lem2}.  It is important to note that it is not necessary to compute the orthogonal transformation, as it is enough to simply know that it exists, and the efficiency of the untransformed Markov chain is identical to the transformed chain.

We will extend this idea to MH algorithms with general AR(1) proposals where $G$ and $\Sigma$ are functions of $A$, targeting distributions that are either Gaussian, or a change of measure from a reference Gaussian distribution \eqref{eq ref}.  In particular, the Gaussian reference distribution is allowed to have off-diagonal terms so it is not restricted to product form.  

Therefore, we extend the study of MH algorithms with particular AR(1) proposals to general AR(1) proposals where $G$ and $\Sigma$ are functions of the reference precision matrix $A$.  We also extend the study of MH algorithms with AR(1) proposals targeting distributions defined by \eqref{eq target0} and \eqref{eq target1} to target distributions defined by \eqref{eq target0} and \eqref{eq ref} where $A$ may have off-diagonal terms.

%This extends the class of problems for which MCMC methods can be analysed beyond those with target distributions with product reference measures defined by \eqref{eq target0} and \eqref{eq target1}.  

Another important feature of this analysis is that the proposals do not necessarily require independent samples from $\normal(0,A^{-1})$ or $\normal(0,A)$, multiplying by $A^{1/2}$ or $A^{-1/2}$, or computing a spectral decomposition of $A$ or $A^{-1}$; even though we use the existence of a spectral decomposition of $A$ for theoretical purposes.  This fact separates this new theory from previous theory for CN, pCN and HMC in \cite{CRSW2013,BPSSS2011}, where these proposals include operations that may be computationally infeasible in high dimensions.  

By generalising the results in \cite{BRS2009,BPRSS2010} under some assumptions, we calculate limits for the expected acceptance rate and expected squared jump size for MH algorithms with AR(1) proposals as $d \rightarrow \infty$.  We can then decide on the efficiency of a method based on expected jump size and the computing cost for each proposal.

Our new theory encompasses existing MCMC methods with AR(1) proposals, which are now special cases for our theory, and we can extend the results that are currently available for SLA, HMC, $\theta$-SLA, and preconditioned versions of these MCMC methods, see Section \ref{sec examples}.  
%In particular, the reference measure may now be Gaussian with non-diagonal covariance.  
In the case of HMC for a Gaussian target, we are no longer restricted, as in \cite{BPRSS2010}, to an i.i.d. product target, and we now have criteria for how to choose the mass matrix (preconditioner) and the total time to integrate the Hamiltonian system.  Previous analyses of HMC only provided guidance on tuning the time step until the acceptance rate is $0.651$.

We can also apply our new theory to new MCMC methods.  For example we can analyse an MCMC method where the proposal is $L$ steps of the SLA proposal before accepting or rejecting.  We show that for any $L$, the step size should be tuned until the acceptance rate is $0.574$, the same as MALA and SLA, and when the computing cost is dominated by matrix-vector products with $A$, then it is optimal to use $L>1$.   Moreover, as the cost of evaluating $\phi_d$ increases, so does the optimal choice of $L$.  

Our analysis relies on the theory of \emph{matrix splitting} which originated in numerical linear algebra for iteratively solving linear systems of equations \cite{A1994}, but has since been applied to sampling from Gaussian distributions \cite{F2013,FP2016,FP2014}.  As we will see in Section \ref{sec equiv}, if the spectral radius of $G$ is less than $1$, then it is possible to rewrite \eqref{eq ar1} in terms of a matrix splitting of a matrix $\mathcal{A}$, which is not equal to $A$ in general.  By defining \emph{splitting} matrices $M$ and $N$ such that $\mathcal{A} = M-N$ then $y$ from \eqref{eq ar1} satisfies
\begin{equation}
\label{eq ar1b}
	M y = N x + \beta + \nu
\end{equation}
where $\beta$ is a vector, $\nu$ is an i.i.d. draw from $\normal(0,M^T+N)$, $G = M^{-1}N$, $g = M^{-1}\beta$ and $\Sigma = M^{-1}(M^T + N) M^{-T}$.  The converse statement, $y$ satisfies \eqref{eq ar1} if $y$ satisfies \eqref{eq ar1b}, only requires that $M^{-1}$ exists.  Moreover, if the spectral radius of $G$ is less than $1$, then the Markov chain generated by \eqref{eq ar1} or \eqref{eq ar1b} without the MH accept/reject step, will converge to $\normal(\mathcal{A}^{-1}\beta,\mathcal{A}^{-1})$, which we call the \emph{proposal limit distribution}, see \cite{FP2016}.  We call this Markov chain the \emph{proposal chain}.

Fox and Parker \cite{FP2016} realised that Gibbs sampling from a Gaussian $\normal(\mathcal{A}^{-1}\beta,\mathcal{A}^{-1})$ is very closely related to the Gauss-Seidel iterative solution to a linear system of equations $\mathcal{A}x = \beta$, and that Gauss-Seidel and Gibbs sampling use the same matrix splitting $\mathcal{A} = M-N$; $M = L + D$ and $N = U$ where $L$, $D$ and $U$ are the strictly lower triangular, diagonal, and upper triangular parts of $\mathcal{A}$ respectively.  A generalisation of this observation is that all proposal chains generated by \eqref{eq ar1b} are generalised fixed-scan Gibbs samplers for a Gaussian.

If \eqref{eq ar1} or \eqref{eq ar1b} is a proposal for the MH algorithm then the transition kernel changes from that of the proposal chain and we cannot use the theory in \cite{FP2016} to determine its convergence properties.  Moreover, acceleration techniques suggested in \cite{FP2014,FP2016} for the proposal chain may not accelerate the MH algorithm.  For example, Goodman and Sokal \cite{GS1989} accelerated Gibbs sampling of normal distributions using ideas from multigrid linear solvers, but only observed modest efficiency gains in their non-normal examples (exponential distributions with fourth moments).  Also, Green and Han \cite{GH1992} applied successive-over-relaxation to a local Gaussian approximation of a non-Gaussian target distribution as a proposal for the MH algorithm.  Again, they did not observe significant acceleration in the non-normal target case.

This article is useful for designing efficient MCMC methods because it shows how the expected squared jump size depends on the eigenvalues of $G$ and the difference between the proposal limit $\normal(\mathcal{A}^{-1}\beta,\mathcal{A}^{-1})$ and the target reference $\normal(A^{-1}b,A^{-1})$.  Generally, an efficient MH algorithm with an AR(1) proposal should satisfy:
\begin{enumerate}
\item $\mathcal{A} = A$ and $\beta = b$ or have small differences.
\item The spectral radius of $G$ should be bounded well below $1$.
\item The proposal should be cheap to compute, i.e. multiplying by $G$ or $M^{-1}$ and independent sampling from $\normal(0,\Sigma)$ or $\normal(0,M^T+N)$ should be cheap to compute.
\end{enumerate}
In addition to providing these `rules of thumb', we have quantified the effect of the eigenvalues of $G$, the difference between the eigenvalues of $A$ and $\mathcal{A}$, and the difference between $A^{-1}b$ and $\mathcal{A}^{-1}\beta$.

Our long-term goal is to construct an AR(1) proposal for the MH algorithm based on local Gaussian approximations to the target $\pi_d$, using ideas in this article.  For example, we might choose $A$ and $b$ such that $-\tsfrac{1}{2} x^T \mathcal{A} x + \beta^T x$ is a local quadratic approximation to $\log \pi_d$, and then choose $M$ and $N$ to define a proposal such that $\mathcal{A} = A$ and $\beta = b$ and the proposal is cheap to compute.  To a certain extent, choosing $A$ and $b$ to obtain a local quadratic approximation to $\log \pi_d$ mimics the design of some optimization algorithms, see e.g. \cite{NW1999}, so optimization theory could be a source of inspiration for designing sampling algorithms.

%Our analysis of case 2 is for the special case when the matrices $M$ and $N$ are functions of $A$.  This allows us to simultaneously diagonalise both the AR(1) process and the target distribution with a change of variables.  This is not an overly restrictive condition if factorizing the matrix $A$ is infeasible, and we will see below that it includes several important examples of MH algorithms already in use.

Our analysis is limited to cases where $M$ and $N$ are functions of $A$.  This allows us to simultaneously diagonalise both the AR(1) proposal and the target reference distribution with a coordinate transformation and define a parallel Markov chain (that we never compute with) such that it has the same convergence properties as the original.  This is not an overly restrictive condition if the dimension is high enough to make a spectral decomposition of $A$ impractical to compute.  We will see below that it includes several important examples of MH algorithms already in use.

%For the purpose of evaluating the cost of computing a proposal from \eqref{eq ar1b} for a given matrix splitting, we make similar assumptions as made in numerical linear algebra for solving $A x = b$, since solver and sampling algorithms share many of the same operations.  We assume that we can efficiently compute matrix-vector products $A v$ for any vector $v \in \bbR^d$; for example $A$ may be sparse.  Furthermore, we assume that $d$ is sufficiently large so that it is infeasible to directly compute $A^{1/2}$ or any other matrix factorization of $A$ and then directly compute independent samples from our Gaussian target.  

The remaining sections are as follows.  In Section \ref{sec equiv} we show that \eqref{eq ar1} and \eqref{eq ar1b} are equivalent, then Section \ref{sec gaussian} presents new analyses for the expected acceptance rate and jump size of MH algorithms with AR(1) proposals when the target distribution is Gaussian ($\phi_d = 0$).  We then extend these results to the non-Gaussian case in Section \ref{sec nongaussian}.  Section \ref{sec examples} then applies this new analysis to proposals from Langevin diffusion and Hamiltonian dynamics.  We see that these proposals are AR(1) proposals and we identify the corresponding matrix splitting and proposal limit distribution.  Using our earlier analysis we assess the convergence properties of these methods as $d \rightarrow \infty$.  We provide concluding remarks in Section \ref{sec conclusion}.  Several proofs have been moved to the Appendix to improve readability.

%%%%%%%%%%%%%%%%%%%%%%%%%%%%%%%%%%%%%%%%%%%%%%%%%%%%%%%%%%%
%%%%%%%%%%%%%%%%%%%%%%%%%%%%%%%%%%%%%%%%%%%%%%%%%%%%%%%%%%%
%%%%%%%%%%%%%%%%%%%%%%%%%%%%%%%%%%%%%%%%%%%%%%%%%%%%%%%%%%%
\chapter{Preliminary results and notation}

%%%%%%%%%%%%%%%%%%%%%%%%%%%%%%%%%%%%%%%%%%%%%%%%%%%%%%%%%%%
%%%%%%%%%%%%%%%%%%%%%%%%%%%%%%%%%%%%%%%%%%%%%%%%%%%%%%%%%%%
%%%%%%%%%%%%%%%%%%%%%%%%%%%%%%%%%%%%%%%%%%%%%%%%%%%%%%%%%%%
\section{Stochastic AR(1) processes correspond to matrix splittings}
\label{sec equiv}

We can express a stochastic AR(1) process using either \eqref{eq ar1} or \eqref{eq ar1b}, provided it converges.  

\begin{theorem}
\label{thm:2.1}
If we are given $G$, $g$ and $\Sigma$, and the spectral radius of $G$ is less than $1$, then the stochastic AR(1) process \eqref{eq ar1} can be written as \eqref{eq ar1b} using
\begin{equation}
\label{eq calA}
	\mathcal{A} = \left( \sum_{l=0}^\infty G^l \Sigma (G^{T})^l \right)^{-1}
\end{equation}
and
\begin{equation}
\label{eq MNbeta}
\begin{split}
	M &= \mathcal{A} (I-G)^{-1}, \\
	N &= \mathcal{A} (I-G)^{-1} G, \qquad \qquad 
	 \beta = \mathcal{A} (I-G)^{-1} g.
\end{split}
\end{equation}
Note that $\mathcal{A} = M - N$ is symmetric and positive definite (spd).
\end{theorem}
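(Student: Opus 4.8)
The plan is to establish that $\mathcal{A}$ in \eqref{eq calA} is well defined and spd, then to check that the parameter map between \eqref{eq ar1} and \eqref{eq ar1b} is consistent, the only nontrivial part being the covariance identity. For well-definedness: since the spectral radius of $G$ is below $1$, Gelfand's formula yields constants $C>0$ and $r\in(\rho(G),1)$ with $\nrm{G^l}\le Cr^l$, so the series $\mathcal{A}^{-1}=\sum_{l=0}^\infty G^l\Sigma(G^T)^l$ converges absolutely. Every summand is symmetric and the $l=0$ term $\Sigma$ is spd, so $\mathcal{A}^{-1}\succeq\Sigma\succ0$; hence $\mathcal{A}^{-1}$, and therefore $\mathcal{A}$, is spd. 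Since $\rho(G)<1$ also forces $1$ to be a non-eigenvalue of $G$, the matrix $I-G$ is invertible, so $M$, $N$, $\beta$ in \eqref{eq MNbeta} are well defined, and the computation $M-N=\mathcal{A}(I-G)^{-1}(I-G)=\mathcal{A}$ establishes the final assertion that $\mathcal{A}=M-N$ is spd.

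Next I would verify the correspondence of parameters. From $M=\mathcal{A}(I-G)^{-1}$ and $N=MG$ one reads off $M^{-1}N=G$ and $M^{-1}\beta=g$ at once, so multiplying \eqref{eq ar1b} by $M^{-1}$ recovers the drift $Gx+g$ of \eqref{eq ar1}. It then remains to show the noise matches, i.e. that $\nu\sim\normal(0,M^T+N)$ gives $M^{-1}\nu\sim\normal(0,\Sigma)$, equivalently $M\Sigma M^T=M^T+N$. The key step is to notice that the series \eqref{eq calA} satisfies the discrete Lyapunov equation $\mathcal{A}^{-1}=\Sigma+G\mathcal{A}^{-1}G^T$, obtained by shifting the summation index by one. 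Rearranging to $\Sigma=\mathcal{A}^{-1}-G\mathcal{A}^{-1}G^T$ and using $MG=N$ reduces the covariance claim to the purely algebraic identity $M\mathcal{A}^{-1}M^T-N\mathcal{A}^{-1}N^T=M^T+N$.

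Finally I would prove this identity from $\mathcal{A}=M-N$ together with the symmetry of $\mathcal{A}$. Writing $M=\mathcal{A}+N$ and $M^T=\mathcal{A}+N^T$ and expanding gives $M\mathcal{A}^{-1}M^T=\mathcal{A}+N+N^T+N\mathcal{A}^{-1}N^T$, whence $M\mathcal{A}^{-1}M^T-N\mathcal{A}^{-1}N^T=\mathcal{A}+N+N^T=M^T+N$, as needed. I expect the main obstacle to be spotting the Lyapunov equation encoded in \eqref{eq calA}; once that is identified the covariance verification collapses to this short symmetric-matrix computation, and the remaining points (absolute convergence of the series, invertibility of $I-G$, and the explicit forms of $G$ and $g$) are routine.
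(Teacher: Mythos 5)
Your proposal is correct and follows essentially the same route as the paper: both reduce the covariance check to the discrete Lyapunov identity $\Sigma = \mathcal{A}^{-1} - G\mathcal{A}^{-1}G^T$ obtained by telescoping the series \eqref{eq calA}, you merely verifying the conjugate form $M\Sigma M^T = M^T+N$ where the paper directly simplifies $M^{-1}(M^T+N)M^{-T}$. Your added remarks on absolute convergence via Gelfand's formula and on the invertibility of $I-G$ fill in details the paper leaves implicit, but do not change the argument.
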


\begin{proof}
Since the spectral radius of $G$ is less than $1$ and $\Sigma$ is spd, it follows that $\mathcal{A}^{-1} := \sum_{l=0}^\infty G^l \Sigma (G^{T})^l$ is well-defined and spd.  Then \eqref{eq calA} and \eqref{eq MNbeta} satisfy $\mathcal{A} = M-N$, $G = M^{-1} N$ and $g = M^{-1} \beta$.  We must also check that $\Sigma = M^{-1} (M^T + N) M^{-T}$ is satisfied.  Substituting \eqref{eq MNbeta} into $M^{-1} (M^T+N)M^{-T}$ we get
\begin{align*}
	M^{-1}(M^T + N) M^{-T} 
	&= (I-G) \mathcal{A}^{-1} + G \mathcal{A}^{-1} G^T \\
	&= \mathcal{A}^{-1} - G \mathcal{A}^{-1} G^T \\
	&= \sum_{l=0}^\infty G^l \Sigma (G^T)^l - \sum_{l=1}^\infty G^l \Sigma (G^T)^l \\
	&= \Sigma.
\end{align*}
\end{proof}

If we are given $M$, $N$ and $\beta$, and $M^{-1}$ exists, then it is obvious that if $y$ satisfies \eqref{eq ar1b}, then $y$ satisfies \eqref{eq ar1} with $G = M^{-1}N$, $g = M^{-1}\beta$ and $\Sigma = M^{-1}(M^T + N) M^{-T}$. 

We also remark that Theorem \ref{thm:2.1} does not apply to RWM since $G = I$ for RWM.

In the following special case we obtain a symmetric matrix splitting.  

\begin{corollary}
\label{lem equiv}
If the spectral radius of $G$ is less than $1$ and $G \Sigma$ is symmetric, then the stochastic AR(1) process \eqref{eq ar1} has a corresponding matrix splitting defined by
\begin{align*}
	M &= \Sigma^{-1} (I + G), & \mathcal{A} &= M (I-G) = \Sigma^{-1} (I-G^2), \\
	N &= MG = \Sigma^{-1}(I+G)G, & \beta &= Mg = \Sigma^{-1}(I+G) g,
\end{align*}
and $M$ and $N$ are symmetric (we say the matrix splitting is symmetric).
\end{corollary}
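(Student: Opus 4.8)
The plan is to specialise the formulas of Theorem~\ref{thm:2.1} using the extra hypothesis that $G\Sigma$ is symmetric. First I would record the symmetry condition in a more convenient conjugation form: $G\Sigma = (G\Sigma)^T = \Sigma G^T$, and multiplying on the left by $\Sigma^{-1}$ gives $\Sigma^{-1} G \Sigma = G^T$. Iterating this relation yields $(G^T)^l = \Sigma^{-1} G^l \Sigma$ for every $l \ge 0$, which is the identity that will collapse the two-sided series defining $\mathcal{A}$.

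The key step is to evaluate $\mathcal{A}^{-1} = \sum_{l=0}^\infty G^l \Sigma (G^{T})^l$ from \eqref{eq calA}. Substituting $(G^T)^l = \Sigma^{-1} G^l \Sigma$ turns each summand into $G^l \Sigma \, \Sigma^{-1} G^l \Sigma = G^{2l}\Sigma$, so the two-sided series becomes the one-sided geometric series $\bigl(\sum_{l=0}^\infty G^{2l}\bigr)\Sigma$. Since the spectral radius of $G$, and hence of $G^2$, is less than $1$, this sums to $(I-G^2)^{-1}\Sigma$. Inverting gives $\mathcal{A} = \Sigma^{-1}(I-G^2)$, the asserted expression.

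The remaining identities then follow by routine algebra from $\mathcal{A} = \Sigma^{-1}(I-G^2)$ together with the factorisation $I-G^2 = (I+G)(I-G)$, whose two factors commute because they are polynomials in $G$. From \eqref{eq MNbeta}, and noting that $(I-G)^{-1}$ exists because $1$ is not an eigenvalue of $G$, we get $M = \mathcal{A}(I-G)^{-1} = \Sigma^{-1}(I+G)(I-G)(I-G)^{-1} = \Sigma^{-1}(I+G)$; then $N = \mathcal{A}(I-G)^{-1}G = MG$ and $\beta = \mathcal{A}(I-G)^{-1}g = Mg$ reproduce the displayed formulas directly.

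Finally I would verify symmetry of $M$ and $N$. Symmetry of $M = \Sigma^{-1}(I+G)$ reduces to $\Sigma^{-1}G = (\Sigma^{-1}G)^T = G^T\Sigma^{-1}$, which is exactly the rearranged hypothesis $\Sigma^{-1}G\Sigma = G^T$. For $N = \Sigma^{-1}(G+G^2)$, applying $G^T\Sigma^{-1} = \Sigma^{-1}G$ once to each power gives $N^T = \bigl(G^T + (G^T)^2\bigr)\Sigma^{-1} = \Sigma^{-1}(G+G^2) = N$. None of these steps is delicate; the only point that genuinely needs the spectral-radius bound is the term-by-term manipulation of the infinite sum in the key step, and that is justified by the absolute convergence established already in the proof of Theorem~\ref{thm:2.1}. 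I therefore expect the substance of the argument to be the series collapse, with everything after it being mechanical verification.
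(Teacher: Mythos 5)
Your proposal is correct and follows essentially the same route as the paper: the substance in both cases is the collapse $\sum_{l\ge 0} G^l\Sigma(G^T)^l=\sum_{l\ge 0}G^{2l}\Sigma=(I-G^2)^{-1}\Sigma$ obtained from $G\Sigma=\Sigma G^T$, after which the formulas for $M$, $N$, $\beta$ drop out of Theorem~\ref{thm:2.1}. The only cosmetic difference is in the symmetry check: the paper observes that $M^{-1}=(I-G)\sum_{l\ge 0}G^{2l}\Sigma$ is symmetric and deduces symmetry of $N$ from $\mathcal{A}=M-N$, whereas you verify $\Sigma^{-1}(I+G)$ and $\Sigma^{-1}(G+G^2)$ directly from $\Sigma^{-1}G=G^T\Sigma^{-1}$; both are immediate consequences of the same hypothesis.
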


\begin{proof}
These matrix splitting formulae follow from the identity 
$$
	\sum_{l=0}^\infty G^l \Sigma (G^T)^l = \sum_{l=0}^\infty G^{2l} \Sigma = (I-G^2)^{-1} \Sigma.
$$
To see that $M$ is symmetric (and hence also $N$ since $\mathcal{A}$ is symmetric and $\mathcal{A} = M-N$) we note that
$$
	M = \left( (I-G) \sum_{l=0}^\infty G^{2l} \Sigma \right)^{-1}.
$$
\end{proof}

%%%%%%%%%%%%%%%%%%%%%%%%%%%%%%%%%%%%%%%%%%%%%%%%%%%%%%%%%%%
%%%%%%%%%%%%%%%%%%%%%%%%%%%%%%%%%%%%%%%%%%%%%%%%%%%%%%%%%%%
%%%%%%%%%%%%%%%%%%%%%%%%%%%%%%%%%%%%%%%%%%%%%%%%%%%%%%%%%%%
\section{Notation}
\label{sec notation}

Throughout the remainder of this article we will use the following notation.  Let $G_i$, $\lambda_i^2$ and $\tilde{\lambda}_i^2$ be eigenvalues of $G$, $A$ and $\mathcal{A}$ respectively.  Also define
\begin{align*}
	\tilde{g}_i &:= 1-G_i, & g_i &:= 1-G_i^2, & m_i &:= (A^{-1}b)_i, & \tilde{m}_i &:= (\mathcal{A}^{-1}\beta)_i, \\
	r_i &:= \frac{\lambda_i^2-\tilde{\lambda}_i^2}{\lambda_i^2}, &
	\tilde{r}_i &:= \frac{\lambda_i^2}{\tilde{\lambda}_i^2}, & 
	\hat{r}_i &:= m_i - \tilde{m}_i
\end{align*}
and 
\begin{align*}
	T_{0i} &:= \hat{r}_i^2 \lambda_i^2 ( \tsfrac{1}{2} r_i g_i - \tilde{g}_i), &
	T_{1i} &:= \hat{r}_i \lambda_i ( r_i g_i - \tilde{g}_i), &
	T_{2i} &:= \hat{r}_i \lambda_i (\tilde{r}_i g_i )^{1/2} ( 1 - r_i G_i), \\
	T_{3i} &:= \tsfrac{1}{2} r_i g_i, &
	T_{4i} &:= -\tsfrac{1}{2} r_i \tilde{r}_i g_i, &
	T_{5i} &:= -r_i G_i (\tilde{r}_i g_i)^{1/2}.
\end{align*}
In general, all of these quantities may depend on $d$. The standard normal cumulative distribution function will always be $\Phi$.

We will say that $f_{d,i} = \mathcal{O}(g_{d,i})$ (uniformly in $i$) as $d \rightarrow \infty$ if for all $i$ and all sufficiently large $d$, $f_{d,i}/g_{d,i}$ is bounded by a constant that is independent of $d$ and $i$. Likewise, $f_{d,i} = \mathrm{o}(g_{d,i})$ (uniformly in $i$) as $d \rightarrow \infty$ if $\max_{1\leq i \leq d} f_{d,i}/g_{d,i} \rightarrow 0$ as $d \rightarrow \infty$.  For brevity we will sometimes omit ``uniformly in $i$''.

Other articles use $\lambda_i^2$ to denote the eigenvalues of the covariance matrix corresponding to $\tilde{\pi}_d$ \cite{BRS2009,BS2009b,BS2009a,BPSSS2011}.  We do not follow this convention and instead use $\lambda_i^2$ to denote eigenvalues of the precision matrix.  Since sampling from the Gaussian $N(A^{-1}b,A^{-1})$ is equivalent to solving the linear system $Ax=b$ (see \cite{F2013,FP2014,FP2016}), our notation aligns with literature on solving linear systems.

%%%%%%%%%%%%%%%%%%%%%%%%%%%%%%%%%%%%%%%%%%%%%%%%%%%%%%%%%%%
%%%%%%%%%%%%%%%%%%%%%%%%%%%%%%%%%%%%%%%%%%%%%%%%%%%%%%%%%%%
%%%%%%%%%%%%%%%%%%%%%%%%%%%%%%%%%%%%%%%%%%%%%%%%%%%%%%%%%%%
%\section{Other useful preliminary results}
%\label{sec prelim}

%%%%%%%%%%%%%%%%%%%%%%%%%%%%%%%%%%%%%%%%%%%%%%%%%%%%%%%%%%%
%%%%%%%%%%%%%%%%%%%%%%%%%%%%%%%%%%%%%%%%%%%%%%%%%%%%%%%%%%%
%%%%%%%%%%%%%%%%%%%%%%%%%%%%%%%%%%%%%%%%%%%%%%%%%%%%%%%%%%%
\chapter{Gaussian targets}
\label{sec gaussian}

%%%%%%%%%%%%%%%%%%%%%%%%%%%%%%%%%%%%%%%%%%%%%%%%%%%%%%%%%%%
%%%%%%%%%%%%%%%%%%%%%%%%%%%%%%%%%%%%%%%%%%%%%%%%%%%%%%%%%%%
%%%%%%%%%%%%%%%%%%%%%%%%%%%%%%%%%%%%%%%%%%%%%%%%%%%%%%%%%%%
\section{Expected acceptance rate for a Gaussian target}
\label{sec expect}

The expected acceptance rate is a quantity that is related to efficiency and for optimal performance the proposal is usually tuned so that the observed average acceptance rate is between $0$ and $1$.  For example, it has been shown, for particular target distributions, in the case when $d \rightarrow \infty$, that $0.234$ is optimal for RWM \cite{RGG1997}, $0.574$ is optimal for MALA \cite{RR1998} and SLA \cite{BRS2009}, and $0.651$ is optimal for HMC \cite{BPRSS2010}.  All of these results required expressions for the expected acceptance rate of the algorithm as $d \rightarrow \infty$.  Here we derive an expression for the expected acceptance rate for a MH algorithm with an AR(1) proposal \eqref{eq ar1b} and Gaussian target $\normal(A^{-1}b,A^{-1})$, provided the splitting matrices are functions of $A$.  Thus, our MH algorithm is defined by
\begin{equation}
\label{MH0}
\begin{split}
	\mbox{Target:} & \qquad \normal(A^{-1}b, A^{-1}), \\
	\mbox{Proposal:} & \qquad y = G x + M^{-1} \beta + (\mathcal{A}^{-1} - G \mathcal{A}^{-1} G^T)^{1/2} \xi, 
\end{split}
\end{equation}
where $\xi \sim \normal(0,I)$, and we have used $G = M^{-1}N$ and $M^{-1}(M^T+N)M^{-T} = \mathcal{A}^{-1} - G\mathcal{A}^{-1} G^T$ \cite[Lem. 2.3]{FP2014}.  The following lemma is a result of simple algebra.  The proof is in the Appendix.

\begin{lemma}
\label{lem1}
Suppose $\mathcal{A} = M-N$ is a symmetric splitting.  Then the acceptance probability for \eqref{MH0} satisfies
$$
	\alpha(x,y) = 1 \wedge \exp \left( -\frac{1}{2} y^T (A - \mathcal{A}) y + \frac{1}{2} x^T(A-\mathcal{A}) x + (b-\beta)^T (y-x)\right).
$$
\end{lemma}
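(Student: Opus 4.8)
The plan is to compute the Metropolis--Hastings ratio $\pi_d(y)q(y,x)/(\pi_d(x)q(x,y))$ directly and show its logarithm equals the claimed quadratic. Since the target is $\normal(A^{-1}b,A^{-1})$, the target part is immediate: $\log\frac{\pi_d(y)}{\pi_d(x)} = -\tfrac12(y^T A y - x^T A x) + b^T(y-x)$, with the normalising constant $\tfrac12 b^T A^{-1}b$ cancelling. For the proposal part I would work not from \eqref{eq ar1} but from the splitting form \eqref{eq ar1b}: conditional on $x$, the proposal $y$ satisfies $My = Nx + \beta + \nu$ with $\nu \sim \normal(0,M^T+N)$, and because the splitting is symmetric by hypothesis we have $M^T+N = M+N =: W$. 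Hence $q(x,y) \propto \exp\!\big(-\tfrac12 (My-Nx-\beta)^T W^{-1}(My-Nx-\beta)\big)$, where the constant of proportionality bundles the Gaussian normalisation together with the $|\det M|$ Jacobian from the change of variables $\nu \mapsto y$; both are independent of $(x,y)$ and identical for $q(y,x)$, so they cancel in the ratio.

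The key computation is then $\log\frac{q(y,x)}{q(x,y)} = \tfrac12\big(u^T W^{-1} u - v^T W^{-1} v\big)$ with $u := My - Nx - \beta$ and $v := Mx - Ny - \beta$. I would apply the polarisation identity $u^T P u - v^T P v = (u-v)^T P (u+v)$, valid for the symmetric matrix $P = W^{-1}$. Here the symmetry of the splitting does the real work: it gives the clean factorisations $u-v = (M+N)(y-x) = W(y-x)$ and $u+v = (M-N)(x+y) - 2\beta = \mathcal{A}(x+y) - 2\beta$, so that the $W$ and $W^{-1}$ cancel, leaving $\log\frac{q(y,x)}{q(x,y)} = \tfrac12 (y-x)^T\big(\mathcal{A}(x+y) - 2\beta\big)$. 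Using $\mathcal{A}^T = \mathcal{A}$ (Theorem~\ref{thm:2.1}) the cross terms $y^T\mathcal{A}x$ and $x^T\mathcal{A}y$ cancel, reducing this to $\tfrac12(y^T\mathcal{A}y - x^T\mathcal{A}x) - \beta^T(y-x)$.

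Adding the target and proposal contributions gives exactly $-\tfrac12 y^T(A-\mathcal{A})y + \tfrac12 x^T(A-\mathcal{A})x + (b-\beta)^T(y-x)$, and inserting this into $\alpha(x,y) = 1 \wedge \exp(\cdot)$ finishes the proof. The argument is essentially bookkeeping with quadratic forms, so the only substantive points to get right are (i) that the Gaussian normaliser and the $|\det M|$ Jacobian cancel, and (ii) the precise place where symmetry of $M$ and $N$ is indispensable, namely in obtaining $u-v = W(y-x)$ and in cancelling the $\mathcal{A}$ cross terms; without symmetry neither cancellation survives and the exponent would not reduce to the stated form. I expect (ii) to be the main conceptual obstacle, with everything else routine.
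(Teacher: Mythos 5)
Your proof is correct and follows essentially the same route as the paper's: write the log Metropolis--Hastings ratio using the splitting form of the proposal density, apply the difference-of-quadratic-forms (polarisation) identity to $u = My-Nx-\beta$ and $v = Mx-Ny-\beta$ so that $(M+N)^{-1}$ cancels against $u-v=(M+N)(y-x)$, and use symmetry of $\mathcal{A}$ to kill the cross terms. Your version is in fact slightly more careful than the paper's, which omits mention of the $|\det M|$ Jacobian and has a sign typo in the stated proportionality for $q(x,y)$.
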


Since $A$ is real and symmetric, we can define a spectral decomposition
$$
	A = Q \Lambda Q^T
$$ 
where $Q \in \bbR^{d\times d}$ is orthogonal and $\Lambda = \operatorname{diag}(\lambda_1^2,\dotsc,\lambda_d^2)$ is a diagonal matrix of eigenvalues of $A$.  Although we may not be able to compute $Q$ and $\Lambda$ this does not stop us from using the existence of a spectral decomposition for theory.  Simple algebra gives us the following result.

\begin{lemma}
\label{lem2}
Suppose $M = M(A)$ and $N = N(A)$ are functions of $A$.  Then $G$ and $\mathcal{A}$ are also functions of $A$ and under the coordinate transformation
$$
	x \leftrightarrow Q^T x
$$
the MH algorithm \eqref{MH0} is transformed to a MH algorithm defined by
\begin{equation}
\label{MH1}
\begin{split}
	\mbox{Target:} & \qquad \normal(\Lambda^{-1} Q^T b, \Lambda^{-1}), \\
	\mbox{Proposal:} & \qquad y = G x + M(\Lambda)^{-1} Q^T \beta + (\tilde{\Lambda}^{-1} - G \tilde{\Lambda}^{-1} G^T)^{1/2} \xi, 
\end{split}
\end{equation}
where $\xi \sim \normal(0,I)$, and $G = M(\Lambda)^{-1}N(\Lambda)$ and $\tilde{\Lambda} = \mathcal{A}(\Lambda)$ are diagonal matrices.  
\end{lemma}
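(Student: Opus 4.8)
The plan is to exploit the functional calculus for the symmetric matrix $A$. Since $M = M(A)$ and $N = N(A)$ are functions of $A$, they are simultaneously diagonalised by the same $Q$: writing $M(\Lambda)$ and $N(\Lambda)$ for the diagonal matrices obtained by applying the same functions to the eigenvalues, we have $M = Q M(\Lambda) Q^T$ and $N = Q N(\Lambda) Q^T$. Consequently $G = M^{-1}N = Q\, M(\Lambda)^{-1} N(\Lambda)\, Q^T$ and $\mathcal{A} = M - N = Q\,(M(\Lambda) - N(\Lambda))\,Q^T$ are also functions of $A$, with \emph{diagonal} representatives $M(\Lambda)^{-1}N(\Lambda)$ and $\tilde{\Lambda} := M(\Lambda) - N(\Lambda)$. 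This already gives the first assertion of the lemma and identifies the diagonal $G$ and $\tilde{\Lambda}$ that appear in \eqref{MH1}.

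First I would transform the target. For any Gaussian, the pushforward of $\normal(\mu,C)$ under the linear bijection $x \mapsto Q^T x$ is $\normal(Q^T \mu, Q^T C Q)$. Taking $\mu = A^{-1}b$ and $C = A^{-1} = Q \Lambda^{-1} Q^T$ yields $\normal(\Lambda^{-1} Q^T b, \Lambda^{-1})$, which is exactly the target in \eqref{MH1}. Next I would substitute $x = Q x'$ and $y = Q y'$ into the proposal of \eqref{MH0} and left-multiply by $Q^T$. The deterministic part becomes $Q^T G Q\, x' + Q^T M^{-1}\beta = M(\Lambda)^{-1}N(\Lambda)\, x' + M(\Lambda)^{-1} Q^T \beta$, using $Q^T G Q = M(\Lambda)^{-1}N(\Lambda)$ and $Q^T M^{-1}\beta = Q^T M^{-1}Q\, Q^T\beta = M(\Lambda)^{-1} Q^T \beta$, which matches the drift in \eqref{MH1}.

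The step I expect to be the main obstacle is the noise term, specifically commuting $Q^T$ through the matrix square root. The key observation is that $\mathcal{A}^{-1} - G\mathcal{A}^{-1}G^T$ is itself a function of $A$ (it equals the spd covariance $\Sigma = M^{-1}(M^T + N)M^{-T}$ guaranteed by Theorem \ref{thm:2.1}), so by the spectral mapping theorem its principal square root is $Q\,(\tilde{\Lambda}^{-1} - G\tilde{\Lambda}^{-1}G^T)^{1/2}\,Q^T$, where here $G$ denotes the diagonal $M(\Lambda)^{-1}N(\Lambda)$. Hence $Q^T(\mathcal{A}^{-1} - G\mathcal{A}^{-1}G^T)^{1/2}\xi = (\tilde{\Lambda}^{-1} - G\tilde{\Lambda}^{-1}G^T)^{1/2}\,(Q^T\xi)$, and since $Q$ is orthogonal, $Q^T \xi \sim \normal(0,I)$ and may be relabelled $\xi$. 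Positive semidefiniteness of the argument (needed for the principal square root to exist) is exactly the spd-ness of $\Sigma$, so no separate argument is required.

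Assembling the transformed drift and noise reproduces the proposal in \eqref{MH1}. Finally I would record that the MH dynamics are carried along by the substitution: because $x \leftrightarrow Q^T x$ is a bijection with unit Jacobian, the acceptance ratio $\pi_d(y)q(y,x)/(\pi_d(x)q(x,y))$ is unchanged, so the transformed target and proposal define precisely the MH algorithm \eqref{MH1}. Everything apart from the square-root identity is routine orthogonal-invariance bookkeeping.
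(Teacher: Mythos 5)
Your proof is correct and is exactly the ``simple algebra'' the paper leaves unwritten: simultaneous diagonalisation of $M$, $N$, $G$, $\mathcal{A}$ via the functional calculus for $A$, the pushforward of the Gaussian target under $x \mapsto Q^T x$, commuting $Q^T$ through the principal square root of the (spd, hence well-defined) noise covariance, and invariance of the acceptance ratio under an orthogonal, unit-Jacobian change of variables. Nothing is missing.
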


Using Lemma \ref{lem1} we see that the acceptance probability of MH algorithms \eqref{MH0} and \eqref{MH1} are identical and hence it is sufficient to analyse the convergence properties of \eqref{MH1} to determine the convergence properties of \eqref{MH0}.  

We will use the following Lyapunov central limit theorem, see e.g. \cite[Thm. 27.3]{billingsley1995}.

\begin{theorem}
\label{thm clt}
For each $d \in \mathbb{N}$ let $X_{d,1},\dotsc,X_{d,d}$ be a sequence of independent random variables each with finite expected value $\mu_{d,i}$ and variance $\sigma_{d,i}^2$.  Define
$
	s_d^2 := \sum_{i=1}^d \sigma_{d,i}^2.
$
If there exists a $\delta > 0$ such that
$$
	\lim_{d \rightarrow \infty} \frac{1}{s_d^{2+\delta}} \sum_{i=1}^d \mathrm{E}[ |X_{d,i} - \mu_{d,i}|^{2+\delta} ] = 0,
$$
then 
$$
	\frac{1}{s_d} \sum_{i=1}^d (X_{d,i} - \mu_{d,i}) \xrightarrow{\mathcal{D}} \normal(0,1) \qquad \mbox{as $d \rightarrow \infty$}.
$$
\end{theorem}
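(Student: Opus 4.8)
The plan is to prove convergence in distribution via characteristic functions and L\'evy's continuity theorem. First I would pass to the centered, normalised summands $Y_{d,i} := (X_{d,i} - \mu_{d,i})/s_d$, which are independent with $\mathrm{E}[Y_{d,i}] = 0$ and $\sum_{i=1}^d \mathrm{E}[Y_{d,i}^2] = 1$, so that it suffices to show $\phi_d(t) := \prod_{i=1}^d \mathrm{E}[e^{\mathrm{i} t Y_{d,i}}] \to e^{-t^2/2}$ for each fixed $t \in \bbR$. I would also reduce to the case $0 < \delta \le 1$: by log-convexity of moments together with H\"older's inequality applied to the sum over $i$, the Lyapunov condition for a given $\delta$ forces the same condition for every smaller positive exponent, so no generality is lost.

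The two quantitative ingredients are a negligibility estimate and a remainder bound. For negligibility, Jensen's inequality gives $\sigma_{d,i}^{2+\delta} = (\mathrm{E}[(X_{d,i}-\mu_{d,i})^2])^{(2+\delta)/2} \le \mathrm{E}[|X_{d,i}-\mu_{d,i}|^{2+\delta}]$, so each $\sigma_{d,i}^2/s_d^2$ is dominated by the corresponding Lyapunov summand, whence $\max_{1\le i\le d}\sigma_{d,i}^2/s_d^2 \to 0$. For the remainder, the elementary inequality $|e^{\mathrm{i}x} - 1 - \mathrm{i}x + x^2/2| \le C|x|^{2+\delta}$ (valid for $0<\delta\le1$ by interpolating the two bounds in $\min(|x|^3/6,\,|x|^2)$), combined with $\mathrm{E}[Y_{d,i}]=0$, yields $\mathrm{E}[e^{\mathrm{i}tY_{d,i}}] = 1 - \tfrac{t^2}{2}\,\sigma_{d,i}^2/s_d^2 + R_{d,i}$ with $|R_{d,i}| \le C|t|^{2+\delta}\mathrm{E}[|Y_{d,i}|^{2+\delta}]$.

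Next I would compare the exact product with the Gaussian-type product $\prod_{i=1}^d (1 - \tfrac{t^2}{2}\sigma_{d,i}^2/s_d^2)$ using the telescoping estimate $|\prod a_i - \prod b_i| \le \sum_i |a_i - b_i|$ for complex numbers of modulus at most one; the accumulated error is $C|t|^{2+\delta} s_d^{-(2+\delta)}\sum_i \mathrm{E}[|X_{d,i}-\mu_{d,i}|^{2+\delta}]$, which vanishes precisely by the hypothesis. Taking logarithms and using $\log(1-u) = -u + \mathrm{O}(u^2)$ once $\max_i u_{d,i}\to0$ (supplied by the negligibility estimate), the comparison product converges to $e^{-t^2/2}$, and L\'evy's continuity theorem then gives the claim. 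The step I expect to need the most care is the passage from the comparison product to the exponential \emph{uniformly}---that is, verifying that the neglected quadratic terms do not accumulate---which is exactly where asymptotic negligibility is essential; the remainder estimate and the moment reduction are then routine. An alternative route is to show that the Lyapunov condition implies Lindeberg's condition and invoke the Lindeberg--Feller theorem, but that merely relocates the analytic work.
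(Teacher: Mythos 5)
Your proof is correct. The paper does not prove this statement at all---it is quoted verbatim as the Lyapunov central limit theorem with a citation to Billingsley (Thm.~27.3)---and your direct characteristic-function argument (reduction to $0<\delta\le 1$, Jensen for asymptotic negligibility, the third-order Taylor remainder, the telescoping product bound, and L\'evy continuity) is exactly the standard proof underlying that reference, differing only in that Billingsley routes through the Lindeberg condition, an alternative you yourself note.
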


An equivalent conclusion to this theorem is $\sum_{i=1}^d X_{d,i} \rightarrow \normal(\sum_{i=1}^d \mu_{d,i}, s_d^2)$ in distribution as $d \rightarrow \infty$.  Another useful fact is  
\begin{equation}
\label{eq uf1}
	X \sim \normal(\mu,\sigma^2) \qquad \Rightarrow \qquad \mathrm{E}[ 1 \wedge \mathrm{e}^X ] = \Phi(\tsfrac{\mu}{\sigma}) + \mathrm{e}^{\mu + \sigma^2/2} \Phi(-\sigma - \tsfrac{\mu}{\sigma})
\end{equation}
where $\Phi$ is the standard normal cumulative distribution function.  See e.g. \cite[Prop. 2.4]{RGG1997} or \cite[Lem. B.2]{BRS2009}.

\begin{theorem}
\label{thm accept}
Suppose that $M$ and $N$ in \eqref{MH0} are functions of $A$, and the Markov chain is in equilibrium, i.e. $x \sim \normal(A^{-1}b,A^{-1})$.  
If there exists a $\delta > 0$ such that
\begin{equation}
\label{eq Tcond}
	\lim_{d \rightarrow \infty} \frac{\sum_{i=1}^d |T_{ji}|^{2+\delta}}{ \left( \sum_{i=1}^d |T_{ji}|^2 \right)^{1+\delta/2}} = 0 \qquad \mbox{for $j=1,2,3,4,5$}
\end{equation}
($j=0$ is not required) and the limits $\mu = \lim_{d\rightarrow \infty} \sum_{i=1}^d \mu_{d,i}$ and $\sigma^2 = \lim_{d\rightarrow \infty} \sum_{i=1}^d \sigma_{d,i}^2$ exist where
$$
	\mu_{d,i} = T_{0i} + T_{3i} + T_{4i} \qquad \mbox{and} \qquad \sigma_{d,i}^2 = T_{1i}^2 + T_{2i}^2 + 2T_{3i}^2 + 2T_{4i}^2 + T_{5i}^2,
$$
then 
$$
	Z:= \log \left( \frac{\pi(y)q(y,x)}{\pi(x)q(x,y)} \right) \xrightarrow{\mathcal{D}} \normal(\mu,\sigma^2) \qquad \mbox{as $d \rightarrow \infty$}
$$
and the expected acceptance probability satisfies
$$	
	\mathrm{E}[\alpha(x,y)] = \mathrm{E}[1 \wedge \mathrm{e}^Z] \rightarrow \Phi(\tsfrac{\mu}{\sigma}) + \mathrm{e}^{\mu + \sigma^2/2} \Phi(-\sigma - \tsfrac{\mu}{\sigma})
	\qquad \mbox{as $d \rightarrow \infty$.}
$$
\end{theorem}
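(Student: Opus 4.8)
The plan is to reduce the log-ratio $Z$ to a sum of independent one-dimensional terms, to identify the mean and variance of each term with the quantities $T_{ji}$, and then to apply the Lyapunov central limit theorem (Theorem \ref{thm clt}) together with the identity \eqref{eq uf1}. Since $M$ and $N$ are functions of $A$, Lemma \ref{lem2} lets me pass to the eigencoordinates $Q^T x$, in which $A$, $\mathcal{A}$ and $G$ are simultaneously diagonal and the acceptance probability is unchanged. There the chain decouples across components: in equilibrium $x_i \sim \normal(m_i,\lambda_i^{-2})$, and $y_i = G_i x_i + \tilde{g}_i \tilde{m}_i + \tilde{\lambda}_i^{-1} g_i^{1/2}\,\xi_i$ with $\xi_i \sim \normal(0,1)$ independent of $x_i$. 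By Lemma \ref{lem1}, $Z = \sum_{i=1}^d Z_i$ is a sum of independent terms, where $Z_i = -\tfrac{1}{2}(\lambda_i^2 - \tilde{\lambda}_i^2)(y_i^2 - x_i^2) + (b_i - \beta_i)(y_i - x_i)$ depends only on $(x_i,\xi_i)$.

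Next I substitute $x_i = m_i + \lambda_i^{-1}\eta_i$ with $\eta_i \sim \normal(0,1)$, so that $Z_i$ becomes a quadratic polynomial in the independent standard normals $(\eta_i,\xi_i)$. Expanding and collecting, the coefficient of $\eta_i^2$ is $T_{3i}$, of $\xi_i^2$ is $T_{4i}$, and of $\eta_i\xi_i$ is $T_{5i}$, while the two linear coefficients are $T_{1i}$ and $T_{2i}$ and the constant term evaluates (after the $m_i\hat{r}_i$ cross-terms cancel) to $T_{0i}$. Taking expectations gives $\mu_{d,i} = T_{0i} + T_{3i} + T_{4i}$, and using $\mathrm{Var}(\eta_i)=\mathrm{Var}(\xi_i)=1$, $\mathrm{Var}(\eta_i^2)=\mathrm{Var}(\xi_i^2)=2$, $\mathrm{Var}(\eta_i\xi_i)=1$ together with the vanishing of all the relevant cross-covariances (odd Gaussian moments and independence of $\eta_i,\xi_i$) gives $\sigma_{d,i}^2 = T_{1i}^2 + T_{2i}^2 + 2T_{3i}^2 + 2T_{4i}^2 + T_{5i}^2$. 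This coefficient-matching is the most laborious step but is entirely routine Gaussian moment algebra.

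With $s_d^2 = \sum_{i=1}^d \sigma_{d,i}^2$, I then verify the Lyapunov condition of Theorem \ref{thm clt}. Because $Z_i - \mu_{d,i}$ is a fixed linear combination of $\eta_i$, $\xi_i$, $\eta_i^2-1$, $\xi_i^2-1$ and $\eta_i\xi_i$ with coefficients $T_{1i},\dots,T_{5i}$, the triangle inequality in $L^{2+\delta}$ and finiteness of all moments of $(\eta_i,\xi_i)$ give $\mathrm{E}[\,|Z_i - \mu_{d,i}|^{2+\delta}\,] \le C \sum_{j=1}^5 |T_{ji}|^{2+\delta}$ for a constant $C = C(\delta)$. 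Since $\sigma_{d,i}^2$ dominates each $T_{ji}^2$ for $j=1,\dots,5$, we have $s_d^{2+\delta} \ge c\,(\sum_{i} |T_{ji}|^2)^{1+\delta/2}$ for each such $j$, so that $s_d^{-(2+\delta)}\sum_{i} \mathrm{E}[\,|Z_i - \mu_{d,i}|^{2+\delta}\,]$ is bounded by a finite sum of the five ratios appearing in \eqref{eq Tcond}, each tending to $0$. As $\sum_i \mu_{d,i} \to \mu$ and $s_d^2 \to \sigma^2$ by hypothesis, Theorem \ref{thm clt} yields $Z \xrightarrow{\mathcal{D}} \normal(\mu,\sigma^2)$.

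Finally, since $z \mapsto 1 \wedge \mathrm{e}^z$ is bounded and continuous, convergence in distribution of $Z$ transfers to convergence of $\mathrm{E}[1 \wedge \mathrm{e}^Z]$, and \eqref{eq uf1} applied to the limit $\normal(\mu,\sigma^2)$ evaluates it as $\Phi(\tsfrac{\mu}{\sigma}) + \mathrm{e}^{\mu+\sigma^2/2}\Phi(-\sigma - \tsfrac{\mu}{\sigma})$, as claimed. The one genuine obstacle is the uniform $(2+\delta)$-moment control in the previous step; it is exactly here that restricting \eqref{eq Tcond} to the five indices $j = 1,\dots,5$ (and thereby excluding the mean-only term $T_{0i}$, which never enters $\sigma_{d,i}^2$) is what lets the per-$j$ Lyapunov ratios bound the aggregate ratio and makes the argument close.
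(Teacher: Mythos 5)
Your proof is correct and follows essentially the same route as the paper's: reduce to the simultaneously diagonalized chain via Lemma \ref{lem2}, expand each $Z_{d,i}$ as a quadratic in the two independent standard normals to read off the $T_{ji}$ coefficients, bound the Lyapunov ratio termwise by the five ratios in \eqref{eq Tcond} using $s_d^2 \geq \sum_i T_{ji}^2$, and conclude with Theorem \ref{thm clt} and \eqref{eq uf1}. Your closing observation about why $j=0$ can be excluded, and the explicit portmanteau step for $\mathrm{E}[1\wedge \mathrm{e}^Z]$, are both consistent with (and slightly more explicit than) the paper's argument.
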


In the above theorem, $M$ and $N$ may depend on $d$, so $T_{ji}$ may depend on $d$.

\begin{proof}
By Lemma \ref{lem2} it is sufficient to only consider \eqref{MH0} in the case where all matrices are diagonal matrices, e.g. $A = \operatorname{diag}(\lambda_1^2,\dotsc,\lambda_d^2)$, $\mathcal{A} = \operatorname{diag}(\tilde{\lambda}_1^2,\dotsc,\tilde{\lambda}_d^2)$, $M = \operatorname{diag}(M_1,\dotsc,M_d)$, $G = \operatorname{diag}(G_1,\dotsc,G_d)$, $m_i = \! \lambda_i^{-2} b_i$, and $\tilde{m}_i = \tilde{\lambda}_i^{-2} \beta_i$.  Then, in equilibrium we have 
$$
	x_i = m_i + \frac{1}{\lambda_i} \xi_i
$$
where $\xi_i \sim \normal(0,1)$ and using $\tilde{m}_i = G_i \tilde{m}_i + M_i^{-1}\beta_i$ we have
\begin{align*}
	y_i &= G_i x_i + \frac{\beta_i}{M_i} + \frac{(1-G_i^2)^{1/2}}{\tilde{\lambda}_i} \nu_i \\
	&= G_i \left( m_i + \frac{1}{\lambda_i} \xi_i \right) + (1-G_i) \tilde{m}_i  + \frac{g_i^{1/2}}{\tilde{\lambda}_i} \nu_i \\
	&= \tilde{m}_i + G_i \hat{r} + \frac{G_i}{\lambda_i} \xi_i + \frac{g_i^{1/2}}{\tilde{\lambda}_i} \nu_i 
\end{align*}
where $\nu_i \sim \normal(0,1)$.  From Lemma \ref{lem1} we also have $Z = \sum_{i=1}^d Z_{d,i}$ where
$$
	Z_{d,i} = -\tsfrac{1}{2} (\lambda_i^2 - \tilde{\lambda}_i^2)(y_i^2 - x_i^2) + (b_i-\beta_i)(y_i - x_i).
$$
Substituting $x_i$ and $y_i$ as above, using the identity $(b_i - \beta_i)\lambda_i^{-2} = \hat{r}_i + r_i \tilde{m}_i$, then after some algebra we eventually find
\begin{align*}
	Z_{d,i} &= T_{0i} + T_{1i}\xi_i + T_{2i}\nu_i + T_{3i} \xi_i^2 + T_{4i} \nu_i^2 + T_{5i} \xi_i \nu_i.
\end{align*}
Hence
$$
	\mu_{d,i}:= \mathrm{E}[Z_{d,i}] = T_{0i} + T_{3i} + T_{4i}
$$
and
\begin{align*}
	\sigma_{d,i}^2 &:= \mathrm{Var}[Z_{d,i}] = \mathrm{E}[Z_{d,i}^2] - \mathrm{E}[Z_{d,i}]^2 \\
	&= \left( T_{0i}^2 + T_{1i}^2 + T_{2i}^2 + 3 T_{3i}^2 + 3 T_{4i}^2 + T_{5i}^2 
	+ 2 T_{0i} T_{3i} + 2 T_{0i} T_{4i} + 2 T_{3i} T_{4i} \right) \\
	& \qquad - \left( T_{0i} + T_{3i} + T_{4i} \right)^2 \\
	&= T_{1i}^2 + T_{2i}^2 + 2 T_{3i}^2 + 2 T_{4i}^2 + T_{5i}^2	 
\end{align*}
and
\begin{align*}
	Z_{d,i} - \mu_{d,i} &= T_{1i} \xi_i + T_{2i} \nu_i + T_{3i} (\xi_i^2 - 1) + T_{4i} (\nu_i^2 - 1) + T_{5i} \xi_i \nu_i.
\end{align*}
Therefore, for any $d \in \mathbb{N}$ and $\delta > 0$ we can bound the Lyapunov condition in Theorem \ref{thm clt} as follows 
\begin{align*}
	\frac{1}{s_d^{2+\delta}} \sum_{i=1}^d \mathrm{E}[|Z_{d,i}-\mu_{d,i}|^{2+\delta}] 
	&\leq \frac{5^{2+\delta}}{s_d^{2+\delta}} \sum_{j=1}^5 C_j(\delta) \sum_{i=1}^d |T_{ji}|^{2+\delta} \\
	&\leq 5^{2+\delta} \sum_{j=1}^5 C_j(\delta) \frac{\sum_{i=1}^d |T_{ji}|^{2+\delta}}{\left( \sum_{i=1}^d T_{ji}^2 \right)^{1+\delta/2} }
\end{align*}
where $C_1(\delta) = C_2(\delta) = \mathrm{E}[|\xi|^{2+\delta}]$ , $C_3(\delta) = C_4(\delta) = \mathrm{E}[|\xi^2-1|^{2+\delta}]$
and $C_5(\delta) = \mathrm{E}[|\xi|^{2+\delta}]^2$, and $\xi \sim \normal(0,1)$.

Therefore, if \eqref{eq Tcond} holds then the result follows from Theorem \ref{thm clt} and \eqref{eq uf1}.
\end{proof}

\section{Expected squared jump size for a Gaussian target}
\label{sec jumpsize}

The efficiency of a MCMC method is usually given in terms of the integrated autocorrelation time which is equivalent to ``the number of dependent sample points from the Markov chain needed to give the variance reducing power of one independent point'', see e.g. \cite[\S 6.3]{N1993}.  Unfortunately, we are unable to directly estimate this quantity for our matrix splitting methods, and it depends on the statistic of concern.  As a proxy we instead consider the expected squared jump size of the Markov chain in a direction $q \in \bbR^d$,
$$
	\mathrm{E}[(q^T(x' - x))^2]
$$
where $x, x' \sim \normal(A^{-1}b,A^{-1})$ are successive elements of the Markov chain in equilibrium.  We will only consider the cases where $q$ is an eigenvector of the precision or covariance matrix.  It is related to the integrated autocorrelation time for the linear functional $q^T(\cdot)$ by
$$
	\mathrm{Corr}[q^T x,q^T x'] = 1 - \frac{\mathrm{E}[(q^T(x' - x))^2]}{2 \mathrm{Var}[q^T x]}
$$
so that large squared jump size implies small first-order autocorrelation, see e.g. \cite[\S 3]{RR1998} or \cite[\S 2.3]{BRS2009}.

This is similar to the approach used for analysing the efficiency of RWM, MALA and HMC, where the expected squared jump size of an arbitrary component of the Markov chain is considered, see e.g. \cite{BRS2009} and \cite{BPRSS2010}.  

We will need the following technical lemma whose proof is in the Appendix.

\begin{lemma}
\label{lem 5}
Suppose $\{ t_i \}_{i=1}^\infty \subset \bbR$ and $r > 0$.  Then, for any $k \in \mathbb{N}$,
\begin{equation}
\label{eq l1a}
	\lim_{d\rightarrow \infty} \frac{\sum_{i=1}^d |t_i|^r}{\left( \sum_{i=1}^d t_i^2 \right)^{r/2}} = 0 
	\quad \Rightarrow \quad
	\lim_{d\rightarrow \infty} \frac{\sum_{i=1, i \neq k}^d |t_i|^r}{\left( \sum_{i=1, i \neq k}^d t_i^2 \right)^{r/2}} = 0.
\end{equation}
\end{lemma}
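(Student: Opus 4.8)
The plan is to compare the two ratios directly, showing that deleting the single index $k$ affects neither the numerator nor the denominator enough to disturb the limit, once one knows that the denominator diverges. Write $S_d := \sum_{i=1}^d t_i^2$ for the base of the denominator and $P_d := \sum_{i=1}^d |t_i|^r$ for the numerator, so the hypothesis reads $P_d / S_d^{r/2} \to 0$. For $d \geq k$ the index-$k$-deleted versions are $\tilde{P}_d = P_d - |t_k|^r$ and $\tilde{S}_d = S_d - t_k^2$, and the goal is $\tilde{P}_d / \tilde{S}_d^{r/2} \to 0$.

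First I would dispose of the trivial case $t_k = 0$, for which $\tilde{P}_d = P_d$ and $\tilde{S}_d = S_d$, so the two ratios coincide and there is nothing to prove. I would also note that for the left-hand ratio even to be defined we need $S_d > 0$ for large $d$, so not all $t_i$ can vanish. Hence from now on assume $t_k \neq 0$.

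The key step, and what I expect to be the main obstacle, is to deduce from the hypothesis alone that $S_d \to \infty$. Since $S_d$ is nondecreasing it converges in $[0,\infty]$; suppose for contradiction that its limit $S_\infty$ is finite. It cannot be $0$ (that would force every $t_i = 0$, contradicting well-definedness), so $S_\infty \in (0,\infty)$ and $S_d^{r/2} \to S_\infty^{r/2} > 0$. Then $P_d / S_d^{r/2} \to 0$ forces $P_d \to 0$; but $P_d$ is nonnegative and nondecreasing with $P_d \geq |t_k|^r > 0$ for all $d \geq k$, a contradiction. Therefore $S_d \to \infty$.

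With $S_d \to \infty$ established, the conclusion is a one-line estimate: for all $d$ large enough that $S_d > t_k^2$, using $\tilde{P}_d \leq P_d$,
$$
	\frac{\tilde{P}_d}{\tilde{S}_d^{r/2}} \leq \frac{P_d}{(S_d - t_k^2)^{r/2}} = \frac{P_d}{S_d^{r/2}} \left( \frac{S_d}{S_d - t_k^2} \right)^{r/2},
$$
where the first factor tends to $0$ by hypothesis and the second tends to $1$ because $S_d \to \infty$ and $x \mapsto x^{r/2}$ is continuous at $1$. Hence the product tends to $0$, which is exactly \eqref{eq l1a}. Everything after the divergence of $S_d$ is routine, so the whole weight of the argument rests on that single incompatibility between a convergent sum of squares and the vanishing Lyapunov ratio.
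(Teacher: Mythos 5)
Your proof is correct. It follows the same basic strategy as the paper's: the numerator can only decrease when the index $k$ is deleted, so everything reduces to showing that the deleted denominator $\tilde{S}_d = S_d - t_k^2$ is comparable to $S_d$. The one place where you diverge is in how that comparability is obtained. You prove the stronger fact $S_d \to \infty$ by contradiction (using that $P_d \geq |t_k|^r > 0$ is bounded away from zero while $S_d^{r/2}$ would stay bounded), and then let the correction factor $\bigl(S_d/(S_d-t_k^2)\bigr)^{r/2}$ tend to $1$. The paper instead applies the hypothesis once with the specific tolerance $\epsilon = 2^{-r/2}$ to get $|t_k|^r \leq \sum_{i=1}^d |t_i|^r < 2^{-r/2}\bigl(\sum_{i=1}^d t_i^2\bigr)^{r/2}$, hence $t_k^2 < \tsfrac{1}{2}S_d$ for all large $d$, which bounds the deleted ratio by the explicit constant $2^{r/2}$ times the full ratio without ever needing divergence of $S_d$ or a case split on whether $t_k = 0$. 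Both arguments rest on the same inequality $|t_k|^r \leq P_d$; the paper's packaging is slightly more economical, while yours makes the asymptotic mechanism (the correction factor tending to $1$) more transparent.
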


The following theorem is a generalization of \cite[Prop. 3.8]{BPRSS2010}.

\begin{theorem}
\label{thm jumpsize}
Suppose that $M$ and $N$ in \eqref{MH0} are functions of $A$, and the Markov chain is in equilibrium, i.e. $x \sim \normal(A^{-1}b,A^{-1})$.
With $\mu_{d,i}$ and $\sigma_{d,i}^2$ defined as in Theorem \ref{thm accept}, let $q_i$ be a normalized eigenvector of $A$ corresponding to $\lambda_i^2$.  If there exists a $\delta > 0$ such that \eqref{eq Tcond} is satisfied, and $\mu^- := \lim_{d \rightarrow \infty} \sum_{j=1,j\neq i}^d \mu_{d,j}$ and $(\sigma^-)^2 := \lim_{d \rightarrow \infty} \sum_{j=1,j\neq i}^d \sigma_{d,j}^2$ exist, then 
\begin{equation}
\label{eq jump1}
	\mathrm{E}[(q_i^T (x'-x))^2]  = U_1 U_2 + E_3 + o(U_1) 
\end{equation}
as $d \rightarrow \infty$ where
\begin{align*}
	U_1 &= \tilde{g}_i^2 \hat{r}_i^2 + \frac{\tilde{g}_i^2}{\lambda_i^2} + \frac{g_i}{\tilde{\lambda}_i^2}, \\
	U_2 &= \mathrm{E}[1 \wedge \mathrm{e}^{X}] = \Phi(\tsfrac{\mu^-}{\sigma^-}) + \mathrm{e}^{\mu^- + (\sigma^-)^2/2} \Phi(-\sigma^- - \tsfrac{\mu^-}{\sigma^-}), \;\; X \sim N(\mu^-,(\sigma^-)^2),\\
	|E_3| &\leq U_3 = ( \sigma_{d,i}^2 + \mu_{d,i}^2 )^{1/2}  
	\times \left( \tilde{g}_i^4 \hat{r}_i^4 + \frac{3}{\lambda_i^4} (\tilde{g}_i^2 + \tilde{r}_i g_i)^2 + \frac{6}{\lambda_i^2} \hat{r}_i^2 \tilde{g}_i^2 ( \tilde{g}_i^2 + \tilde{r}_i g_i) \right)^{1/2}.
\end{align*}
\end{theorem}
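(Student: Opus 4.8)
The plan is to work throughout in the diagonalising coordinates supplied by Lemma \ref{lem2}, so that all matrices are diagonal and $q_i^T x = x_i$. In these coordinates a single MH step sends $x$ to $x' = y$ with probability $\alpha(x,y) = 1 \wedge \mathrm{e}^Z$ (Lemma \ref{lem1}) and to $x' = x$ otherwise, so conditioning on the proposal and averaging over the accept/reject Bernoulli gives
$$
	\mathrm{E}[(q_i^T(x'-x))^2] = \mathrm{E}[(y_i-x_i)^2 (1 \wedge \mathrm{e}^Z)].
$$
Using the equilibrium expressions $x_i = m_i + \lambda_i^{-1}\xi_i$ and $y_i = \tilde{m}_i + G_i \hat{r}_i + \lambda_i^{-1} G_i \xi_i + \tilde{\lambda}_i^{-1} g_i^{1/2}\nu_i$ from the proof of Theorem \ref{thm accept}, together with $\hat{r}_i = m_i - \tilde{m}_i$, I would first simplify
$$
	y_i - x_i = -\tilde{g}_i \hat{r}_i - \frac{\tilde{g}_i}{\lambda_i}\xi_i + \frac{g_i^{1/2}}{\tilde{\lambda}_i}\nu_i,
$$
whence, since $\xi_i,\nu_i$ are independent standard normals, $\mathrm{E}[(y_i-x_i)^2] = \tilde{g}_i^2\hat{r}_i^2 + \tilde{g}_i^2/\lambda_i^2 + g_i/\tilde{\lambda}_i^2 = U_1$ exactly.

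The key structural observation is that $W := (y_i-x_i)^2$ depends only on the coordinate-$i$ noise $(\xi_i,\nu_i)$, whereas $Z^- := \sum_{j\neq i} Z_{d,j}$ depends only on the other coordinates; hence $W$ and $Z^-$ are independent. Writing $Z = Z_{d,i} + Z^-$ and splitting,
$$
	\mathrm{E}[W(1 \wedge \mathrm{e}^Z)] = \mathrm{E}[W]\,\mathrm{E}[1 \wedge \mathrm{e}^{Z^-}] + \mathrm{E}[W(1\wedge \mathrm{e}^Z - 1 \wedge \mathrm{e}^{Z^-})],
$$
the first term factorises by independence into $U_1 \, \mathrm{E}[1\wedge \mathrm{e}^{Z^-}]$. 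To evaluate the latter I would apply the Lyapunov CLT (Theorem \ref{thm clt}) to the family $\{Z_{d,j}\}_{j\neq i}$: Lemma \ref{lem 5} (with $k=i$, $r = 2+\delta$) guarantees that \eqref{eq Tcond} continues to hold after deleting the $i$-th summand, so the Lyapunov condition survives, and together with the assumed convergence of $\sum_{j\neq i}\mu_{d,j}\to\mu^-$ and $\sum_{j\neq i}\sigma_{d,j}^2\to(\sigma^-)^2$ this yields $Z^- \xrightarrow{\mathcal{D}} \normal(\mu^-,(\sigma^-)^2)$. Since $x \mapsto 1 \wedge \mathrm{e}^x$ is bounded and continuous, convergence in distribution gives $\mathrm{E}[1\wedge \mathrm{e}^{Z^-}] \to U_2$ via \eqref{eq uf1}. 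Consequently $U_1\,\mathrm{E}[1\wedge \mathrm{e}^{Z^-}] = U_1 U_2 + U_1(\mathrm{E}[1\wedge \mathrm{e}^{Z^-}] - U_2) = U_1 U_2 + o(U_1)$, which supplies the main term and the $o(U_1)$ remainder in \eqref{eq jump1}.

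It then remains to identify the second term with $E_3$ and bound it by $U_3$. Here I would use that $h(x) = 1\wedge \mathrm{e}^x$ is $1$-Lipschitz (its derivative is $\mathrm{e}^x$ for $x<0$ and $0$ for $x>0$), so $|1\wedge \mathrm{e}^Z - 1\wedge \mathrm{e}^{Z^-}| \leq |Z - Z^-| = |Z_{d,i}|$, and then Cauchy--Schwarz gives
$$
	|E_3| \leq \mathrm{E}[W\,|Z_{d,i}|] \leq \mathrm{E}[W^2]^{1/2}\,\mathrm{E}[Z_{d,i}^2]^{1/2}.
$$
The second factor equals $(\sigma_{d,i}^2 + \mu_{d,i}^2)^{1/2}$. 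For the first, since $y_i - x_i \sim \normal(-\tilde{g}_i\hat{r}_i,\, v)$ with $v = \tilde{g}_i^2/\lambda_i^2 + g_i/\tilde{\lambda}_i^2 = \lambda_i^{-2}(\tilde{g}_i^2 + \tilde{r}_i g_i)$ (using $\tilde{r}_i = \lambda_i^2/\tilde{\lambda}_i^2$), the Gaussian fourth-moment identity $\mathrm{E}[(a+\eta)^4] = a^4 + 6a^2 v + 3v^2$ for $\eta\sim\normal(0,v)$ with $a^2 = \tilde{g}_i^2\hat{r}_i^2$ reproduces exactly the expression under the square root in $U_3$, so $|E_3| \leq U_3$ and \eqref{eq jump1} follows.

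I expect the only genuinely delicate step to be the CLT passage for $Z^-$: one must confirm that deleting the single index $i$ neither destroys the Lyapunov condition (handled by Lemma \ref{lem 5}) nor perturbs the limiting mean and variance (guaranteed by the assumed existence of $\mu^-$ and $(\sigma^-)^2$), and then exploit boundedness of $1\wedge \mathrm{e}^{(\cdot)}$ to pass from convergence in distribution to convergence of the expectation. The remaining work — simplifying $y_i-x_i$ and matching $\mathrm{E}[W^2]$ to the bracketed term in $U_3$ via the fourth-moment identity — is routine algebra.
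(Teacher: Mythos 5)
Your proposal is correct and follows essentially the same route as the paper: reduce to the diagonal case via Lemma \ref{lem2}, split off $\alpha^-(x,y)=1\wedge\mathrm{e}^{Z^-}$, factorise the main term by independence of $(y_i-x_i)^2$ and $Z^-$, invoke Theorem \ref{thm accept} together with Lemma \ref{lem 5} for $\mathrm{E}[\alpha^-]\to U_2$, and bound the remainder by Lipschitz continuity of $z\mapsto 1\wedge\mathrm{e}^z$ plus Cauchy--Schwarz and the Gaussian fourth-moment identity. The only (immaterial) difference is that you apply the pointwise Lipschitz bound before Cauchy--Schwarz rather than after, which yields the identical bound $(\sigma_{d,i}^2+\mu_{d,i}^2)^{1/2}\,\mathrm{E}[(y_i-x_i)^4]^{1/2}$.
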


\begin{proof}
Under the coordinate transformation $x \leftrightarrow Q^T x$, \eqref{MH0} becomes \eqref{MH1} and $\mathrm{E}[(q_i^T(x'-x))^2]$ becomes
$\mathrm{E}[(x_i' - x_i)^2]$.  Therefore it is sufficient to only consider the squared jump size of an arbitrary coordinate of the Markov chain for the case when all matrices are diagonal matrices.  As in the proof of Theorem \ref{thm accept}, let $A = \operatorname{diag}(\lambda_1^2,\dotsc,\lambda_d^2)$, $\mathcal{A} = \operatorname{diag}(\tilde{\lambda}_1^2,\dotsc,\tilde{\lambda}_d^2)$, $M = \operatorname{diag}(M_1,\dotsc,M_d)$, $G = \operatorname{diag}(G_1,\dotsc,G_d)$, $m_i = \lambda_i^{-2} b_i$, and $\tilde{m}_i = \tilde{\lambda}_i^{-2} \beta_i$.  Since the chain is in equilibrium we have $x_i = m_i + \lambda_i^{-1} \xi_i$ for $\xi_i \sim \normal(0,1)$ and $y_i = \tilde{m}_i + G_i \hat{r} + G_i \lambda_i^{-1} \xi_i + g_i^{1/2} \tilde{\lambda}_i^{-1} \nu_i$ where $\nu_i \sim \normal(0,1)$.  
%Now let $I^d$ be the indicator function such that
%$$	
%	I^d := \begin{cases}
%		1 & \mbox{if $u < \alpha(x,y)$} \\
%		0 & \mbox{otherwise}
%	\end{cases}
%$$
%where $u \sim U[0,1]$.  Then $x' = I^d y + (1-I^d) x$ and $(x'-x)^2 = I^d(y-x)^2$.  For given coordinate $i$, define another indicator function $I^{d-}$ such that
%$$
%	I^{d-} := \begin{cases}
%		1 & \mbox{if $u < \alpha^-(x,y)$} \\
%		0 & \mbox{otherwise}
%	\end{cases}
%$$
%where $\alpha^-(x,y) := 1 \wedge \exp( \sum_{j=1,j\neq i}^d Z_{d,j} )$ and $Z_{d,j}$ is defined as in the proof of Theorem \ref{thm accept}.  
Define $\alpha^-(x,y) := 1 \wedge \exp( \sum_{j=1,j\neq i}^d Z_{d,j} )$ where $Z_{d,j}$ is defined as in the proof of Theorem \ref{thm accept}.  

The proof strategy is now to approximate $\mathrm{E}[(x_i' - x_i)^2] = \mathrm{E}[(y_i - x_i)^2\alpha(x,y)]$ with $\mathrm{E}[(y_i-x_i)^2 \alpha^-(x,y)]$;
$$
	\mathrm{E}[(x_i'-x_i)^2] = \mathrm{E}[(y_i-x_i)^2 \alpha^-(x,y)] + \mathrm{E}[(\alpha(x,y)-\alpha^-(x,y))(y_i-x_i)^2].
$$
By independence, 
\begin{align*}
	\mathrm{E}[(y_i-x_i)^2 \alpha^-(x,y)] 
	&= \mathrm{E}[(y_i-x_i)^2] \mathrm{E}[\alpha^-(x,y)] \\
	&= \mathrm{E}\left[ \left(-\tilde{g}_i \hat{r}_i - \frac{\tilde{g}_i}{\lambda_i}\xi_i + \frac{g^{1/2}}{\tilde{\lambda}_i} \nu_i \right)^2 \right] \mathrm{E}[\alpha^-(x,y)]\\
	&= U_1 \mathrm{E}[\alpha^-(x,y)].
\end{align*}
Also, by Theorem \ref{thm accept} (using Lemma \ref{lem 5} to ensure the appropriate condition for Theorem \ref{thm accept} is met) we obtain $\mathrm{E}[\alpha^-(x,y)] \rightarrow U_2$ as $d \rightarrow \infty$, so $\mathrm{E}[(y_i-x_i)^2 \alpha(x,y)] = U_1 U_2 + \mathrm{o}(U_1)$ as $d\rightarrow \infty$.

The error is bounded using the Cauchy-Schwarz inequality;
$$
	|\mathrm{E}[ (\alpha(x,y) - \alpha^-(x,y))(y_i - x_i)^2]|
	\leq \mathrm{E}[(\alpha(x,y)-\alpha^-(x,y))^2]^{1/2} \mathrm{E}[(y_i-x_i)^4]^{1/2}.
$$
Since $1\wedge \mathrm{e}^X$ is Lipschitz with constant $1$, and using results from the proof of Theorem \ref{thm accept}, we obtain
$$
	\mathrm{E}[(\alpha(x,y)-\alpha^-(x,y))^2]^{1/2} 
	\leq \mathrm{E}[Z_{d,i}^2]^{1/2} 
	= ( \sigma_{d,i}^2 + \mu_{d,i}^2 )^{1/2},
$$
and some algebra yields
\begin{align*}
	\mathrm{E}[(y_i-x_i)^4]^{1/2}
	&= \mathrm{E}\left[ \left( -\tilde{g}_i \hat{r}_i - \frac{\tilde{g}_i}{\lambda_i}\xi_i + \frac{g^{1/2}}{\tilde{\lambda}_i} \nu_i \right)^4 \right]^{1/2} \\
	&= \left( \tilde{g}_i^4 \hat{r}_i^4 + \frac{3}{\lambda_i^4} (\tilde{g}_i^2 + \tilde{r}_i g_i)^2 + \frac{6}{\lambda_i^2} \hat{r}_i^2 \tilde{g}_i^2 ( \tilde{g}_i^2 + \tilde{r}_i g_i) \right)^{1/2}.
\end{align*}
\end{proof}

The terms in the theorem above are quite lengthy, but in many situations they simplify.  For example, we may have the situation where $\mu^- \rightarrow \mu$ and $\sigma^- \rightarrow \sigma$ as $d \rightarrow \infty$, so $U_2$ becomes the expected acceptance rate for the algorithm.  Also, it may be possible to derive a bound such as
$$
	|U_3| \leq C (r_i + \hat{r}_i)
$$	
so that $U_3$ is small if both the relative error of the $i^{\mathrm{th}}$ eigenvalue and error of the means are small.

%%%%%%%%%%%%%%%%%%%%%%%%%%%%%%%%%%%%%%%%%%%%%%%%%%%%%%%%%%%%%%%%%%%%
%%%%%%%%%%%%%%%%%%%%%%%%%%%%%%%%%%%%%%%%%%%%%%%%%%%%%%%%%%%%%%%%%%%%
%%%%%%%%%%%%%%%%%%%%%%%%%%%%%%%%%%%%%%%%%%%%%%%%%%%%%%%%%%%%%%%%%%%%
\chapter{Non-Gaussian targets}
\label{sec nongaussian}

Our results can be extended to non-Gaussian target distributions in 
some cases.  We follow the methodology in \cite{BRS2009}.  Suppose that target $\pi_d$ is a change of measure from a reference Gaussian, so that $\pi_d$ is defined by \eqref{eq target0} and \eqref{eq ref}. 

With an AR(1) proposal associated with a matrix splitting $\mathcal{A} = M-N$, we consider an MH algorithm defined by
\begin{equation}
\label{MH0ng}
\begin{split}
	\mbox{Target:} & \qquad \pi_d, \\
	\mbox{Proposal:} & \qquad y = G x + g + \nu, \qquad \mbox{where $\nu \sim \normal(0,\Sigma)$}, 
\end{split}
\end{equation}
where $G = M^{-1}N$, $g = M^{-1} \beta$ and $\Sigma = M^{-1}(M^T+N)M^{-T} = \mathcal{A}^{-1} - G \mathcal{A}^{-1} G^T$.  The acceptance probability of this MH algorithm satisfies
$$
	\alpha(x,y) = 1 \wedge \exp\left( \phi_d(x)-\phi_d(y) + Z \right)
$$
where $Z = \log(\tsfrac{\tilde{\pi}_d(y) q(y,x)}{\tilde{\pi}_d(x) q(x,y)} )$.  Define $\tilde{\alpha}(x,y) = 1 \wedge \exp(Z)$ to be the acceptance probability for MH algorithm \eqref{MH0}.

In the theory below, $\mathrm{E}_{\pi_d}[\alpha(x,y)]$ is the expectation of $\alpha(x,y)$ over $x \sim \pi_d$ and $y$ from \eqref{MH0ng}.

\section{Expected acceptance rate for a non-Gaussian target}
\label{sec:acc rate ng}

The following theorem applies to inverse problems with a Gaussian prior and bounded likelihood.

\begin{theorem}
Suppose there exists a constant $M>0$ such that for sufficiently large $d$
$$
	|\phi_d(x)| \leq M \qquad \mbox{for all $x \in \bbR^d$}.
$$
Then MH algorithm \eqref{MH0ng} in equilibrium satisfies
\begin{align}
	\mathrm{E}_{\pi_d}[\alpha(x,y)] &\leq C \mathrm{E}_{\tilde{\pi}_d}[\tilde{\alpha}(x,y)] \label{ng1}\\
	\mathrm{E}_{\pi_d}[\alpha(x,y)] &> 0 \qquad \mbox{if $\mathrm{E}_{\tilde{\pi}_d}[|Z|] < \infty$}, \label{ng2}
\end{align}
where $\mathrm{E}_{\tilde{\pi}_d}[\tilde{\alpha}(x,y)]$ is the expected acceptance rate of \eqref{MH0} in equilibrium.
\end{theorem}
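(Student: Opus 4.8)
The plan is to exploit the fact that the hypothesis $|\phi_d|\le M$ forces the Radon--Nikodym derivative $\tsfrac{\dd\pi_d}{\dd\tilde{\pi}_d}=\exp(-\phi_d)$ to be bounded above and below, while \emph{simultaneously} controlling the extra factor $\exp(\phi_d(x)-\phi_d(y))$ that is the only thing distinguishing the true acceptance probability $\alpha$ from the Gaussian-target acceptance probability $\tilde{\alpha}$. Everything then reduces to a pointwise comparison of $\alpha$ and $\tilde{\alpha}$, followed by changing the law of the $x$-marginal from $\pi_d$ to $\tilde{\pi}_d$. Note that the proposal kernel $q(x,\cdot)$ is identical in both expectations, so only the density of $x$ and the integrand change; I will read $\mathrm{E}_{\pi_d}[\cdot]$ as $\int\!\!\int\, \cdot\; q(x,y)\,\pi_d(x)\,\dd x\,\dd y$ and similarly for $\mathrm{E}_{\tilde{\pi}_d}$.

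First I would record the two consequences of the hypothesis. From $e^{-M}\le\exp(-\phi_d(x))\le e^{M}$ the densities satisfy $e^{-M}\tilde{\pi}_d(x)\le\pi_d(x)\le e^{M}\tilde{\pi}_d(x)$. From $|\phi_d(x)-\phi_d(y)|\le 2M$, writing $\alpha(x,y)=1\wedge\big(e^{\phi_d(x)-\phi_d(y)}e^{Z}\big)$ and using the elementary identity $\min(c,cu)=c\,(1\wedge u)$ for $u\ge 0$ together with the monotonicity of $u\mapsto 1\wedge u$, I obtain the pointwise sandwich
$$ e^{-2M}\,\tilde{\alpha}(x,y)\;\le\;\alpha(x,y)\;\le\;e^{2M}\,\tilde{\alpha}(x,y). $$
The upper inequality uses $1\wedge(cu)\le c\,(1\wedge u)$ with $c=e^{2M}\ge 1$ (since then $\min(1,cu)\le\min(c,cu)$), and the lower uses $1\wedge(cu)\ge c\,(1\wedge u)$ with $c=e^{-2M}\le 1$ (since then $\min(1,cu)\ge\min(c,cu)$).

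For \eqref{ng1} I combine the upper pointwise bound with the density bound: integrating $\alpha\le e^{2M}\tilde{\alpha}$ against $q(x,y)\pi_d(x)$ and then replacing $\pi_d(x)\le e^{M}\tilde{\pi}_d(x)$ gives $\mathrm{E}_{\pi_d}[\alpha]\le e^{3M}\mathrm{E}_{\tilde{\pi}_d}[\tilde{\alpha}]$, so the claim holds with $C=e^{3M}$. For \eqref{ng2} the same manipulation in the reverse direction yields $\mathrm{E}_{\pi_d}[\alpha]\ge e^{-3M}\mathrm{E}_{\tilde{\pi}_d}[\tilde{\alpha}]$, and it remains to show the right-hand side is strictly positive. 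This is where the integrability assumption enters: writing $\tilde{\alpha}=1\wedge e^{Z}=e^{\min(Z,0)}=e^{-(Z)^-}$ with $(Z)^-=\max(-Z,0)$, and applying Jensen's inequality to the convex map $t\mapsto e^{-t}$, I get
$$ \mathrm{E}_{\tilde{\pi}_d}[\tilde{\alpha}]=\mathrm{E}_{\tilde{\pi}_d}\big[e^{-(Z)^-}\big]\;\ge\;e^{-\mathrm{E}_{\tilde{\pi}_d}[(Z)^-]}\;\ge\;e^{-\mathrm{E}_{\tilde{\pi}_d}[|Z|]}\;>\;0, $$
the final strict inequality holding precisely because $\mathrm{E}_{\tilde{\pi}_d}[|Z|]<\infty$. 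Chaining the two bounds gives $\mathrm{E}_{\pi_d}[\alpha]\ge e^{-3M}e^{-\mathrm{E}_{\tilde{\pi}_d}[|Z|]}>0$.

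I expect the only real subtlety, rather than a genuine obstacle, to be the bookkeeping of which measure each expectation is taken against, and articulating why finiteness of $\mathrm{E}_{\tilde{\pi}_d}[|Z|]$ is the right hypothesis: it is exactly what converts the everywhere-positivity of $\tilde{\alpha}$ into a quantitative, Jensen-based strictly positive lower bound on its mean, which then transfers to $\mathrm{E}_{\pi_d}[\alpha]$ through the sandwich.
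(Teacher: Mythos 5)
Your proof is correct. For \eqref{ng1} you follow essentially the same path as the paper: the pointwise bound $1\wedge(cu)\le c(1\wedge u)$ for $c=e^{2M}\ge 1$ combined with $\pi_d(x)\le e^{M}\tilde{\pi}_d(x)$ gives $C=e^{3M}$, exactly as in the paper's proof (which borrows the reasoning of \cite[Thm. 2]{BRS2009}). For \eqref{ng2} you take a genuinely different and arguably cleaner route. The paper applies the elementary inequality $\mathrm{E}[1\wedge e^{X}]\ge e^{-\gamma}(1-\gamma^{-1}\mathrm{E}[|X|])$ from \cite[Lem. B1]{BRS2009} directly to the full exponent $X=\phi_d(x)-\phi_d(y)+Z$ under $\pi_d$, choosing $\gamma=2C_0$ with $C_0=\mathrm{E}_{\pi_d}[|X|]$, which yields the lower bound $\tsfrac{1}{2}e^{-2C_0}$. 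You instead first establish the two-sided pointwise sandwich $e^{-2M}\tilde{\alpha}\le\alpha\le e^{2M}\tilde{\alpha}$, transfer the $x$-marginal from $\pi_d$ to $\tilde{\pi}_d$ at the cost of another factor $e^{\pm M}$, and then bound $\mathrm{E}_{\tilde{\pi}_d}[\tilde{\alpha}]$ via the exact identity $1\wedge e^{Z}=e^{-Z^{-}}$ and Jensen's inequality, getting $\mathrm{E}_{\pi_d}[\alpha]\ge e^{-3M}e^{-\mathrm{E}_{\tilde{\pi}_d}[|Z|]}$. Your bound is stated directly in terms of $\mathrm{E}_{\tilde{\pi}_d}[|Z|]$, which is the quantity appearing in the hypothesis, so it avoids the intermediate step of controlling $\mathrm{E}_{\pi_d}[|\phi_d(x)-\phi_d(y)+Z|]$, and Jensen gives a sharper constant than the $e^{-\gamma}(1-\gamma^{-1}\mathrm{E}[|X|])$ lemma. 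The paper's version, on the other hand, is the one that generalizes when only $\mathrm{E}[|X|]$ (rather than an exact exponential representation of the acceptance probability) is available. Both arguments deliver the quantitative, $d$-uniform positivity that is the real content of \eqref{ng2}, since the pointwise statement $\mathrm{E}_{\pi_d}[\alpha]>0$ for fixed $d$ is trivial.
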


\begin{proof}  We follow the same reasoning as in the proof of \cite[Thm. 2]{BRS2009}.
Note that $1 \wedge \exp( \phi_d(x)-\phi_d(y) + Z) \leq \exp(2M) (1 \wedge \exp(Z))$, and $\pi_d(x) \leq \exp(M) \tilde{\pi}_d(x)$.  Hence, we obtain \eqref{ng1} with $C = \exp(3M)$.

To prove \eqref{ng2} first note for a random variable $X$ and any $\gamma > 0$ we have $\mathrm{E}[1 \wedge \exp(X)] \geq \exp(-\gamma)(1-\gamma^{-1} \mathrm{E}[|X|])$, see \cite[Lem. B1]{BRS2009}.  Also note that $C_0 := \mathrm{E}_{\pi_d}[|\phi_d(x)-\phi_d(y)+Z|] \leq C + C \mathrm{E}_{\tilde{\pi}_d}[|Z|] < \infty$.  Hence, we obtain \eqref{ng2} by taking $\gamma = 2 C_0$.
\end{proof}

Thus, in a certain weak sense, the acceptance rate of \eqref{MH0ng} with non-Gaussian target mimics the acceptance rate of \eqref{MH0} with a Gaussian target; if the acceptance rate of \eqref{MH0} is small, then so is the acceptance rate of \eqref{MH0ng}; and if the expected value of $|Z|$ is finite (which loosely corresponds to when the acceptance rate of \eqref{MH0} is positive) then the acceptance rate of \eqref{MH0ng} in equilibrium is positive.  

A similar result is given in \cite[Thm. 2]{BRS2009} for RWM and SLA.

Our next theorem more precisely describes the acceptance rate for a non-Gaussian target, but first, some definitions and a lemma.

We associate a norm with the spd precision matrix $A$ of our reference Gaussian measure $\tilde{\pi}$.  For any $s \in \bbR$ define a norm $| \cdot |_s$ on $\bbR^d$ by
$$
	| x |_s = | A^s x |
$$
for all $x \in \bbR^d$.  If $\lambda_1^2$ is the smallest eigenvalue of $A$, then
\begin{equation}
\label{eq:normbound}
	|x|_s \leq \lambda_1^{2(s-r)} |x|_r \qquad \mbox{for all $s<r$}.
\end{equation}

\begin{assumption}
\label{assumption1}
Suppose there exist constants $m,s,s',s'' \in \bbR$, $C,p>0$, and a locally bounded function $\delta: \bbR^+ \times \bbR^+ \mapsto \bbR^+$ such that for all sufficiently large $d$ 
\begin{align*}
	\phi_d(x) &\geq m, \\
	|\phi_d(x) - \phi_d(y)| &\leq \delta( |x-A^{-1}b|_s,|y-A^{-1}b|_s) |x-y|_{s'}, \\
	| \phi_d(x) | &\leq C (1 + |x-A^{-1}b|_{s''}^p ) 	
\end{align*}
for all $x,y \in \bbR^d$.
\end{assumption}

\begin{assumption}
\label{assumption2}
Suppose that $r$ is such that 
$$
	\lim_{d \rightarrow \infty} \sum_{i=1}^d \lambda_i^{4r-1} < \infty.
$$
\end{assumption}

The following lemma is similar to part of the proof of \cite[Thm. 3]{BRS2009}.

\begin{lemma}
\label{lem:phi}
Suppose $\phi_d$ satisfies Assumption \ref{assumption1} and $r = \max \{ s,s',s'' \}$ satisfies Assumption \ref{assumption2}.  Also suppose that there exists $t_{d,i}$ and a $t > 0$ such that MH algorithm \eqref{MH0ng} satisfies
\begin{align*}
	&\mbox{$G$ and $\Sigma$ are functions of $A$}, \\
	&\tilde{g}_i^2\hat{r}_i^2 \lambda_i, \; \tilde{g}_i^2 \lambda_i^{-1}, \; g_i \tilde{\lambda}_i^{-1} \mbox{ are $\mathcal{O}(t_{d,i}) = \mathcal{O}(d^{-t})$ (uniformly in $i$) as $d \rightarrow \infty$, and } \\
	&\tilde{r}_i \mbox{ is bounded uniformly in $d$ and $i$}.
\end{align*}
Then for any $q \in \mathbb{N}$ there exists a constant $C > 0$ (that may depend on $q$) such that
\begin{align*}
	&\mathrm{E}_{\tilde{\pi}_d}[|x-A^{-1}b|_r^{2q}] < C \mbox{ for all $d$} \\
	&\mathrm{E}_{\tilde{\pi}_d}[|y-x|_r^{2q}] = \mathcal{O}(t_{d,i}^{q})= \mathcal{O}(d^{-qt}) \mbox{ as $d \rightarrow \infty$},
\end{align*}
and for proposal $y$ from $x$,
$$
	\phi_d(x) - \phi_d(y) \rightarrow 0 \qquad \mbox{in $L^q(\pi_d)$ as $d \rightarrow \infty$}.
$$
\end{lemma}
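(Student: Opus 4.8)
The plan is to reduce all three assertions to moment computations for weighted sums of Gaussian and $\chi^2_1$ random variables, exploiting the coordinate transformation of Lemma~\ref{lem2}. Since $M$ and $N$ are functions of $A$, the orthogonal change of variables $x \leftrightarrow Q^T x$ diagonalises $A$, $\mathcal{A}$, $G$ and $\Sigma$ simultaneously while leaving every norm $|\cdot|_r = |A^r\,\cdot\,|$ invariant. I would therefore work throughout in the diagonal coordinates of the proof of Theorem~\ref{thm accept}, where $x - A^{-1}b$ has independent components $\lambda_i^{-1}\xi_i$ with $\xi_i \sim \normal(0,1)$, and the increment is the independent-coordinate expression $y_i - x_i = -\tilde{g}_i \hat{r}_i - \tilde{g}_i \lambda_i^{-1}\xi_i + g_i^{1/2}\tilde{\lambda}_i^{-1}\nu_i$ derived there. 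In these coordinates $|x - A^{-1}b|_r^2 = \sum_{i=1}^d \lambda_i^{4r-2}\xi_i^2$ and $|y-x|_r^2 = \sum_{i=1}^d \lambda_i^{4r}(y_i-x_i)^2$, so each claim becomes a statement about a sum of independent nonnegative terms.

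For the first bound, $|x-A^{-1}b|_r^2$ is a sum of independent $\chi^2_1$ variables with weights $w_i := \lambda_i^{4r-2}$. Since $\lambda_i^2 \ge \lambda_1^2$ and the spectrum tends to infinity, we may assume $\lambda_i \ge 1$, so $\sum_i w_i \le \sum_i \lambda_i^{4r-1}$, which is finite and uniformly bounded in $d$ by Assumption~\ref{assumption2}. A standard moment bound for weighted sums of $\chi^2_1$ variables then controls $\mathrm{E}_{\tilde{\pi}_d}[|x-A^{-1}b|_r^{2q}] = \mathrm{E}[(\sum_i w_i \xi_i^2)^q]$ by a finite polynomial in the power sums $\sum_i w_i^k$, $1 \le k \le q$, each of which is at most $(\sum_i w_i)^q$ for nonnegative $w_i$; this yields the required uniform constant $C$.

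For the jump-size bound I would first compute, from the explicit increment, $\mathrm{E}_{\tilde{\pi}_d}[(y_i-x_i)^2] = \tilde{g}_i^2\hat{r}_i^2 + \tilde{g}_i^2\lambda_i^{-2} + g_i\tilde{\lambda}_i^{-2}$. Writing $\tilde{\lambda}_i^{-1} = \tilde{r}_i^{1/2}\lambda_i^{-1}$ and using the hypotheses that $\tilde{g}_i^2\hat{r}_i^2\lambda_i$, $\tilde{g}_i^2\lambda_i^{-1}$, $g_i\tilde{\lambda}_i^{-1}$ are all $\mathcal{O}(d^{-t})$ uniformly in $i$ while $\tilde{r}_i$ is bounded, each of the three terms is $\mathcal{O}(d^{-t})\lambda_i^{-1}$; hence $(y_i-x_i)$ is Gaussian with mean and variance both of size $\mathcal{O}(d^{-t})\lambda_i^{-1}$. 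For the nonnegative $V_i := \lambda_i^{4r}(y_i-x_i)^2$, the Gaussian moment estimate $\mathrm{E}[(y_i-x_i)^{2q}] \le C_q(|\mathrm{E}(y_i-x_i)|^{2q} + \mathrm{Var}(y_i-x_i)^q)$ gives $\mathrm{E}[V_i^q] \le C_q\,\mathcal{O}(d^{-qt})\,\lambda_i^{(4r-1)q}$, so $\|V_i\|_{L^q} \le C\,\mathcal{O}(d^{-t})\,\lambda_i^{4r-1}$. Minkowski's inequality then yields $\big\||y-x|_r^2\big\|_{L^q} \le \sum_i \|V_i\|_{L^q} \le C\,\mathcal{O}(d^{-t})\sum_i \lambda_i^{4r-1} = \mathcal{O}(d^{-t})$ by Assumption~\ref{assumption2}, and raising to the $q$-th power gives $\mathrm{E}_{\tilde{\pi}_d}[|y-x|_r^{2q}] = \mathcal{O}(d^{-qt})$.

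Finally, for the $L^q(\pi_d)$ convergence I would transfer the expectation to the reference measure: since $\phi_d \ge m$, $\tfrac{\dd \pi_d}{\dd\tilde{\pi}_d}(x) = \mathrm{e}^{-\phi_d(x)}/Z_d \le \mathrm{e}^{-m}/Z_d$, and $Z_d = \mathrm{E}_{\tilde{\pi}_d}[\mathrm{e}^{-\phi_d}] \ge \mathrm{e}^{-\mathrm{E}_{\tilde{\pi}_d}[\phi_d]}$ is bounded below uniformly, because Jensen's inequality with the polynomial growth bound of Assumption~\ref{assumption1} and the first part bound $\mathrm{E}_{\tilde{\pi}_d}[\phi_d]$ uniformly in $d$. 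Thus $\mathrm{E}_{\pi_d}[|\phi_d(x)-\phi_d(y)|^q] \le (\mathrm{e}^{-m}/Z_d)\,\mathrm{E}_{\tilde{\pi}_d}[|\phi_d(x)-\phi_d(y)|^q]$, and I would bound the right-hand side via the Lipschitz-type estimate of Assumption~\ref{assumption1} and Cauchy--Schwarz as $\mathrm{E}_{\tilde{\pi}_d}[\delta(|x-A^{-1}b|_s,|y-A^{-1}b|_s)^{2q}]^{1/2}\,\mathrm{E}_{\tilde{\pi}_d}[|x-y|_{s'}^{2q}]^{1/2}$. The second factor is $\mathcal{O}(d^{-qt/2})\to 0$ by the jump-size bound, after using $|x-y|_{s'} \le \lambda_1^{2(s'-r)}|x-y|_r$ from \eqref{eq:normbound}. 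The main obstacle, and the step I would treat most carefully, is controlling the first factor uniformly in $d$: this is where I lean on $\delta$ being of at most polynomial growth (the natural reading of local boundedness here), so that $\delta(\cdot,\cdot)^{2q}$ is dominated by a polynomial in $|x-A^{-1}b|_s$ and $|y-A^{-1}b|_s \le |x-A^{-1}b|_s + |y-x|_s$, whose moments are uniformly bounded by the first part (for $x$) and by the first and second parts together (for $y-x$). Combining these, the product tends to zero, giving convergence in $L^q(\pi_d)$.
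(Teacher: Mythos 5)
Your first two moment bounds are essentially the paper's own computation: after the orthogonal change of variables you reduce $|x-A^{-1}b|_r^2$ and $|y-x|_r^2$ to weighted sums of independent squared Gaussians, pull one factor of $\lambda_i$ out of each summand so that the hypotheses $\tilde{g}_i^2\hat{r}_i^2\lambda_i,\ \tilde{g}_i^2\lambda_i^{-1},\ g_i\tilde{\lambda}_i^{-1}=\mathcal{O}(t_{d,i})$ apply, and close with Assumption \ref{assumption2}; whether one then bounds the $q$-th moment of the sum directly (as the paper does) or via Minkowski in $L^q$ (as you do) is immaterial.

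The third part, however, has a genuine gap exactly at the step you flag. Assumption \ref{assumption1} only requires $\delta$ to be \emph{locally bounded}, i.e.\ bounded on bounded sets; it need not have polynomial (or any controlled) growth, so $\mathrm{E}_{\tilde{\pi}_d}\bigl[\delta(|x-A^{-1}b|_s,|y-A^{-1}b|_s)^{2q}\bigr]$ may be infinite and your Cauchy--Schwarz factorisation cannot be closed under the stated hypotheses. The paper avoids this by truncation rather than by strengthening the assumption: it fixes $R>0$, splits $\mathrm{E}_{\tilde{\pi}_d}[|\Delta_d|^q]$ over the event $S_1\cap S_2=\{|x-A^{-1}b|_s\leq R\}\cap\{|y-A^{-1}b|_s\leq R\}$ and its complement. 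On $S_1\cap S_2$ local boundedness gives $\delta^q\leq\gamma(R):=\sup\{\delta(a,b)^q: a,b\leq R\}<\infty$, so that term is $\gamma(R)\,\mathrm{E}[|x-y|_{s'}^q]\rightarrow 0$ by your second moment bound. On the complement the Lipschitz-type bound is not used at all; instead the polynomial growth bound $|\phi_d(x)|\leq C(1+|x-A^{-1}b|_{s''}^p)$ from Assumption \ref{assumption1}, Cauchy--Schwarz, Markov's inequality and the uniform moment bounds give a contribution $\mathcal{O}(R^{-1/2})$ uniformly in $d$. Choosing $R$ large and then $d$ large yields convergence in $L^q(\tilde{\pi}_d)$, and the transfer to $L^q(\pi_d)$ via $\phi_d\geq m$ is as you describe. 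Replacing your single Cauchy--Schwarz step with this truncation argument repairs the proof without adding hypotheses.
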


\begin{proof}
For $x \sim \normal(A^{-1}b,A^{-1})$, $\xi \sim \normal(0,I)$, $\nu \sim \normal(0,I)$ and $\Lambda = \operatorname{diag}(\lambda_1^2,\dotsc,\lambda_d^2)$, 
\begin{align*}
	\mathrm{E}_{\tilde{\pi}_d}[ |x - A^{-1}b |_r^{2q} ] 
	&= \mathrm{E}_{\tilde{\pi}_d}[ | A^{r-1/2} \xi |^{2q} ] 
	= \mathrm{E}_{\tilde{\pi}_d}[ | \Lambda^{r-1/2} \nu |^{2q} ] \\
	&= \mathrm{E}_{\tilde{\pi}_d}\left[ \left( \sum_{i=1}^d \lambda_i^{4r-1} \nu_i^2 \right)^{q} \right] 
	\leq C \left( \sum_{i=1}^d \lambda_i^{4r-1} \right)^q,
\end{align*}
which is bounded uniformly in $d$ by Assumption \ref{assumption2}.

As above, and using the transformation $x \leftrightarrow Q^T x$, and $(y_i - x_i) = -\tilde{g}_i \hat{r}_i - \tilde{g}_i \lambda_i^{-1} \xi_i + g_i^{1/2} \tilde{\lambda}_i^{-1} \nu_i$ from the proof of Theorem \ref{thm jumpsize} where $\xi_i$ and $\nu_i \stackrel{\text{iid}}{\sim} \normal(0,1)$, we have
\begin{align*}
	\mathrm{E}_{\tilde{\pi}_d}[ |y - x|_r^{2q} ]
	&= \mathrm{E}_{\tilde{\pi}_d}\left[ \left( \sum_{i=1}^d \lambda_i^{4r} \left( -\tilde{g}_i \hat{r}_i - \frac{\tilde{g}_i}{\lambda_i} \xi_i + \frac{g_i^{1/2}}{\tilde{\lambda}_i} \nu_i \right)^2 \right)^{q} \right]  \\
	&= \mathrm{E}_{\tilde{\pi}_d}\left[ \left( \sum_{i=1}^d \lambda_i^{4r-1} \left( -\tilde{g}_i \hat{r}_i \lambda_i^{1/2} - \frac{\tilde{g}_i}{\lambda_i^{1/2}} \xi_i + \frac{\tilde{r}_i^{1/4} g_i^{1/2}}{\tilde{\lambda}_i^{1/2}}  \nu_i \right)^2 \right)^{q} \right]  .
\end{align*}
Since $\tilde{g}_i^2 \hat{r}_i^2 \lambda_i$, $\tilde{g}_i^2 \lambda^{-1}$, $g_i \tilde{\lambda}_i^{-1} = \mathcal{O}(t_{d,i})$ uniformly in $i$, and $\tilde{r}_i$ is bounded uniformly in $d$ and $i$, it follows that %there exists a function $l(d) \rightarrow 0$ as $d \rightarrow \infty$ (that depends on $q$) such that 
%$$
%	\mathrm{E}_{\tilde{\pi}_d}\left[ \left( -\tilde{g}_i \hat{r}_i \lambda_i^{1/2} - \frac{\tilde{g}_i}{\lambda_i^{1/2}} \xi_i + \frac{\tilde{r}_i^{1/4} g_i^{1/2}}{\tilde{\lambda}_i^{1/2}}  \nu_i \right)^{2k} \right] = \mathcal{O}(d^{-kt}) \qquad \mbox{for $1 \leq k \leq q$}.
%$$
for all sufficiently large $d$, 
$$
	\mathrm{E}_{\tilde{\pi}_d}[ |y - x|_r^{2q} ] \leq C t_{d,i}^{q} \left( \sum_{i=1}^d \lambda^{2r-1} \right)^q,
$$
so by Assumption \ref{assumption2}, $\mathrm{E}_{\tilde{\pi}_d}[ |y - x|_r^{2q} ] = \mathcal{O}(t_{d,i}^{q})$ as $d \rightarrow \infty$.

From $\mathrm{E}_{\tilde{\pi}_d}[|x-A^{-1}b|_r^{2q}] < C$ and $\mathrm{E}_{\tilde{\pi}_d}[|y-x|_r^{2q}] \rightarrow 0$, it follows from the triangle inequality that there is a (new) constant $C>0$ such that
\begin{equation}
\label{eq:ybd}
	\mathrm{E}_{\tilde{\pi}_d}[|y-A^{-1}b|_r^{2q}] < C \qquad \mbox{for all $d$}.
\end{equation}

Let $\Delta_d = \phi_d(x) - \phi_d(y)$.  For any $R>0$ define
\begin{align*}
	\gamma(R) &= \sup \{  \delta(a,b)^q : a\leq R,b\leq R \} \\
	S_1 &= \{ x \in \bbR^d : |x-A^{-1}b|_s \leq R \}, \\
	S_2 &= \{ x \in \bbR^d: |y-A^{-1}b|_s \leq R \},
\end{align*}
and let $\mathbb{I}_{S}$ be the indicator function for set $S$.  Using Assumption \ref{assumption1}, a generic constant $C$ that may vary between lines, the Cauchy-Schwarz inequality, and then Markov's inequality, we have for each $q \in \mathbb{N}$
\begin{align*}
	\mathrm{E}_{\tilde{\pi}_d}[|\Delta_d|^q] 
	 &= \mathrm{E}_{\tilde{\pi}_d}[|\Delta_d|^q \mathbb{I}_{S_1 \cap S_2} ] +  \mathrm{E}_{\tilde{\pi}_d}[|\Delta_d|^q \mathbb{I}_{\bbR^d \backslash (S_1 \cap S_2)}] \\
	&\leq \gamma(R) \mathrm{E}[ |x-y|_{s'}^q ] + C \mathrm{E}[(1+|x-A^{-1}b|_{s''}^{pq} + |y-A^{-1}b|_{s''}^{pq}) \mathbb{I}_{\bbR^d \backslash (S_1 \cap S_2)}] \\
	&\leq \gamma(R) \mathrm{E}[ |x-y|_{s'}^q ] \\
	&\qquad + C \mathrm{E}[1+|x-A^{-1}b|_{s''}^{2pq} + |y-A^{-1}b|_{s''}^{2pq}]^{1/2} (\mathrm{P}(\bbR^d\backslash S_1) + \mathrm{P}(\bbR^d \backslash S_2))^{1/2} \\
	&\leq \gamma(R) \mathrm{E}[ |x-y|_{s'}^q ] + C (\mathrm{P}(\bbR^d\backslash S_1) + \mathrm{P}(\bbR^d \backslash S_2))^{1/2} \\
	&\leq \gamma(R) \mathrm{E}[ |x-y|_{s'}^q ] + \frac{C}{R^{1/2}}  \left( \mathrm{E}[|x-A^{-1}b|_s] + \mathrm{E}[|y-A^{-1}b|_s] \right)^{1/2}\\
	&\leq \gamma(R) \mathrm{E}[ |x-y|_{s'}^q ] + \frac{C}{R^{1/2}}.
\end{align*}
Note that we used Jensen's inquality (which implies $\nrm{f}_{L^p(\tilde{\pi}_d)} \leq \nrm{f}_{L^q(\tilde{\pi}_d)}$ for $1\leq p \leq q \leq \infty$) and \eqref{eq:normbound} to obtain bounds on $\mathrm{E}[|x-A^{-1}b|_{s''}^{2pq}]$, $\mathrm{E}[|y-A^{-1}b|_{s''}^{2pq}]$, $\mathrm{E}[|x-A^{-1}b|_s]$ and $\mathrm{E}[|y-A^{-1}b|_s]$.

Hence, for any $\epsilon > 0$ we can choose $R = R(\epsilon)$ such that $C/R^{1/2} < \epsilon/2$ and since $\mathrm{E}_{\tilde{\pi}_d}[|y-x|_r^{q}] \rightarrow 0$ as $d \rightarrow \infty$ (by Jensen's inequality), for all sufficiently large $d$ we have
$$ 
\mathrm{E}_{\tilde{\pi}_d}[|\phi_d(x) - \phi_d(y)|] = \mathrm{E}_{\tilde{\pi}_d}[|\Delta_d|^q] < \epsilon.
$$
Thus, %for any $q \in \mathbb{N}$, 
$$
	\phi_d(x)-\phi_d(y) \rightarrow 0 \qquad \mbox{ in $L^q(\tilde{\pi}_d)$ as $d \rightarrow \infty$}.
$$
The result then follows from $\phi_d(x) \geq m$.
\end{proof}

\begin{theorem}
\label{thm:ng1}
Suppose that $\phi_d$ satisfies Assumption \ref{assumption1} and $r = \max \{ s,s',s'' \}$ satisfies Assumption \ref{assumption2}.  Also suppose there exists $t_{d,i}$ and $t>0$ such that MH algorithm \eqref{MH0ng} satisfies
\begin{align*}
	&\mbox{$G$ and $\Sigma$ are functions of $A$}, \\
	&\tilde{g}_i^2\hat{r}_i^2 \lambda_i, \; \tilde{g}_i^2 \lambda_i^{-1}, \; g_i \tilde{\lambda}_i^{-1} \mbox{ are $\mathcal{O}(t_{d,i}) = \mathcal{O}(d^{-t})$ (uniformly in $i$) as $d \rightarrow \infty$,} \\
	&\tilde{r}_i \mbox{ is bounded uniformly in $d$ and $i$}, \\
	&T_{1i}, T_{2i}, T_{3i}, T_{4i}, T_{5i} \mbox{ are } \mathcal{O}(d^{-1/2}) \mbox{ as $d \rightarrow \infty$ (uniformly in $i$), and} \\
	&\lim_{d\rightarrow\infty} \sum_{i=1}^d T_{1i}^2 + T_{2i}^2 + 2 T_{3i}^2 + 2 T_{4i}^2 + T_{5i}^2 < \infty.
\end{align*}
If 
\begin{align*}
	\mu_{ng} &= \mu + \lim_{d \rightarrow \infty} \sum_{i=1}^d \kappa_{d,i} T_{1i} + T_{3i}(\gamma_i-1),  \\
	\sigma_{ng}^2 &= \sigma^2 + \lim_{d\rightarrow \infty} \sum_{i=1}^d  \left(\kappa_{d,i} T_{1i} + T_{3i} (\gamma_{d,i} - 1) \right)^2,
\end{align*}
exist, where $\kappa_{d,i} = \mathrm{E}_{\pi_d}[q_i^T A^{1/2}(x - A^{-1}b)]$, $\gamma_{d,i} = \mathrm{E}_{\pi_d}[(q_i^T A^{1/2}(x - A^{-1}b))^2]$, $q_i$ is a normalised eigenvector of $A$ corresponding to the eigenvalue $\lambda_i^2$, and $\mu$ and $\sigma^2$ are the same as in Theorem \ref{thm accept}, then
$$
	\mathrm{E}_{\pi_d}[\alpha(x,y)] \rightarrow \mathrm{E}[1 \wedge \mathrm{e}^{Z_{ng}}] = \Phi(\tsfrac{\mu_{ng}}{\sigma_{ng}}) + \mathrm{e}^{\mu_{ng} + \sigma_{ng}^2/2} \Phi(-\sigma_{ng} - \tsfrac{\mu_{ng}}{\sigma_{ng}})
$$
as $d \rightarrow \infty$, where $Z_{ng} \sim \normal(\mu_{ng},\sigma_{ng}^2)$.
\end{theorem}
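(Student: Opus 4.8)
The plan is to reduce the non-Gaussian acceptance ratio to the Gaussian log-ratio $Z = \sum_{i=1}^d Z_{d,i}$ analysed in Theorem \ref{thm accept}, and then to track how drawing $x$ from the target $\pi_d$ (rather than from the reference $\tilde{\pi}_d$) perturbs the limiting law of $Z$. Writing the log acceptance ratio as $\phi_d(x) - \phi_d(y) + Z$, the first observation is that Lemma \ref{lem:phi}, whose hypotheses are exactly the $\mathcal{O}(t_{d,i})$ and boundedness conditions assumed here, gives $\phi_d(x) - \phi_d(y) \to 0$ in $L^1(\pi_d)$. Since $t \mapsto 1 \wedge \mathrm{e}^t$ is Lipschitz with constant $1$, this yields
$$
	\left| \mathrm{E}_{\pi_d}[\alpha(x,y)] - \mathrm{E}_{\pi_d}[1 \wedge \mathrm{e}^{Z}] \right| \leq \mathrm{E}_{\pi_d}[|\phi_d(x) - \phi_d(y)|] \to 0,
$$
so it suffices to identify the limiting distribution of $Z$ under $\pi_d$ and to pass it through the bounded continuous map $z \mapsto 1 \wedge \mathrm{e}^z$; the closed form then follows from \eqref{eq uf1} applied to $Z_{ng} \sim \normal(\mu_{ng},\sigma_{ng}^2)$.

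Second, I would work in the diagonalising coordinates of Lemma \ref{lem2}, where $Z_{d,i} = T_{0i} + T_{1i}\xi_i + T_{2i}\nu_i + T_{3i}\xi_i^2 + T_{4i}\nu_i^2 + T_{5i}\xi_i\nu_i$ with $\xi_i = q_i^T A^{1/2}(x - A^{-1}b)$ and $\nu_i \sim \normal(0,1)$ the proposal noise, independent of $x$. The crucial difference from Theorem \ref{thm accept} is that, under $\pi_d$, the $\xi_i$ are no longer standard normal, nor independent: they have first two moments $\mathrm{E}_{\pi_d}[\xi_i] = \kappa_{d,i}$ and $\mathrm{E}_{\pi_d}[\xi_i^2] = \gamma_{d,i}$. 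Taking expectations coordinatewise gives $\mathrm{E}_{\pi_d}[Z_{d,i}] = T_{0i} + T_{1i}\kappa_{d,i} + T_{3i}\gamma_{d,i} + T_{4i} = \mu_{d,i} + (\kappa_{d,i}T_{1i} + T_{3i}(\gamma_{d,i}-1))$, so that $\mathrm{E}_{\pi_d}[Z] \to \mu_{ng}$ by the assumed convergence of the series; this is where the mean correction enters.

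Third, to obtain asymptotic normality I would condition on $x$. Given $x$, the proposal noises $\nu_i$ are independent, so the noise part $\sum_i(T_{2i}\nu_i + T_{4i}(\nu_i^2-1) + T_{5i}\xi_i\nu_i)$ is a sum of independent terms to which the Lyapunov criterion of Theorem \ref{thm clt} applies, the condition being supplied by $T_{ji} = \mathcal{O}(d^{-1/2})$ together with the summability of $\sum_i T_{ji}^2$; this produces a conditional Gaussian limit whose conditional variance, averaged over $x \sim \pi_d$, is expressible through $\kappa_{d,i}$ and $\gamma_{d,i}$. The remaining $x$-measurable part $\sum_i(T_{1i}\xi_i + T_{3i}\xi_i^2)$ must then be shown to be asymptotically Gaussian under $\pi_d$, and combined with the noise limit through the law of total variance to produce $\normal(\mu_{ng},\sigma_{ng}^2)$, the additive term $\sum_i(\kappa_{d,i}T_{1i} + T_{3i}(\gamma_{d,i}-1))^2$ in $\sigma_{ng}^2$ reflecting the extra fluctuation of this remainder about its mean.

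The hard part will be the third step: controlling the dependence among the coordinates $\xi_i$ under the non-product measure $\pi_d$. Unlike in Theorem \ref{thm accept}, the summands $Z_{d,i}$ are not independent, so the CLT for $Z$ must either exploit weak dependence, showing that the inter-coordinate covariances $\mathrm{Cov}_{\pi_d}[\xi_i,\xi_j]$ contribute negligibly via the smallness $T_{ji} = \mathcal{O}(d^{-1/2})$ and the $L^q$ control from Lemma \ref{lem:phi}, or be reorganised as the conditional CLT above. Showing that the limiting variance collapses to the closed form $\sigma_{ng}^2$, expressed using only the first two marginal moments $\kappa_{d,i},\gamma_{d,i}$, requires verifying that the higher-moment and cross-coordinate contributions of the $\xi_i$ under $\pi_d$ vanish in the limit; this is the most delicate bookkeeping in the argument, and is where the smallness and summability assumptions on the $T_{ji}$ do their essential work.
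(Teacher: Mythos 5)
Your first two steps coincide with the paper's: the reduction $\mathrm{E}_{\pi_d}[\alpha(x,y)]-\mathrm{E}_{\pi_d}[\tilde{\alpha}(x,y)]\rightarrow 0$ via Lemma \ref{lem:phi} and Lipschitz continuity of $z\mapsto 1\wedge \mathrm{e}^z$, and the identification of the mean shift $\sum_i \kappa_{d,i}T_{1i}+T_{3i}(\gamma_{d,i}-1)$ from $\mathrm{E}_{\pi_d}[\xi_i]=\kappa_{d,i}$, $\mathrm{E}_{\pi_d}[\xi_i^2]=\gamma_{d,i}$, are exactly how the proof begins. The gap is in your third step, which you yourself flag as unresolved: you never supply a mechanism that actually proves a CLT for the $x$-measurable part $\sum_i (T_{1i}\xi_i+T_{3i}\xi_i^2)$ under the non-product measure $\pi_d$. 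The paper's mechanism is a martingale central limit theorem (Hall--Heyde, Thm.~3.2) applied to the centred partial sums $S_{d,j}=\sum_{i\leq j}T_{1i}(\xi_i-\kappa_{d,i})+T_{2i}\nu_i+T_{3i}(\xi_i^2-\gamma_{d,i})+T_{4i}(\nu_i^2-1)+T_{5i}\xi_i\nu_i$, and the crucial trick is that the sum-of-squared-increments condition $\sum_i X_{d,i}^2\xrightarrow{p}\eta^2$ is verified under the \emph{reference} measure $\tilde{\pi}_d$ — where the $\xi_i$ are i.i.d.\ standard normal, so Markov's inequality and uniform integrability apply coordinatewise — and then transferred to $\pi_d$ using $\phi_d\geq m$, i.e.\ $\pi_d\leq \mathrm{e}^{-m}\tilde{\pi}_d$ up to normalisation, which converts $L^1(\tilde{\pi}_d)$ convergence into $L^1(\pi_d)$ convergence and hence convergence in probability under $\pi_d$. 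Without some such transfer device, the dependence of the $\xi_i$ under $\pi_d$ is not controlled by any hypothesis of the theorem.

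A second, concrete reason your route as stated would not land on the claimed $\sigma_{ng}^2$: the law of total variance gives $\mathrm{Var}_{\pi_d}(Z)=\mathrm{E}[\mathrm{Var}(Z\mid x)]+\mathrm{Var}_{\pi_d}(\mathrm{E}[Z\mid x])$, and the second term involves the cross-covariances $\mathrm{Cov}_{\pi_d}(\xi_i,\xi_j)$ and $\mathrm{Cov}_{\pi_d}(\xi_i^2,\xi_j^2)$, which are \emph{not} determined by the marginal moments $\kappa_{d,i},\gamma_{d,i}$ appearing in the statement and need not vanish. The additive per-coordinate form of $\sigma_{ng}^2$, including the term $\sum_i(\kappa_{d,i}T_{1i}+T_{3i}(\gamma_{d,i}-1))^2$, arises in the paper not as a variance of $Z$ under $\pi_d$ but as the limit in probability of the quadratic variation $\sum_i X_{d,i}^2$, whose expectation under $\tilde{\pi}_d$ is $\sum_i\bigl(T_{1i}^2+T_{2i}^2+2T_{3i}^2+2T_{4i}^2+T_{5i}^2+(\kappa_{d,i}T_{1i}+T_{3i}(\gamma_{d,i}-1))^2\bigr)$; it is the martingale CLT, not a total-variance decomposition, that makes this quantity the asymptotic variance. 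So the proposal is structurally on track for the reduction and the mean, but the central limit step — the heart of the theorem — is missing and the suggested replacement would require controlling quantities the hypotheses do not control.
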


\begin{proof}
As in the proofs of Theorems \ref{thm accept} and \ref{thm jumpsize} it is sufficient to prove the result in the case when all matrices are diagonal.  This follows from the coordinate transformation $z = Q^T x$ where $A = Q \Lambda Q^T$, since
$$
	|x|_s = |\Lambda^s z| \qquad \mbox{and} \qquad
	|x - A^{-1}b|_s = |\Lambda^s( z - \Lambda^{-1}Q^Tb)|
$$
for all $x \in \bbR^d$ and $s \in \bbR$, and since $\psi_d(z) := \phi_d(Q z)$ satisfies Assumption \ref{assumption1}.  Henceforth and without loss of generality, let us assume that $A$, $G$ and $\Sigma$ are diagonal matrices.

%From $\mathrm{E}_{\tilde{\pi}_d}[|x-A^{-1}b|_r^2] < C'$ it follows that for any $q \in \mathbb{N}$,  $\mathrm{E}_{\tilde{\pi}_d}[|x-A^{-1}b|_r^{2q}] < C'$ and from $\mathrm{E}_{\tilde{\pi}_d}[|y-x|_r^2] \rightarrow 0$, it follows that $\mathrm{E}_{\tilde{\pi}_d}[|y-x|_r^{2q}] \rightarrow 0$.  Hence, by the triangle inequality, 
%$$
%	\mathrm{E}_{\tilde{\pi}_d}[|y-A^{-1}b|_r^{2q}] < C''
%$$
%for some constant $C''>C'>0$.

Using Lemma \ref{lem:phi} with $q = 1$, and the fact that $z \mapsto 1 \wedge \exp(z)$ is globally Lipschitz continuous, it follows that
\begin{equation}
\label{eq:acceptlim}
	\mathrm{E}_{\pi_d}[\alpha(x,y)] - \mathrm{E}_{\pi_d}[\tilde{\alpha}(x,y)] \rightarrow 0 \qquad \mbox{as $d \rightarrow \infty$}.
\end{equation}
To complete the proof we must find the limit of $\mathrm{E}_{\pi_d}[\tilde{\alpha}(x,y)]$ as $d \rightarrow \infty$.

As in the proof of Theorem \ref{thm accept} we have $Z = \sum_{i=1}^d Z_{d,i}$ where
$$
	Z_{d,i} = T_{0i} + T_{1i} \xi_i + T_{2i} \nu_i + T_{3i} \xi_i^2 + T_{4i} \nu_i^2 + T_{5i} \xi_i \nu_i,
$$
noting that $\xi_i = \lambda_i (x_i - m_i)$ with $x \sim \pi_d$, $m_i = (A^{-1}b)_i$, and $\nu_i \sim \normal(0,1)$.  

Note that $\kappa_{d,i} = \mathrm{E}_{\pi_d}[ \lambda_i (x_i - m_i)] = \mathrm{E}_{\pi_d}[\xi_i]$ and $\gamma_{d,i} = \mathrm{E}_{\pi_d}[ \lambda_i^2 (x_i - m_i)^2] = \mathrm{E}_{\pi_d}[\xi_i^2]$ are both uniformly bounded in $d$ and $i$ since $\phi_d(x) \geq m$ and $|\kappa_{d,i}|\leq \mathrm{e}^{-m} \mathrm{E}_{\tilde{\pi}_d}[|\xi_i|] = \mathrm{e}^{-m} \mathrm{E}[|u|]$ and $0 \leq \gamma_{d,i} \leq \mathrm{e}^{-m} \mathrm{E}_{\tilde{\pi}_d}[\xi_i^2] = \mathrm{e}^{-m} \mathrm{E}[u^2]$ where $u \sim \normal(0,1)$.  If we define 
$$
	S_{d,j} := \sum_{i=1}^j T_{1i} (\xi_i - \kappa_{d,i}) + T_{2i} \nu_i + T_{3i} (\xi_i^2 - \gamma_{d,i}) + T_{4i} (\nu_i^2 - 1) + T_{5i} \xi_i \nu_i,
$$
then
$$
	Z = \sum_{i=1}^d (T_{0i} + T_{1i} \kappa_{d,i} + T_{3i} \gamma_{d,i} + T_{4i}) + S_{d,d}.
$$

We will now show that $S_{d,d}$ converges in distribution towards a normal distribution as $d \rightarrow \infty$, using a Martingale central limit theorem, see \cite[Thm. 3.2, p. 58]{HallHeyde}.  

The set $\{ S_{d,j} : 1 \leq j \leq d, d \in \mathbb{N} \}$ is a zero mean, square-integrable Martingale array, i.e. for each $d \in \mathbb{N}$ and $1 \leq j \leq d$, $S_{d,j}$ is measurable, 
$$
	\mathrm{E}_{\pi_d}[S_{d,j}] = 0, \qquad
	\mathrm{E}_{\pi_d}[|S_{d,j}|]<\infty, \qquad \mbox{and} \qquad
	\mathrm{E}_{\pi_d}[(S_{j,d})^2] < \infty.
$$
For definitions, see \cite[p. 1 and 53]{HallHeyde}. 
%{\color{red} Issue: does $<\infty$ only have to hold in sense, for each $d \in \mathbb{N}$ it is finite... otherwise need some conditions on $\sum_{i=1}^\infty |T_{3i}| < \infty$ and $\sum_{i=1}^\infty T_{3i}^2 < \infty$.}
Define $X_{d,j} := S_{d,j} - S_{d,j-1}$.  To ensure we satisfy the conditions for \cite[Thm. 3.2]{HallHeyde} we must show that there exists an a.s. finite random variable $\eta^2$ such that
\begin{align}
	&\max_{1 \leq i \leq d} | X_{d,i} | \xrightarrow{p} 0 \qquad \mbox{as $d \rightarrow \infty$}, \label{eq:c1}\\
	&\sum_{i=1}^d X_{d,i}^2 \xrightarrow{p} \eta^2 \qquad \mbox{as $d \rightarrow \infty$, and} \label{eq:c2} \\
	&\mathrm{E}_{\pi_d} \left( \max_{1\leq i \leq d} X_{d,i}^2 \right) \mbox{ is bounded in $d$}. \label{eq:c3}
\end{align}

First consider \eqref{eq:c1}.  We have
$$
	|X_{d,i}| \leq |T_{1i}|(|\xi_i| + |\kappa_{d,i}|) + |T_{2i}| |\nu_i| + |T_{3i}| (\xi_i^2 + \gamma_{d,i}) + |T_{4i}| (\nu_i^2 + 1) + |T_{5i}| |\xi_i| |\nu_i|
$$
which goes to zero in probability since $\kappa_{d,i}$ and $\gamma_{d,i}$ are bounded uniformly and $|T_{ji}|$ are all $\mathcal{O}(d^{-1/2})$ as $d \rightarrow \infty$ uniformly in $i$.

Now consider \eqref{eq:c2}.  Define 
$$
	\eta^2 := \sigma_{ng}^2 = \lim_{d \rightarrow \infty} \sum_{i=1}^d T_{1i}^2 + T_{2i}^2 + 2 T_{3i}^2 + 2 T_{4i}^2 + T_{5i}^2 + \left( \kappa_{d,i} T_{1i} + T_{3i} ( \gamma_{d,i} - 1 ) \right)^2 < \infty
$$
and $Y_{d,i} := d X_{d,i}^2$ so that $\overline{Y}_{d} = \frac{1}{d} \sum_{i=1}^d Y_{d,i} = \sum_{i=1}^d X_{d,i}^2$.  Then
$$
	\mathrm{E}_{\tilde{\pi}_d}[Y_{d,i}] = d \left( T_{1i}^2 + T_{2i}^2 + 2 T_{3i}^2 + 2 T_{4i}^2 + T_{5i}^2 + (\kappa_{d,i}T_{1i} + T_{3i} (\gamma_{d,i} - 1))^2 \right),
$$
and
$$
	\mathrm{Var}_{\tilde{\pi}_d}[Y_{d,i}] = d^2 \sum_{|\omega| = 4} C_{\omega} T_{1i}^{\omega_1} T_{2i}^{\omega_2} T_{3i}^{\omega_3} T_{4i}^{\omega_4} T_{5i}^{\omega_5}
$$
where $\omega$ is a multi-index with $\omega_i \geq 0$ and $|\omega| = \sum_{i} \omega_i = 4$, and $C_\omega$ are uniformly bounded constants.  Since $T_{ji}$ are all $\mathcal{O}(d^{-1/2})$, $\mathrm{E}_{\tilde{\pi}_d}[Y_{d,i}]$ and $\mathrm{Var}_{\tilde{\pi}_d}[Y_i]$ are uniformly bounded.

Then by the Markov inequality, and independence of $Y_{d,i}$, for any $\epsilon > 0$,
\begin{align*}
	\mathrm{Pr}\left( \left| \overline{Y}_d - \mathrm{E}_{\tilde{\pi}_d}\left[ \overline{Y}_d \right] \right| \geq \epsilon \right) 
	&\leq \frac{1}{\epsilon^2} \mathrm{Var}_{\tilde{\pi}_d}\left[ \overline{Y}_d \right] \\
	&= \frac{1}{\epsilon^2 d^2} \sum_{i=1}^d \mathrm{Var}_{\tilde{\pi}_d} [Y_{d,i} ] \\
	&\leq \frac{C}{\epsilon^2 d} \rightarrow 0 \quad \mbox{as $d \rightarrow \infty$}.
\end{align*}
Hence $\overline{Y}_d \xrightarrow{p} \eta^2$ as $d \rightarrow \infty$.  This is not \eqref{eq:c2} yet, because it is convergence with respect to $\tilde{\pi}_d$ rather than $\pi_d$.  

Since $\lim_{d \rightarrow \infty} \mathrm{E}_{\tilde{\pi}_d}[ \overline{Y}_d ] = \eta^2 < \infty$ and $|\overline{Y}_d| = \overline{Y}_d$, it follows that $\mathrm{E}_{\tilde{\pi}_d}[|\overline{Y}_d| ]$ is uniformly bounded in $d$.  Therefore, $\overline{Y}_d$ is uniformly integrable and so $\overline{Y}_d \rightarrow \eta^2$ in $L^1(\tilde{\pi}_d)$ as $d \rightarrow \infty$ \cite[Thm. 6.5.5 on p. 169]{sensinger1993}.  Hence 
$$
	\sum_{i=1}^d X_{d,i}^2 \xrightarrow{L^1(\tilde{\pi}_d)} \eta^2 \qquad \mbox{as $d \rightarrow \infty$}.
$$
From $\phi_d(x) \geq m$, the same limit holds in $L^1(\pi_d)$, which also implies convergence in probability, hence we have shown \eqref{eq:c2}.

Condition \eqref{eq:c3} follows from $X_{d,i}^2 \leq \overline{Y}_d$ for $1 \leq i \leq d$, $\mathrm{E}_{\tilde{\pi}_d}[|\overline{Y}_d| ]$ uniformly bounded in $d$, and $\phi_d(x) \geq m$.

Therefore, by the Martingale central limit theorem \cite[Thm. 3.2]{HallHeyde}, 
$$
	S_{d,d} \xrightarrow{\mathcal{D}} \normal(0,\eta^2) \qquad \mbox{as $d \rightarrow \infty$}.
$$	
Hence, 
\begin{equation}
\label{eq:zlim}
	Z  \xrightarrow{\mathcal{D}} \normal(\mu_{ng}, \sigma_{ng}^2 ) \qquad \mbox{as $d \rightarrow \infty$}.
\end{equation}
The result then follows from \eqref{eq:acceptlim}, \eqref{eq:zlim} and \eqref{eq uf1}.
\end{proof}

\begin{corollary}
\label{cor:ng}
In addition to the conditions for Theorem \ref{thm:ng1}, if
$$
	\lim_{d \rightarrow \infty} \sum_{i=1}^d T_{1i}^2 + T_{3i}^2 = 0
$$
then 
$$
	\mu_{ng} = \mu \qquad \mbox{and} \qquad \sigma_{ng}^2 = \sigma^2,
$$
and the expected acceptance rate for the non-Gaussian target case has the same limit as $d \rightarrow \infty$ as the Gaussian target case.
\end{corollary}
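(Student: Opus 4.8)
The plan is to show that the two quantities separating $\mu_{ng}$ from $\mu$ and $\sigma_{ng}^2$ from $\sigma^2$ in Theorem \ref{thm:ng1} both vanish in the limit. Writing $a_{d,i} := \kappa_{d,i} T_{1i} + T_{3i}(\gamma_{d,i}-1)$, Theorem \ref{thm:ng1} gives $\mu_{ng} - \mu = \lim_{d\to\infty} \sum_{i=1}^d a_{d,i}$ and $\sigma_{ng}^2 - \sigma^2 = \lim_{d\to\infty}\sum_{i=1}^d a_{d,i}^2$, so it suffices to prove that $\sum_i a_{d,i} \to 0$ and $\sum_i a_{d,i}^2 \to 0$ under the extra hypothesis $\sum_i (T_{1i}^2 + T_{3i}^2) \to 0$. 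Throughout I would work in the diagonalised coordinates of the earlier proofs, where $\xi_i = \lambda_i(x_i - m_i)$ is i.i.d.\ $\normal(0,1)$ under $\tilde{\pi}_d$.

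The variance correction is the easy half. Recall from the proof of Theorem \ref{thm:ng1} that $\kappa_{d,i}$ and $\gamma_{d,i}$ are bounded uniformly in $d$ and $i$ (a consequence of $\phi_d \geq m$). Hence, using $(p+q)^2 \leq 2p^2 + 2q^2$ together with these pointwise bounds,
$$
	\sum_{i=1}^d a_{d,i}^2 \leq 2\sum_{i=1}^d \kappa_{d,i}^2 T_{1i}^2 + 2\sum_{i=1}^d (\gamma_{d,i}-1)^2 T_{3i}^2 \leq C \sum_{i=1}^d (T_{1i}^2 + T_{3i}^2) \to 0,
$$
which yields $\sigma_{ng}^2 = \sigma^2$ immediately.

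The mean correction is the crux, and I expect it to be the main obstacle: pointwise boundedness of $\kappa_{d,i}$ and $\gamma_{d,i}$ is not enough, since it only controls $\sum_i |a_{d,i}|$, which need not vanish. Instead I would control the sequences $(\kappa_{d,i})_i$ and $(\gamma_{d,i}-1)_i$ in $\ell^2$ and apply Cauchy--Schwarz, $|\sum_i \kappa_{d,i} T_{1i}| \leq (\sum_i \kappa_{d,i}^2)^{1/2}(\sum_i T_{1i}^2)^{1/2}$ and likewise for the $T_{3i}$ term, so that the vanishing of $\sum_i T_{1i}^2$ and $\sum_i T_{3i}^2$ forces $\sum_i a_{d,i} \to 0$ \emph{provided} $\sum_i \kappa_{d,i}^2$ and $\sum_i (\gamma_{d,i}-1)^2$ stay bounded in $d$. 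The naive pointwise bound only gives $\mathcal{O}(d)$ here, so the essential step is a sharper estimate via Bessel's inequality. Writing $h := \mathrm{e}^{-\phi_d}-1$ and $Z_d := \mathrm{E}_{\tilde{\pi}_d}[\mathrm{e}^{-\phi_d}]$, the change of measure $\frac{\dd\pi_d}{\dd\tilde{\pi}_d} = \mathrm{e}^{-\phi_d}/Z_d$ together with $\mathrm{E}_{\tilde{\pi}_d}[\xi_i] = 0$ and $\mathrm{E}_{\tilde{\pi}_d}[\xi_i^2 - 1]=0$ gives $\kappa_{d,i} = Z_d^{-1}\mathrm{E}_{\tilde{\pi}_d}[\xi_i h]$ and $\gamma_{d,i}-1 = Z_d^{-1}\mathrm{E}_{\tilde{\pi}_d}[(\xi_i^2-1)h]$. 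Since $\{\xi_i\}$ and $\{(\xi_i^2-1)/\sqrt{2}\}$ are orthonormal families in $L^2(\tilde{\pi}_d)$, Bessel's inequality applied to $h$ yields $\sum_i \kappa_{d,i}^2 \leq Z_d^{-2}\,\mathrm{Var}_{\tilde{\pi}_d}(\mathrm{e}^{-\phi_d})$ and $\sum_i (\gamma_{d,i}-1)^2 \leq 2 Z_d^{-2}\,\mathrm{Var}_{\tilde{\pi}_d}(\mathrm{e}^{-\phi_d})$.

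It then remains to bound these uniformly in $d$. The numerator is controlled by $\phi_d \geq m$, namely $\mathrm{Var}_{\tilde{\pi}_d}(\mathrm{e}^{-\phi_d}) \leq \mathrm{E}_{\tilde{\pi}_d}[\mathrm{e}^{-2\phi_d}] \leq \mathrm{e}^{-2m}$, and for the denominator I would use Jensen's inequality, $Z_d \geq \exp(-\mathrm{E}_{\tilde{\pi}_d}[\phi_d])$, followed by the growth bound $|\phi_d(x)| \leq C(1 + |x-A^{-1}b|_{s''}^p)$ of Assumption \ref{assumption1}, the moment estimate $\mathrm{E}_{\tilde{\pi}_d}[|x-A^{-1}b|_r^{2q}] < C$ from Lemma \ref{lem:phi}, and \eqref{eq:normbound} (using $s'' \leq r$) to bound $\mathrm{E}_{\tilde{\pi}_d}[\phi_d]$ above, giving $Z_d \geq c > 0$ uniformly in $d$. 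This makes $\sum_i \kappa_{d,i}^2$ and $\sum_i (\gamma_{d,i}-1)^2$ uniformly bounded, completing the Cauchy--Schwarz estimate and hence $\mu_{ng} = \mu$. The final sentence of the corollary is then purely a substitution: replacing $\mu_{ng}$ by $\mu$ and $\sigma_{ng}$ by $\sigma$ in the acceptance-rate limit of Theorem \ref{thm:ng1} reproduces exactly the Gaussian-target limit $\Phi(\tfrac{\mu}{\sigma}) + \mathrm{e}^{\mu+\sigma^2/2}\Phi(-\sigma-\tfrac{\mu}{\sigma})$ of Theorem \ref{thm accept}.
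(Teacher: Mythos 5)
Your argument is correct, and the variance half coincides with the paper's (uniform boundedness of $\kappa_{d,i}$ and $\gamma_{d,i}$ plus $(p+q)^2\le 2p^2+2q^2$). For the mean correction, however, you take a genuinely different route from the paper. The paper works with the random variable $S_d:=\sum_{i=1}^d \bigl(T_{1i}\xi_i+T_{3i}(\xi_i^2-1)\bigr)$ directly: by independence and the moments $\mathrm{E}_{\tilde{\pi}_d}[\xi_i^2]=1$, $\mathrm{E}_{\tilde{\pi}_d}[\xi_i^4]=3$ it computes $\mathrm{E}_{\tilde{\pi}_d}[S_d^2]=\sum_i T_{1i}^2+2T_{3i}^2\to 0$, passes to $L^1(\tilde{\pi}_d)$ by Jensen, transfers to $L^1(\pi_d)$ using only $\phi_d\ge m$, and then reads off $\sum_i T_{1i}\kappa_{d,i}+T_{3i}(\gamma_{d,i}-1)=\mathrm{E}_{\pi_d}[S_d]\to 0$. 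You instead stay in coefficient space: you identify $\kappa_{d,i}$ and $\gamma_{d,i}-1$ as Fourier coefficients of the density ratio against the orthonormal families $\{\xi_i\}$ and $\{(\xi_i^2-1)/\sqrt{2}\}$ in $L^2(\tilde{\pi}_d)$, bound $\sum_i\kappa_{d,i}^2$ and $\sum_i(\gamma_{d,i}-1)^2$ uniformly by Bessel, and finish by Cauchy--Schwarz. Both are valid and rest on the same underlying fact (the $L^2(\tilde{\pi}_d)$-smallness of $S_d$ paired with control of $\dd\pi_d/\dd\tilde{\pi}_d$), but the paper's version is shorter and needs nothing beyond $\phi_d\ge m$, whereas yours additionally requires the lower bound on $Z_d$ (which is actually superfluous here, since \eqref{eq target0} defines $\mathrm{e}^{-\phi_d}$ as the Radon--Nikodym derivative itself, so $Z_d=1$); in exchange your argument delivers the stronger structural by-product that $(\kappa_{d,i})_i$ and $(\gamma_{d,i}-1)_i$ are uniformly bounded in $\ell^2$, not merely in $\ell^\infty$.
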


\begin{proof}
With $\xi_i$ defined as in the proof of Theorem \ref{thm:ng1} (we only need to consider the case when matrices are diagonal), since $\mathrm{E}_{\tilde{\pi}_d}[\xi_i^2] = 1$ and $\mathrm{E}_{\tilde{\pi}_d}[\xi_i^4] = 3$ for all $i$ and $d$, and since $\xi_i$ and $\xi_j$ are independent for $i \neq j$ under $\tilde{\pi}_d$,
$$
	\lim_{d \rightarrow \infty} \mathrm{E}_{\tilde{\pi}_d}\left[ \left( \sum_{i=1}^d T_{1i} \xi_i + T_{3i} (\xi_i^2 - 1) \right)^2 \right] = \lim_{d \rightarrow \infty} \sum_{i=1}^d T_{1i}^2 + 2T_{3i}^2 = 0
$$
From Jensen's inequality and $\phi_d(x) \geq m$ we have $\sum_{i=1}^d (T_{1i} \xi_i + T_{3i} (\xi_i^2 - 1)) \rightarrow 0$ in $L^1(\pi_d)$ as $d \rightarrow \infty$.  Therefore, since $\kappa_{d,i} = \mathrm{E}_{\pi_d}[\xi_i]$ and $\gamma_{d,i} = \mathrm{E}_{\pi_d}[\xi_i^2]$,
$$
	\sum_{i=1}^d T_{1i} \kappa_{d,i} + T_{3i}(\gamma_{d,i} - 1) = \mathrm{E}_{\pi_d}\left[ \sum_{i=1}^d T_{1i} \xi_i + T_{3i}(\xi_i^2 - 1) \right] \rightarrow 0 \qquad \mbox{as $d \rightarrow \infty$},
$$
and $\mu_{ng} = \mu$.  Also, since $\kappa_{d,i}$ and $\gamma_{d,i}$ are uniformly bounded in $d$ and $i$,
$$
	\sum_{i=1}^d (T_{1i}\kappa_{d,i} + T_{3i} (\gamma_{d,i}-1))^2 \leq C \sum_{i=1}^d T_{1i}^2 + T_{3i}^2 \rightarrow 0 \qquad \mbox{as $d \rightarrow \infty$},
$$
and $\sigma_{ng}^2 = \sigma^2$.
\end{proof}

%{\color{red} Remark: If $\sum_{i=1}^\infty T_{3i}^2$ something (e.g. MALA) then ... so the acceptance rate is unchanged.}
%{\color{red} remark to be moved later: $\mu_{ng} = \mu$ and $\sigma_ng = \sigma$ for MALA and Langevin proposals, so the acceptance rate is unchanged from the Gaussian target case when the target is non-Gaussian.}

\section{Expected jumpsize for non-Gaussian target}

\begin{theorem}
\label{thm:ng2}
Under the same conditions as Theorem \ref{thm:ng1}, 
\begin{multline*}
	\mathrm{E}_{\pi_d}[ (q_i^T (x' - x))^2 ] = \left( \left(\tilde{g}_i^2 \hat{r}_i^2 + \frac{g_i}{\tilde{\lambda}_i^2} \right) \mathrm{E}_{\pi_d}[\alpha(x,y)] + 2 \frac{\hat{r}_i \tilde{g}_i^2 \gamma_{d,i}^{1/2}}{\lambda_i} u_{d,i} + \frac{\tilde{g}_i^2 \gamma_{d,i}}{\lambda_i^2} v_{d,i} \right) + \mathrm{o}(t_{d,i} \lambda_i^{-1}) 
\end{multline*}
(uniformly in $i$) as $d \rightarrow \infty$, for some $-1 \leq u_{d,i} \leq 1$ and $0 \leq v_{d,i} \leq 1$.
\end{theorem}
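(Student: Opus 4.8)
The plan is to mirror the proof of Theorem \ref{thm jumpsize}: reduce to the diagonal case, write the expected squared jump in direction $q_i$ as $\mathrm{E}_{\pi_d}[J_i^2\,\alpha(x,y)]$ with $J_i:=y_i-x_i$, and then replace $\alpha$ by a surrogate that does not depend on the $i$-th proposal noise $\nu_i$, so that $\nu_i$ can be integrated out exactly. By Lemma \ref{lem2} (exactly as in Theorems \ref{thm accept}, \ref{thm jumpsize} and \ref{thm:ng1}) it suffices to treat diagonal $A,G,\Sigma$, where $x_i'-x_i$ equals $J_i$ when the proposal is accepted and $0$ otherwise, so that $\mathrm{E}_{\pi_d}[(q_i^T(x'-x))^2]=\mathrm{E}_{\pi_d}[J_i^2\alpha(x,y)]$ with $J_i=-\tilde g_i\hat r_i-\tilde g_i\lambda_i^{-1}\xi_i+g_i^{1/2}\tilde\lambda_i^{-1}\nu_i$, $\nu_i\sim\normal(0,1)$ independent of $x$, and $\xi_i=\lambda_i(x_i-m_i)$. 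The genuinely new feature is that under $\pi_d$ the $\xi_i$ are \emph{not} independent across $i$, so the clean factorisation $\mathrm{E}[J_i^2\alpha^-]=\mathrm{E}[J_i^2]\mathrm{E}[\alpha^-]$ used in Theorem \ref{thm jumpsize} is unavailable; this is precisely why the answer can only be pinned down up to the bounded coefficients $u_{d,i}\in[-1,1]$ and $v_{d,i}\in[0,1]$.

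First I would record the moment bookkeeping. The same computation as in Theorem \ref{thm jumpsize} gives $\mathrm{E}_{\tilde\pi_d}[J_i^4]=\tilde g_i^4\hat r_i^4+3\lambda_i^{-4}(\tilde g_i^2+\tilde r_i g_i)^2+6\lambda_i^{-2}\hat r_i^2\tilde g_i^2(\tilde g_i^2+\tilde r_i g_i)$, and the hypotheses $\tilde g_i^2\hat r_i^2\lambda_i,\ \tilde g_i^2\lambda_i^{-1},\ g_i\tilde\lambda_i^{-1}=\mathcal{O}(t_{d,i})$ together with $\tilde r_i$ bounded show this is $\mathcal{O}(t_{d,i}^2\lambda_i^{-2})$ uniformly in $i$. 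Integrating over $\nu_i$ first (which has the same law under $\pi_d$ and $\tilde\pi_d$) and using $\mathrm{d}\pi_d/\mathrm{d}\tilde\pi_d\le\mathrm{e}^{-m}$ on the resulting function of $\xi_i$ yields $\mathrm{E}_{\pi_d}[J_i^4]^{1/2}=\mathcal{O}(t_{d,i}\lambda_i^{-1})$; the same bookkeeping shows every term appearing in the claim is $\mathcal{O}(t_{d,i}\lambda_i^{-1})$, consistent with the stated error.

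Next I would introduce the surrogate
$$
\hat\alpha:=1\wedge\exp\Big(\sum_{j\neq i}Z_{d,j}+T_{0i}+T_{1i}\xi_i+T_{3i}\xi_i^2\Big),
$$
obtained from $\alpha$ by deleting the change-of-measure term $\phi_d(x)-\phi_d(y)$ and the $\nu_i$-dependent part $T_{2i}\nu_i+T_{4i}\nu_i^2+T_{5i}\xi_i\nu_i$ of $Z_{d,i}$. Crucially $\hat\alpha$ is independent of $\nu_i$ and still satisfies $0\le\hat\alpha\le1$. Since $1\wedge\mathrm{e}^{(\cdot)}$ is $1$-Lipschitz, $|\alpha-\hat\alpha|\le|\phi_d(x)-\phi_d(y)|+|T_{2i}\nu_i+T_{4i}\nu_i^2+T_{5i}\xi_i\nu_i|$, whose $L^2(\pi_d)$ norm tends to $0$ by Lemma \ref{lem:phi} (with $q=2$) and by $T_{2i},T_{4i},T_{5i}=\mathcal{O}(d^{-1/2})$. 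Combining with the fourth-moment bound through Cauchy--Schwarz gives $\mathrm{E}_{\pi_d}[J_i^2(\alpha-\hat\alpha)]=\mathrm{o}(t_{d,i}\lambda_i^{-1})$ uniformly in $i$; retaining $T_{0i}+T_{1i}\xi_i+T_{3i}\xi_i^2$ inside $\hat\alpha$ is exactly what lets me avoid assuming that $T_{0i}$ is small.

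Finally, since $\hat\alpha$ is independent of $\nu_i$, conditioning on all variables except $\nu_i$ and using $\mathrm{E}[\nu_i]=0$, $\mathrm{E}[\nu_i^2]=1$ gives $\mathrm{E}[J_i^2\mid\text{(all but }\nu_i)]=\tilde g_i^2\hat r_i^2+g_i\tilde\lambda_i^{-2}+2\tilde g_i^2\hat r_i\lambda_i^{-1}\xi_i+\tilde g_i^2\lambda_i^{-2}\xi_i^2$, so that
$$
\mathrm{E}_{\pi_d}[J_i^2\hat\alpha]=\big(\tilde g_i^2\hat r_i^2+g_i\tilde\lambda_i^{-2}\big)\mathrm{E}_{\pi_d}[\hat\alpha]+2\tilde g_i^2\hat r_i\lambda_i^{-1}\,\mathrm{E}_{\pi_d}[\xi_i\hat\alpha]+\tilde g_i^2\lambda_i^{-2}\,\mathrm{E}_{\pi_d}[\xi_i^2\hat\alpha].
$$
I then set $u_{d,i}:=\mathrm{E}_{\pi_d}[\xi_i\hat\alpha]/\gamma_{d,i}^{1/2}$ and $v_{d,i}:=\mathrm{E}_{\pi_d}[\xi_i^2\hat\alpha]/\gamma_{d,i}$; because $0\le\hat\alpha\le1$ and $\gamma_{d,i}=\mathrm{E}_{\pi_d}[\xi_i^2]$, Cauchy--Schwarz gives $|u_{d,i}|\le1$ and positivity gives $0\le v_{d,i}\le1$. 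Replacing $\mathrm{E}_{\pi_d}[\hat\alpha]$ by $\mathrm{E}_{\pi_d}[\alpha]$ (they differ by $\mathrm{o}(1)$ while the prefactor is $\mathcal{O}(t_{d,i}\lambda_i^{-1})$) produces exactly the claimed identity. I expect the main obstacle to be the uniform-in-$i$ error control at the fine scale $\mathrm{o}(t_{d,i}\lambda_i^{-1})$: one must keep both the $\phi$-replacement error and the $\nu_i$-truncation error negligible relative to $t_{d,i}\lambda_i^{-1}$ for every $i$ simultaneously, which is the reason for engineering $\hat\alpha$ to discard only the provably $\mathcal{O}(d^{-1/2})$ pieces.
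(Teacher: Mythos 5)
Your proof is correct and follows essentially the same route as the paper's: reduce to the diagonal case, establish $\mathrm{E}_{\pi_d}[(y_i-x_i)^4]^{1/2}=\mathcal{O}(t_{d,i}\lambda_i^{-1})$, swap the acceptance probability for a $\nu_i$-independent surrogate via the Lipschitz property of $1\wedge\mathrm{e}^{(\cdot)}$ together with Cauchy--Schwarz, integrate out $\nu_i$, and absorb $\mathrm{E}_{\pi_d}[\xi_i\,\cdot\,]$ and $\mathrm{E}_{\pi_d}[\xi_i^2\,\cdot\,]$ into the bounded coefficients $u_{d,i}$ and $v_{d,i}$. The only difference is the choice of surrogate: the paper's $\tilde{\alpha}^-$ discards all of $Z_{d,i}$, so its swap error is controlled by $\mathrm{E}_{\pi_d}[Z_{d,i}^2]^{1/2}$ (which implicitly also requires $T_{0i}$ to be small, automatic in the applications since $\hat{r}_i=0$), whereas you retain the $\nu_i$-free part $T_{0i}+T_{1i}\xi_i+T_{3i}\xi_i^2$ and discard only the provably $\mathcal{O}(d^{-1/2})$ pieces --- a harmless and marginally more careful variant of the same argument.
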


\begin{proof}
As in earlier proofs, it is sufficient to prove the result in the case when all matrices are diagonal, so let $A$, $G$ and $\Sigma$ be diagonal matrices.  

Let $S_d$ denote the expected squared jump size in coordinate direction $i$, so that
$$
	S_d = \mathrm{E}_{\pi_d}[(x_i' - x_i)^2] = \mathrm{E}_{\pi_d}[(y_i - x_i)^2 \alpha(x,y)].
$$
Also define
\begin{align*}
	\tilde{S}_d := \mathrm{E}_{\pi_d}[(y_i-x_i)^2 \tilde{\alpha}(x,y)] \quad \mbox{and} \quad
	\tilde{S}_d^- := \mathrm{E}_{\pi_d}[(y_i-x_i)^2 \tilde{\alpha}^-(x,y)]
\end{align*}
where $\tilde{\alpha}^-(x,y) = 1 \wedge \exp( \sum_{j=1,j\neq i}^d Z_{d,i})$, and $Z_{d,i}$ is the same as earlier.  Recall that $\xi_i = \lambda_i (x_i-m_i)$.

We first show that $\mathrm{E}_{\pi_d}[(y_i-x_i)^4] = \mathcal{O}(t_{d,i}^2 \lambda_i^{-2})$ (uniformly in $i$) as $d \rightarrow \infty$.  As in the proof of Theorem \ref{thm jumpsize}, from $y = Gx + g + \Sigma^{1/2} \nu$ where $\nu \sim \normal(0,I)$, and $\tilde{m} = G \tilde{m} + g$ it follows that
$$
	y_i - x_i = -\tilde{g}_i \hat{r}_i - \frac{\tilde{g}_i}{\lambda_i} \xi_i + \frac{g_i^{1/2}}{\tilde{\lambda}_i} \nu_i
$$
where $\nu_i \stackrel{\text{iid}}{\sim} \normal(0,1)$, $\xi_i = \lambda_i (x_i - m_i)$ and $x \sim \pi_d$.  Therefore,
$$
	(y_i - x_i)^4 = \lambda_i^{-2} \left( -\tilde{g}_i \hat{r}_i \lambda_i^{1/2} - \frac{\tilde{g}_i}{\lambda_i^{1/2}} \xi_i + \frac{\tilde{r}_i^{1/4} g_i^{1/2}}{ \tilde{\lambda}_i^{1/2}}  \nu_i \right)^4.
$$
%From Assumption \ref{assumption2} it follows that $\lambda_i^{-2}$ is uniformly bounded.  
In the proof of Theorem \ref{thm:ng1} we showed that $|\kappa_{d,i}| = |\mathrm{E}_{\pi_d}[\xi_i]|$ and $\gamma_{d,i} = \mathrm{E}_{\pi_d}[\xi_i^2]$ are uniformly bounded.  Similarly, $|\mathrm{E}_{\pi_d}[\xi_i^3]|$ and $\mathrm{E}_{\pi_d}[\xi_i^4]$ are also uniformly bounded.  Using these facts and $\tilde{g}_i^2 \hat{r}_i^2 \lambda_i$, $\tilde{g}_i^2 \lambda^{-1}$, $g_i \tilde{\lambda}_i^{-1}$ $\mathcal(t_{d,i})$ (uniformly in $i$) and $\tilde{r}_i$ bounded uniformly in $d$ and $i$, it follows that $\mathrm{E}_{\pi_d}[(y_i-x_i)^4] = \mathcal{O}(t_{d,i}^2 \lambda_i^{-2})$ (uniformly in $i$) as $d \rightarrow \infty$.

Now let us show that $S_d - \tilde{S}_d = \mathrm{o}(t_{d,i}\lambda_i^{-1})$  as $d \rightarrow \infty$.  From the Lipschitz continuity of $z \mapsto 1 \wedge \exp(z)$ and the Cauchy-Schwarz inequality,
\begin{align*}
	S_d - \tilde{S}_d &= \mathrm{E}_{\pi_d}[ (y_i - x_i)^2 (\alpha(x,y) - \tilde{\alpha}(x,y) ] \\
	&\leq \mathrm{E}_{\pi_d}[(y_i-x_i)^4]^{1/2} \mathrm{E}_{\pi_d}[ |\phi_d(x) - \phi_d(y) |^2 ]^{1/2} = \mathrm{o}(t_{d,i} \lambda_i^{-1}) \quad \mbox{as $d \rightarrow \infty$},
\end{align*}
by Lemma \ref{lem:phi}, since $\mathrm{E}_{\pi_d}[(y_i-x_i)^4]  =\mathcal{O}(t_{d,i}^2\lambda_i^{-2})$.

Now show that $\tilde{S}_d - \tilde{S}_d^- = \mathrm{o}(t_{d,i} \lambda_i^{-1})$ as $d \rightarrow \infty$.  Again, by the Lipschitz continuity of $z \mapsto 1 \wedge \exp(z)$ and the Cauchy-Schwarz inequality,
\begin{align*}
	\tilde{S}_d - \tilde{S}_d^- &= \mathrm{E}_{\pi_d}[ (y_i - x_i)^2 (\tilde{\alpha}(x,y) - \tilde{\alpha}^-(x,y))] \\
	&\leq \mathrm{E}_{\pi_d}[(y_i-x_i)^4]^{1/2} \mathrm{E}_{\pi_d}[ Z_{d,i}^2 ]^{1/2} = \mathrm{o}(t_{d,i}\lambda_i^{-1}) \quad \mbox{as $d \rightarrow \infty$},
\end{align*}	
since $\mathrm{E}_{\pi_d}[(y_i-x_i)^4] = \mathcal{O}(t_{d,i}^2 \lambda_i^2)$ and $T_{ji}$ are all $\mathcal{O}(d^{-1/2})$.

Now consider $\tilde{S}_d^-$.  Since $\nu_i$ is independent of $\xi_i$ and $\tilde{\alpha}^-(x,y)$, $\mathrm{E}[\nu_i] = 0$ and $\mathrm{E}[\nu_i^2] = 1$,
\begin{align*}
	\tilde{S}_d^-
	&= \mathrm{E}_{\pi_d}[ (y_i - x_i)^2 \tilde{\alpha}^-(x,y)] \\
	&= \mathrm{E}_{\pi_d}\left[ \left( \tilde{g}_i^2 \hat{r}_i^2 + 2 \frac{\hat{r}_i \tilde{g}_i^2}{\lambda_i} \xi_i + \frac{\tilde{g}_i^2}{\lambda_i^2} \xi_i^2 + \frac{g_i}{\tilde{\lambda}_i^2} \right) \tilde{\alpha}^-(x,y) \right] \\
	&= \left( \tilde{g}_i^2 \hat{r}_i^2 + \frac{g_i}{\tilde{\lambda}_i^2} \right) \mathrm{E}_{\pi_d}[ \tilde{\alpha}^-(x,y)] + 2 \frac{\hat{r}_i \tilde{g}_i^2}{\lambda_i} \mathrm{E}_{\pi_d} [\xi_i \tilde{\alpha}^-(x,y) ] + \frac{\tilde{g}_i^2}{\lambda_i^2} \mathrm{E}_{\pi_d} [\xi_i^2 \tilde{\alpha}^-(x,y) ].
\end{align*}
Since $\tilde{\alpha}^-(x,y) \in (0,1]$, it follows from Jensen's inequality that 
$$
	| \mathrm{E}_{\pi_d}[\xi_i \tilde{\alpha}^-(x,y)] | \leq \mathrm{E}_{\pi_d}[|\xi_i| \tilde{\alpha}^-(x,y)] \leq \mathrm{E}_{\pi_d}[|\xi_i|] \leq \mathrm{E}_{\pi_d}[\xi_i^2]^{1/2} = \gamma_{d,i}^{1/2}.
$$
Also, $0 \leq \mathrm{E}_{\pi_d} [\xi_i^2 \tilde{\alpha}^-(x,y) ] \leq \mathrm{E}_{\pi_d} [\xi_i^2 ] = \gamma_{d,i}$.  

Finally, using $z \mapsto 1\wedge \exp(z)$ Lipschitz, and since $T_{ji}$ are $\mathcal{O}(d^{-1/2})$,
$$
	|\mathrm{E}_{\pi_d}[ \tilde{\alpha}^-(x,y)] - \mathrm{E}_{\pi_d}[ \tilde{\alpha}(x,y) ]|
	\leq \mathrm{E}_{\pi_d}|Z_{d,i}| \rightarrow 0 \qquad \mbox{as $d \rightarrow \infty$}.
$$
Since $\tilde{g}_i^2 \hat{r}_i^2 + \frac{g_i}{\tilde{\lambda}_i^2} = \mathcal{O}(t_{d,i}\lambda_i^{-1})$ it follows that as $d \rightarrow \infty$
\begin{align*}
	S_d &= S_d^- + \mathrm{o}(t_{d,i} \lambda_i^{-1}) \\
	&= \left( \left(\tilde{g}_i^2 \hat{r}_i^2 + \frac{g_i}{\tilde{\lambda}_i^2} \right) \mathrm{E}_{\pi_d}[\tilde{\alpha}(x,y)] + 2 \frac{\hat{r}_i \tilde{g}_i^2 \gamma_{d,i}^{1/2}}{\lambda_i} u_{d,i} + \frac{\tilde{g}_i^2 \gamma_{d,i}}{\lambda_i^2} v_{d,i} \right) + \mathrm{o}(t_{d,i} \lambda_i^{-1}) 
\end{align*}
for some  $u_{d,i} \in [-1,1]$ and $v_{d,i} \in [0,1]$.  The result then follows from \eqref{eq:acceptlim} and $\tilde{g}_i^2 \hat{r}_i^2 + \frac{g_i}{\tilde{\lambda}_i^2} = \mathcal{O}(t_{d,i}\lambda_i^{-1})$.
\end{proof}

%%%%%%%%%%%%%%%%%%%%%%%%%%%%%%%%%%%%%%%%%%%%%%%%%%%%%%%%%%%%%%%%%%%%
%%%%%%%%%%%%%%%%%%%%%%%%%%%%%%%%%%%%%%%%%%%%%%%%%%%%%%%%%%%%%%%%%%%%
%%%%%%%%%%%%%%%%%%%%%%%%%%%%%%%%%%%%%%%%%%%%%%%%%%%%%%%%%%%%%%%%%%%%
\chapter{Examples}
\label{sec examples}

\section{Discretized Langevin diffusion - MALA and SLA}
\label{sec langevin}

The proposal for MALA is obtained from the Euler-Maruyama discretization of a Langevin diffusion process $\{ z_t \}$ which satisfies the stochastic differential equation
\[
	\frac{\dd z_t}{\dd t} = \frac{1}{2} \nabla \log \pi(z_t) + \frac{\dd W_t}{\dd t},
\]
where $W_t$ is standard Brownian motion in $\bbR^d$.  This diffusion process has the desired target distribution $\pi$ as equilibrium, so one might expect a discretization of the diffusion process to almost preserve the desired target distribution.  If the target is Gaussian, $\normal(A^{-1}b,A^{-1})$, then for current state $x \in \bbR^d$ and time step $h >0$, the MALA proposal $y \in \bbR^d$ is defined as
\begin{equation}
\label{MALA prop}
	y = (I - \tsfrac{h}{2} A) x + \tsfrac{h}{2} b + \sqrt{h} \xi 
\end{equation}
where $\xi \sim \normal(0,I)$.  One can also use this proposal for situations where the target distribution is a change of measure from a Gaussian; in which case it is called the SLA proposal \cite{BRS2009}.  The SLA algorithm is
\begin{equation}
\label{SLA}
\begin{split}
	\mbox{Target:} & \qquad \pi_d, \\
	\mbox{Proposal:} & \qquad y = (I - \tsfrac{h}{2} A) x + \tsfrac{h}{2} b + \sqrt{h} \xi, \qquad \mbox{where $\xi \sim \normal(0,I)$}, 
\end{split}
\end{equation}

Identifying \eqref{MALA prop} with \eqref{eq ar1} and applying Corollary \ref{lem equiv} we have the following theorem.  

\begin{theorem}
\label{thm MALA split}
The SLA proposal corresponds to the matrix splitting
\begin{align*}
	M &= \tsfrac{2}{h}(I - \tsfrac{h}{4}A), & 
	\mathcal{A} &= (I - \tsfrac{h}{4}A) A, \\
	N &= \tsfrac{2}{h}(I - \tsfrac{h}{4}A)(I-\tsfrac{h}{2}A), &
	\beta &= (I - \tsfrac{h}{4}A) b.
\end{align*}
\end{theorem}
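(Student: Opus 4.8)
The plan is to read off the AR(1) data $(G,g,\Sigma)$ from the proposal \eqref{MALA prop} and then invoke Corollary \ref{lem equiv}. Comparing \eqref{MALA prop} with the general form \eqref{eq ar1}, namely $y = Gx + g + \nu$ with $\nu \sim \normal(0,\Sigma)$, one reads off immediately
\[
	G = I - \tfrac{h}{2}A, \qquad g = \tfrac{h}{2}b, \qquad \Sigma = hI,
\]
since the noise term $\sqrt{h}\,\xi$ with $\xi \sim \normal(0,I)$ has covariance $hI$.

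Next I would verify the hypotheses of Corollary \ref{lem equiv}. Because $\Sigma = hI$ is a scalar multiple of the identity and $A$ is symmetric, $G = I - \tfrac{h}{2}A$ is symmetric, so $G\Sigma = hG$ is symmetric as required. The spectral radius condition $\rho(G)<1$ holds provided $h$ is small enough that $0 < h\lambda_i^2 < 4$ for every eigenvalue $\lambda_i^2$ of $A$; under this restriction on the step size the corollary applies. (Alternatively, since the matrix $M$ below is manifestly invertible, one can bypass the spectral radius assumption entirely and instead verify the defining relations $G = M^{-1}N$, $g = M^{-1}\beta$ and $\Sigma = M^{-1}(M^T+N)M^{-T}$ by direct substitution.)

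With these identifications the proof reduces to substituting into the formulas of Corollary \ref{lem equiv} and simplifying. Since every matrix in sight is a polynomial in $A$, they all commute, which keeps the algebra routine. Using $\Sigma^{-1} = h^{-1}I$,
\[
	M = \Sigma^{-1}(I+G) = \tfrac{1}{h}\bigl(2I - \tfrac{h}{2}A\bigr) = \tfrac{2}{h}\bigl(I - \tfrac{h}{4}A\bigr),
\]
after which $N = MG$ and $\beta = Mg$ yield the stated expressions at once. For $\mathcal{A}$ I would compute $I - G^2 = hA - \tfrac{h^2}{4}A^2 = h\bigl(I - \tfrac{h}{4}A\bigr)A$, so that $\mathcal{A} = \Sigma^{-1}(I-G^2) = \bigl(I - \tfrac{h}{4}A\bigr)A$, matching the claim.

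There is no genuine obstacle here: the result is a direct computation once the AR(1) data is identified and the matrices are recognised to commute. The only point requiring a moment's care is confirming that $G\Sigma$ is symmetric (so that Corollary \ref{lem equiv}, and not merely Theorem \ref{thm:2.1}, is available) and keeping track of the step-size restriction needed for $\rho(G)<1$; both are dispatched by the observations above.
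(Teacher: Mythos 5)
Your proposal is correct and follows exactly the route the paper takes: it identifies $G = I - \tfrac{h}{2}A$, $g = \tfrac{h}{2}b$, $\Sigma = hI$ from \eqref{MALA prop} and substitutes into Corollary \ref{lem equiv}, which is precisely what the paper does (it states the theorem with no further proof beyond this citation). Your additional remarks — checking that $G\Sigma$ is symmetric and flagging the step-size restriction $0 < h\lambda_i^2 < 4$ needed for $\rho(G)<1$ — are points the paper leaves implicit, and are correctly handled.
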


Thus, the SLA proposal corresponds to a matrix splitting where $M$ and $N$ are functions of $A$ and our theory applies.  An important feature of this proposal is that $\hat{r}_i = 0$ for all $i$.  This greatly simplifies the results in Theorems \ref{thm accept}, \ref{thm jumpsize}, \ref{thm:ng1} and \ref{thm:ng2} and we extend existing theory (\cite[Cor. 1]{BRS2009} and the simpler results in \cite[Thm. 7]{RR2001}) to the case where the reference Gaussian measure is allowed to have off-diagonal covariance terms.  The theorem below is a special case of Theorem \ref{thm genlang conv} so we omit the proof.

\begin{theorem}
\label{thm MALA conv}
Suppose there exist constants $c,C>0$ and $\kappa \geq 0$ such that the eigenvalues $\lambda_i^2$ of $A$ satisfy
$$
	c i^\kappa \leq \lambda_i \leq C i^\kappa \qquad \mbox{for $i=1,\dotsc,d$}.
$$
Also suppose that $\phi_d$ satisfies Assumption \ref{assumption1} and $r = \max \{ s,s',s'' \}$ satisfies Assumption \ref{assumption2}.

If $h = l^2 d^{-1/3-2\kappa}$ for some $l >0$ then SLA, in equilibrium, satisfies
$$
	\mathrm{E}[\alpha(x,y)] \rightarrow 2 \Phi\left( -\tsfrac{l^3 \sqrt{\tau}}{8} \right)
$$
and
\begin{equation}
\label{eq jump}
	\mathrm{E}[(x_i-x_i)^2] = 2 h \Phi\left( -\tsfrac{l^3 \sqrt{\tau}}{8} \right) + \mathrm{o}(h)
\end{equation}
as $d \rightarrow \infty$ where $\tau = \lim_{d \rightarrow \infty} \frac{1}{d^{1 + 6 \kappa}} \sum_{i=1}^d \lambda_i^6$.
\end{theorem}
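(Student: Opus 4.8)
The plan is to reduce everything to scalar, eigenvalue-by-eigenvalue computations via the diagonalisation of Lemma~\ref{lem2}, evaluate the quantities $T_{ji}$, $\mu_{d,i}$ and $\sigma_{d,i}^2$ for the SLA splitting, and then invoke the non-Gaussian acceptance and jump-size results (Theorem~\ref{thm:ng1}, Corollary~\ref{cor:ng} and Theorem~\ref{thm:ng2}). First I would substitute the splitting of Theorem~\ref{thm MALA split} into the notation of Section~\ref{sec notation}. The eigenvalues of $G = I - \tfrac{h}{2}A$ and $\mathcal{A} = (I-\tfrac{h}{4}A)A$ are $G_i = 1 - \tfrac{h}{2}\lambda_i^2$ and $\tilde{\lambda}_i^2 = (1-\tfrac{h}{4}\lambda_i^2)\lambda_i^2$, giving $\tilde{g}_i = \tfrac{h}{2}\lambda_i^2$, $g_i = h\lambda_i^2(1-\tfrac{h}{4}\lambda_i^2)$, $r_i = \tfrac{h}{4}\lambda_i^2$ and $\tilde{r}_i = (1-\tfrac{h}{4}\lambda_i^2)^{-1}$. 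Crucially $\mathcal{A}^{-1}\beta = A^{-1}b$, so $\hat{r}_i = 0$ and hence $T_{0i}=T_{1i}=T_{2i}=0$. A short calculation then gives $\mu_{d,i} = T_{3i}+T_{4i} = -\tfrac{h^3}{32}\lambda_i^6$ \emph{exactly}, while $\sigma_{d,i}^2 = 2T_{3i}^2+2T_{4i}^2+T_{5i}^2$ has leading term $T_{5i}^2 = \tfrac{h^3}{16}\lambda_i^6(1-\tfrac{h}{2}\lambda_i^2)^2$, the $T_{3i}^2,T_{4i}^2$ contributions being of order $h^4\lambda_i^8$.

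Next I would verify the hypotheses of Theorem~\ref{thm:ng1} and Corollary~\ref{cor:ng} using the two-sided bound $c i^\kappa \le \lambda_i \le C i^\kappa$ and the scaling $h = l^2 d^{-1/3-2\kappa}$. The key uniform estimate is $h\lambda_i^2 = \mathcal{O}(d^{-1/3})\to 0$, which keeps $\tilde{r}_i$ bounded and sends the factors $1-\tfrac{h}{4}\lambda_i^2$ and $1-\tfrac{h}{2}\lambda_i^2$ to $1$ uniformly in $i$. Taking $t_{d,i} = h\lambda_i$ (so $t = \tfrac{1}{3}+\kappa$) one checks that $\tilde{g}_i^2\hat{r}_i^2\lambda_i = 0$, that $\tilde{g}_i^2\lambda_i^{-1}$ and $g_i\tilde{\lambda}_i^{-1}$ are $\mathcal{O}(t_{d,i})$, and that each $T_{ji} = \mathcal{O}(d^{-1/2})$; summing, $\sum_i\sigma_{d,i}^2$ converges and $\sum_i(T_{1i}^2+T_{3i}^2)\to 0$ (since $\sum_i T_{3i}^2 = \mathcal{O}(d^{-1/3})$). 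Corollary~\ref{cor:ng} then yields $\mu_{ng}=\mu$ and $\sigma_{ng}^2=\sigma^2$, so the non-Gaussian acceptance rate coincides with the Gaussian one computed from $\mu = \lim_{d}\sum_i\mu_{d,i}$ and $\sigma^2 = \lim_{d}\sum_i\sigma_{d,i}^2$ as in Theorem~\ref{thm accept}.

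I would then pass to the limits. With $h^3 = l^6 d^{-1-6\kappa}$ and $\tfrac{1}{d^{1+6\kappa}}\sum_i\lambda_i^6\to\tau$, the exact formula for $\mu_{d,i}$ gives $\mu = -\tfrac{l^6\tau}{32}$, while only the $\tfrac{h^3}{16}\lambda_i^6$ part of $\sigma_{d,i}^2$ survives (the $h^4\lambda_i^8$ corrections and the cross term in $(1-\tfrac{h}{2}\lambda_i^2)^2$ are $\mathcal{O}(d^{-1/3})$), so $\sigma^2 = \tfrac{l^6\tau}{16}$ and $\sigma = \tfrac{l^3\sqrt{\tau}}{4}$. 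The decisive observation is that $\mu = -\tfrac{1}{2}\sigma^2$, under which the acceptance formula $\Phi(\tfrac{\mu}{\sigma})+\mathrm{e}^{\mu+\sigma^2/2}\Phi(-\sigma-\tfrac{\mu}{\sigma})$ collapses: the exponential equals $1$ and both arguments of $\Phi$ equal $-\tfrac{\sigma}{2}$, so the limit is $2\Phi(-\tfrac{\sigma}{2}) = 2\Phi(-\tfrac{l^3\sqrt{\tau}}{8})$, as claimed.

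Finally, for the jump size I would apply Theorem~\ref{thm:ng2} with $\hat{r}_i = 0$. The first term reduces to $\tfrac{g_i}{\tilde{\lambda}_i^2}\mathrm{E}_{\pi_d}[\alpha(x,y)] = h\,\mathrm{E}_{\pi_d}[\alpha(x,y)]$ since $g_i/\tilde{\lambda}_i^2 = h$ identically; the surviving term $\tfrac{\tilde{g}_i^2\gamma_{d,i}}{\lambda_i^2}v_{d,i} = \mathcal{O}(h^2\lambda_i^2) = \mathrm{o}(h)$ because $\tilde{g}_i^2/\lambda_i^2 = \tfrac{h^2}{4}\lambda_i^2$ and $h\lambda_i^2\to 0$, and the error is $\mathrm{o}(t_{d,i}\lambda_i^{-1}) = \mathrm{o}(h)$. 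Combining with the acceptance limit yields $\mathrm{E}[(q_i^{T}(x'-x))^2] = 2h\Phi(-\tfrac{l^3\sqrt{\tau}}{8}) + \mathrm{o}(h)$. I expect the main obstacle to be the uniform bookkeeping of the second and third paragraphs, namely confirming that every subleading factor is asymptotically negligible under the chosen scaling, which is precisely where the two-sided eigenvalue growth bound is indispensable; by contrast the identity $\mu = -\tfrac{1}{2}\sigma^2$ is the conceptual crux that produces the factor $2$ and the clean argument $\tfrac{l^3\sqrt{\tau}}{8}$.
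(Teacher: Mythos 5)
Your proposal is correct and follows essentially the same route as the paper: the paper obtains this theorem as the special case $\theta=0$, $V=I$ of Theorem \ref{thm genlang conv}, whose appendix proof performs exactly your computation (diagonalise, note $\hat{r}_i=0$, compute the $T_{ji}$ in terms of $s_i=h\lambda_i^2$, verify the hypotheses of Theorem \ref{thm:ng1}, Corollary \ref{cor:ng} and Theorem \ref{thm:ng2}, and use $\mu=-\tsfrac{1}{2}\sigma^2$ to collapse the acceptance formula to $2\Phi(-\sigma/2)$). Your values $\mu=-l^6\tau/32$, $\sigma^2=l^6\tau/16$ and the identity $g_i/\tilde{\lambda}_i^2=h$ all match the paper's.
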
  

Thus, the performance of SLA depends on the choice of $h$ which is usually tuned (by tuning $l$) to maximise the expected jump distance.  From \eqref{eq jump}, using $s = l \tau^{1/6}/2$, we have 
\begin{equation}
\label{eq:mj}
	\max_{l > 0} 2 l^2 d^{-1/3-2\kappa} \Phi\left( -\tsfrac{l^3 \sqrt{\tau}}{8} \right) = \max_{s > 0} \frac{8 d^{-1/3-2\kappa}}{\tau^{1/3}} s^2 \Phi(-s^3),
\end{equation}
which is maximised at $s_0=0.8252$, independent of $\tau$.  Therefore, the acceptance rate that maximises expected jump distance is $2 \Phi(-s_0^3) = 0.574$.  This result was first stated in \cite{RR1998} for product target distributions, then more generally in \cite{RR2001,BRS2009}.  Our result is even more general because we allow the reference measure, which must be Gaussian in our case, to have off-diagonal covariance terms.  In practice, $h$ (equivalently $l$) is adjusted so that the acceptance rate is approximately $0.574$ to maximise the expected jump size.  This acceptance rate is independent of $\tau$, so it is independent of the eigenvalues of $A$.  However, the efficiency of SLA still depends on $\tau$, and as $\tau$ increases the expected square jump distance will decrease by a factor $\tau^{1/3}$.  

The rationale for studying the case when $d \rightarrow \infty$ is that it is a good approximation for cases when $d$ is `large' and finite (see eg. \cite[Fig. 1]{RR1998} or \cite{RR2001}).  However, the results above suggest that we should take $h \rightarrow 0$ as $d \rightarrow \infty$, to achieve at best an expected jump size that also tends towards $0$ as $d \rightarrow \infty$.  The only good point about these results is that SLA has superior asymptotic performance over RWM (see \cite[Thms. 1-4 and Cor. 1]{BRS2009} and \cite[Fig. 1]{RR1998}).

To understand the convergence of SLA to equilibrium (burn in) we would like to know the `spectral gap' or second largest eigenvalue of the transition kernel, as this determines the rate of convergence.  As far as we are aware this is an open problem.

We can also use Theorem \ref{thm MALA split} to analyse the unadjusted Langevin algorithm (ULA) in \cite{RT1996} in the case when the target is Gaussian, in which case ULA is simply the proposal chain from \eqref{MALA prop} without the accept/reject step in the MH algorithm.  From Theorem \ref{thm MALA split} we see that it does not converge to the correct target distribution since $\mathcal{A} \neq A$.  Instead of converging to $\normal(A^{-1}b, A^{-1})$, ULA converges to $\normal(A^{-1}b,\mathcal{A}^{-1})$.  Indeed, the authors of \cite{RT1996} note that ULA has poor convergence properties.  We see here the reason why it converges to the wrong target distribution, and from \cite{F2013,FP2016} we know its convergence rate to the incorrect target distribution depends on the spectral radius of $G = I-\tsfrac{h}{2}A$, which is close to $1$ when $h$ is small.

Despite possibly having slow convergence per iteration, the SLA proposal is cheap to compute.  Since $G = I-\tsfrac{h}{2}A$, we only require a single matrix-vector multiplication with $A$ for each proposal and since $\Sigma = h I$, an i.i.d. sample from $\normal(0,\Sigma)$ at each iteration is also cheap to compute.

%%%%%%%%%%%%%%%%%%%%%%%%%%%%%%%%%%%%%%%%%%%%%%%%%%%%%%%%%%%
%%%%%%%%%%%%%%%%%%%%%%%%%%%%%%%%%%%%%%%%%%%%%%%%%%%%%%%%%%%
%%%%%%%%%%%%%%%%%%%%%%%%%%%%%%%%%%%%%%%%%%%%%%%%%%%%%%%%%%%
\section{Discretized Langevin diffusion - more general algorithms}

It is possible to generalise MALA and SLA by `preconditioning' the Langevin diffusion process and using a discretization scheme that is not Euler-Maruyama.  For symmetric positive definite matrix $V \in \bbR^{d \times d}$ (the `preconditioner') consider a Langevin process $\{z_t\}$ satisfying the stochastic differential equation
\begin{equation}
\label{gen langevin}
	\frac{\dd z_t}{\dd t} = \frac{1}{2} V \nabla \log \pi(z_t) + \frac{\dd W_t}{\dd t}
\end{equation}
where $W_t$ is Brownian motion in $\bbR^d$ with covariance $V$.  This diffusion process also preserves $\pi$.  For $\theta \in [0,1]$, time step $h >0$ and current state $x \in \bbR^d$ define a proposal $y \in \bbR^d$ by discretizing \eqref{gen langevin} as
$$
	y - x = \tsfrac{h}{2} V \nabla \log \pi(\theta y + (1-\theta)x) + \sqrt{h} \nu
$$
where $\nu \sim \normal(0,V)$.  When the target is Gaussian, $\normal(A^{-1}b,A^{-1})$, this can be rewritten as 
\begin{equation}
\label{gen langevin prop}
	y = (I + \tsfrac{\theta h}{2} VA)^{-1} \left[ (I - \tsfrac{(1-\theta)h}{2} VA)x + \tsfrac{h}{2} V b + (h V)^{1/2} \xi \right]
\end{equation}
where $\xi \sim \normal(0,I)$, which is an AR(1) proposal.  Thus, we define a MH algorithm by
\begin{equation}
\label{GenLang}
\begin{split}
	\mbox{Target:} & \; \pi_d, \\
	\mbox{Proposal:} & \; y = (I + \tsfrac{\theta h}{2} VA)^{-1} \left[ (I - \tsfrac{(1-\theta)h}{2} VA)x + \tsfrac{h}{2} V b + (h V)^{1/2} \xi \right] \!\! , \; \mbox{for $\xi \sim \normal(0,I)$}.
\end{split}
\end{equation}

Different choices of $\theta$ and $V$ give different proposals.  For example, SLA has $\theta = 0$ and $V = I$, and pCN corresponds to $\theta = \tsfrac{1}{2}$ and $V = A^{-1}$.  Table \ref{tab1} describes several more examples.

%%%%%%%%%%%%%%%%%%%%%%%%%%%%%%%%%%%%%%
\begin{table}
\begin{center}
\begin{tabular}{|c|c|p{12cm}|}
\hline
$\theta$ & $V$ & Method \\
\hline
$0$ & $I$ & SLA, MALA and ULA \cite{RT1996, BRS2009}.  SLA=MALA for Gaussian target distributions. \\ \hline
$0$ & $A^{-1}$ & Used in \cite{PST2012}. With a change of variables $x \leftrightarrow Q^T V^{-1/2}x$, it is the Preconditioned Simplified Langevin Algorithm (P-SLA) \cite{BRS2009}. \\ \hline
$\in [0,1]$ & $I$ & $\theta$-SLA \cite{BRS2009}.  \\ \hline
$\tsfrac{1}{2}$ & $I$ & CN \cite{CRSW2013}. \\ \hline
$\tsfrac{1}{2}$ & $A^{-1}$ & pCN \cite{CRSW2013}.  %Called the scaled Stochastic Newton method in \cite{BG2014}. 
\\ \hline
\end{tabular}
\end{center}
\caption{Different choices of $\theta$ and $V$ in \eqref{gen langevin prop} lead to different proposals for the MH algorithm \eqref{GenLang}.  Other choices are possible.}
\label{tab1}
\end{table}
%%%%%%%%%%%%%%%%%%%%%%%%%%%%%%%%%%%%%%%

Applying Corollary \ref{lem equiv} we can prove the following theorem.

\begin{theorem}
\label{thm genlang}
The general Langevin proposal \eqref{gen langevin prop} corresponds to the matrix splitting
\begin{align*}
	M &= \tsfrac{2}{h} V^{-1/2} W (I + \tsfrac{\theta h}{2} B ) V^{-1/2}, &
	\mathcal{A} &= V^{-1/2} W B V^{-1/2} = \tilde{W} A, \\
	N &= \tsfrac{2}{h} V^{-1/2} W (I - \tsfrac{(1-\theta)h}{2} B) V^{-1/2}, &
	\beta &= V^{-1/2} W V^{1/2} b = \tilde{W} b,
\end{align*}
where $B = V^{1/2} A V^{1/2}$, $W = I + (\theta-\tsfrac{1}{2}) \tsfrac{h}{2} B$ and $\tilde{W} = I + (\theta-\tsfrac{1}{2}) \tsfrac{h}{2} AV$.
\end{theorem}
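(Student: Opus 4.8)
The plan is to recognise \eqref{gen langevin prop} as an AR(1) proposal $y = Gx + g + \nu$ and then invoke Corollary \ref{lem equiv}, whose explicit formulas $M = \Sigma^{-1}(I+G)$, $N = MG$, $\mathcal{A} = M(I-G)$ and $\beta = Mg$ reduce the theorem to simplification. Reading off \eqref{gen langevin prop}, I would set $G = (I + \tfrac{\theta h}{2}VA)^{-1}(I - \tfrac{(1-\theta)h}{2}VA)$ and $g = \tfrac{h}{2}(I + \tfrac{\theta h}{2}VA)^{-1}Vb$, and since $\nu = (I + \tfrac{\theta h}{2}VA)^{-1}(hV)^{1/2}\xi$ with $(VA)^T = AV$, also $\Sigma = h(I + \tfrac{\theta h}{2}VA)^{-1}V(I + \tfrac{\theta h}{2}AV)^{-1}$.

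The key device is to pass to the symmetric matrix $B = V^{1/2}AV^{1/2}$, using the similarities $VA = V^{1/2}BV^{-1/2}$ and $AV = V^{-1/2}BV^{1/2}$. These give $G = V^{1/2}\hat{G}V^{-1/2}$ with $\hat{G} = (I + \tfrac{\theta h}{2}B)^{-1}(I - \tfrac{(1-\theta)h}{2}B)$ a rational function of $B$, and $\Sigma = hV^{1/2}(I + \tfrac{\theta h}{2}B)^{-2}V^{1/2}$. Because $\hat{G}$ and $(I + \tfrac{\theta h}{2}B)^{-2}$ are both functions of the single symmetric matrix $B$, they commute and their product is symmetric, so $G\Sigma = hV^{1/2}[\hat{G}(I + \tfrac{\theta h}{2}B)^{-2}]V^{1/2}$ is symmetric; this verifies the hypothesis of Corollary \ref{lem equiv} that $G\Sigma$ be symmetric. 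For the spectral-radius hypothesis I would note that the eigenvalues of $G$ coincide with those of $\hat{G}$, namely $(1 - \tfrac{(1-\theta)h}{2}\mu)/(1 + \tfrac{\theta h}{2}\mu)$ for $\mu > 0$ an eigenvalue of $B$; these lie strictly below $1$ always, and exceed $-1$ provided $2 + \tfrac{h}{2}(2\theta-1)\mu > 0$, which holds automatically for $\theta \geq \tfrac12$ and for $\theta < \tfrac12$ under the mild restriction $h\,\mu_{\max}(1-2\theta) < 4$.

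With the hypotheses in hand I would evaluate the four formulas. The telescoping identities $I + \hat{G} = 2(I + \tfrac{\theta h}{2}B)^{-1}W$ with $W = I + (\theta-\tfrac12)\tfrac{h}{2}B$, and $I - \hat{G} = \tfrac{h}{2}(I + \tfrac{\theta h}{2}B)^{-1}B$, together with $\Sigma^{-1} = \tfrac1h V^{-1/2}(I + \tfrac{\theta h}{2}B)^2 V^{-1/2}$, collapse $M = \Sigma^{-1}(I+G)$ to $\tfrac2h V^{-1/2}W(I + \tfrac{\theta h}{2}B)V^{-1/2}$ once the interior $V^{-1/2}V^{1/2}$ cancels and $W$ is commuted past $(I + \tfrac{\theta h}{2}B)$. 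The same cancellations give $\mathcal{A} = M(I-G) = V^{-1/2}WBV^{-1/2}$ and $N = MG = \tfrac2h V^{-1/2}W(I - \tfrac{(1-\theta)h}{2}B)V^{-1/2}$, and $\beta = Mg = V^{-1/2}WV^{1/2}b$. To match the stated forms I would use $BV^{-1/2} = V^{1/2}A$ to write $\mathcal{A} = V^{-1/2}WV^{1/2}A$ and, from $V^{-1/2}BV^{1/2} = AV$, identify $V^{-1/2}WV^{1/2} = I + (\theta-\tfrac12)\tfrac{h}{2}AV = \tilde{W}$, yielding $\mathcal{A} = \tilde{W}A$ and $\beta = \tilde{W}b$.

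The main obstacle is bookkeeping rather than any deep step: keeping the conjugating factors $V^{\pm 1/2}$ straight so that the interior pairs cancel and the rational functions of $B$ combine correctly. The one genuine subtlety is the spectral-radius hypothesis of Corollary \ref{lem equiv}, which is not automatic when $\theta < \tfrac12$; if one prefers to avoid any restriction on $h$, the four splitting identities $M - N = \mathcal{A}$, $M^{-1}N = G$, $M^{-1}\beta = g$ and $M^{-1}(M^T+N)M^{-T} = \Sigma$ can instead be checked directly from the formulas above (using the remark following Theorem \ref{thm:2.1}, which requires only that $M$ be invertible), bypassing Corollary \ref{lem equiv} altogether.
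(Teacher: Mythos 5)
Your proposal is correct and follows essentially the same route as the paper: identify $G$, $g$, $\Sigma$ from \eqref{gen langevin prop}, conjugate by $V^{1/2}$ to express everything through the symmetric matrix $B = V^{1/2}AV^{1/2}$, verify that $G\Sigma$ is symmetric, and then apply Corollary \ref{lem equiv}. You go slightly further than the paper in two welcome respects — carrying out the algebra that the paper compresses into ``applying Corollary \ref{lem equiv} then yields the result,'' and flagging that the spectral-radius hypothesis of the corollary is not automatic for $\theta < \tsfrac{1}{2}$ (with a direct-verification fallback), a point the paper's proof passes over in silence.
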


\begin{proof}
With iteration matrix $G = (I + \frac{\theta h}{2}VA)^{-1}(I - \tsfrac{(1-\theta)h}{2} VA)$, vector $g = \tsfrac{h}{2}(I + \frac{\theta h}{2}VA)^{-1} V b$ and matrix $\Sigma = (I + \frac{\theta h}{2}VA)^{-1} (hV) (I + \frac{\theta h}{2}AV)^{-1}$, then \eqref{gen langevin prop} is the same as \eqref{eq ar1}.  To apply Corollary \ref{lem equiv} we first check that $G \Sigma$ is symmetric.  We have
$$
	G = V^{1/2} (I + \tsfrac{\theta h}{2} B)^{-1} (I - \tsfrac{(1-\theta)h}{2}B) V^{-1/2}
	\quad \mbox{and} \quad
	\Sigma = h V^{1/2} (I + \tsfrac{\theta h}{2}B)^{-2} V^{1/2}
$$
so that 
\begin{align*}
	G \Sigma = h V^{1/2} (I + \tsfrac{\theta h}{2}B)^{-1} (I - \tsfrac{(1-\theta)h}{2}B) (I + \tsfrac{\theta h}{2}B)^{-2} V^{1/2}
%	\Sigma^{-1}G &= \tsfrac{1}{h}V^{-1/2} (I + \tsfrac{\theta h}{2} B)(I - \tsfrac{(1-\theta)h}{2}B) V^{-1/2} \\
%	\Sigma^{-1}G^2 &= \tsfrac{1}{h}V^{-1/2} (I - \tsfrac{(1-\theta)h}{2}B)^2 V^{-1/2}
\end{align*}
which is symmetric since $V$ and $B$ are symmetric.  Applying Corollary \ref{lem equiv} then yields the result.
\end{proof}

Because of Theorem \ref{thm genlang}, the proposal chain for \eqref{gen langevin prop} converges to $\normal(A^{-1}b, (\tilde{W}A)^{-1})$, and if $\theta \neq \tsfrac{1}{2}$, then $\tilde{W} \neq I$, $\mathcal{A} \neq A$, and the target reference and proposal limit distributions disagree.  

If $\theta = \tsfrac{1}{2}$, then the proposal limit and target reference distributions are the same.  If, in addition, the target is Gaussian, then the MH accept/reject step is redundant and we can use \cite{F2013,FP2016,FP2014} to analyse and accelerate the proposal chain generated by \eqref{gen langevin prop}.  

To evaluate the performance of the MH algorithm \eqref{GenLang} when $\mathcal{A} \neq A$ we would like to be able to apply Theorems \ref{thm:ng1} and \ref{thm:ng2}, but we see in Theorem \ref{thm genlang} that the splitting matrices are functions of $B$ and $V$ (not $A$) so we cannot directly apply our new theory.  However, a change of coordinates will fix this!  The following lemma is a result of simple algebra and Theorem \ref{thm genlang}.

\begin{lemma}
\label{lem ch1}
Under the change of coordinates 
$$
	x \leftrightarrow V^{-1/2} x
$$
the MH algorithm \eqref{GenLang} is transformed to the MH algorithm defined by
\begin{equation}
\label{eq hat1}
\begin{split}
	\mbox{Target:} & \quad \pi_d \quad \mbox{where } \tsfrac{\dd \pi_d}{\dd \tilde{\pi}_d}(x) = \exp( - \phi_d(V^{1/2}x)) \mbox{ and }\tilde{\pi}_d \mbox{ is }  \normal(B^{-1} V^{1/2} b, B^{-1}), \\
	\mbox{Proposal:} & \quad (I + \tsfrac{\theta h}{2} B) y = (I - \tsfrac{(1-\theta)h}{2} B) x + \tsfrac{h}{2} V^{1/2} b + h^{1/2} \xi, \quad \mbox{where $\xi \sim \normal(0,I)$}, 
\end{split}
\end{equation}
and $B = V^{1/2} A V^{1/2}$.  Moreover, the proposal in \eqref{eq hat1} corresponds to the matrix splitting $B = M-N$ where
\begin{align*}
	M &= \tsfrac{2}{h} W (I + \tsfrac{\theta h}{2} B ), &
	\mathcal{A} &= W B, \\
	N &= \tsfrac{2}{h} W (I - \tsfrac{(1-\theta)h}{2} B), &
	\beta &= W V^{1/2} b,
\end{align*}
and $W = I + (\theta-\tsfrac{1}{2}) \tsfrac{h}{2} B$.
\end{lemma}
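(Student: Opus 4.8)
The plan is to establish the lemma in two stages: first verify that the stated change of coordinates maps the algorithm \eqref{GenLang} to \eqref{eq hat1}, and then obtain the matrix splitting by recognising the transformed proposal as a special case of Theorem \ref{thm genlang}.

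For the first stage I would write the new state as $\hat{x} = V^{-1/2} x$ (so $x = V^{1/2}\hat{x}$) and substitute into the proposal in \eqref{GenLang}. The one identity that does the work is
\[
	V^{-1/2}(I + \tsfrac{\theta h}{2} VA)^{-1} = \left( (I + \tsfrac{\theta h}{2} VA) V^{1/2} \right)^{-1} = (I + \tsfrac{\theta h}{2} B)^{-1} V^{-1/2},
\]
with $B = V^{1/2} A V^{1/2}$, after which the remaining factors simplify termwise via $V^{-1/2}(I - \tsfrac{(1-\theta)h}{2}VA)V^{1/2} = I - \tsfrac{(1-\theta)h}{2} B$, via $V^{-1/2}\,\tsfrac{h}{2}Vb = \tsfrac{h}{2}V^{1/2}b$, and via $V^{-1/2}(hV)^{1/2} = h^{1/2} I$. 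Multiplying the resulting identity through by $(I + \tsfrac{\theta h}{2} B)$ on the left then yields exactly the proposal in \eqref{eq hat1}. The only thing to watch here is keeping the $V^{\pm 1/2}$ factors on the correct side; everything else is routine bookkeeping.

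For the target I would push the reference $\normal(A^{-1}b, A^{-1})$ forward under $\hat{x} = V^{-1/2}x$. The covariance becomes $V^{-1/2}A^{-1}V^{-1/2} = (V^{1/2}AV^{1/2})^{-1} = B^{-1}$ and the mean becomes $V^{-1/2}A^{-1}b = B^{-1}V^{1/2}b$, giving the stated reference $\normal(B^{-1}V^{1/2}b, B^{-1})$. Since a bijective linear change of variables leaves a Radon--Nikodym derivative invariant up to precomposition with the inverse map, the density ratio $\exp(-\phi_d(x))$ transforms to $\exp(-\phi_d(V^{1/2}\hat{x}))$, as claimed.

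For the splitting, the cleanest route is to observe that the proposal in \eqref{eq hat1} is precisely the general Langevin proposal \eqref{gen langevin prop} under the substitutions $A \mapsto B$, $V \mapsto I$ and $b \mapsto V^{1/2}b$ (noting $(hI)^{1/2} = h^{1/2}I$). Applying Theorem \ref{thm genlang} with these substitutions---under which the theorem's matrix $V^{1/2}AV^{1/2}$ becomes $B$, and $\tilde{W} = I + (\theta - \tsfrac{1}{2})\tsfrac{h}{2}AV$ collapses to $W = I + (\theta - \tsfrac{1}{2})\tsfrac{h}{2}B$ because $V = I$---immediately gives $M = \tsfrac{2}{h}W(I + \tsfrac{\theta h}{2}B)$, $N = \tsfrac{2}{h}W(I - \tsfrac{(1-\theta)h}{2}B)$, $\mathcal{A} = WB$ and $\beta = WV^{1/2}b$, the asserted splitting of $B = M - N$. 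Equivalently one could invoke Corollary \ref{lem equiv} directly, observing that $G$ and $\Sigma$ are both functions of the symmetric matrix $B$ and hence commute, so that $G\Sigma$ is automatically symmetric. There is no substantive obstacle beyond the careful placement of the $V^{\pm 1/2}$ factors in the first stage.
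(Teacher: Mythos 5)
Your argument is correct and follows the paper's intended route: the paper dispenses with this lemma as ``simple algebra and Theorem \ref{thm genlang}'', and you have supplied exactly that -- the conjugation identity $V^{-1/2}(I+\tsfrac{\theta h}{2}VA)^{-1} = (I+\tsfrac{\theta h}{2}B)^{-1}V^{-1/2}$ for the coordinate change, and Theorem \ref{thm genlang} specialised to $A\mapsto B$, $V\mapsto I$, $b\mapsto V^{1/2}b$ (equivalently Corollary \ref{lem equiv} applied directly, since $G$ and $\Sigma$ are functions of $B$) for the splitting. No gaps.
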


Thus, we have transformed the MH algorithm \eqref{GenLang} to a MH algorithm where the splitting matrices are functions of the target reference precision matrix, and we can apply Theorems \ref{thm:ng1} and \ref{thm:ng2} to \eqref{eq hat1} to find the expected acceptance rate and expected jump size of \eqref{GenLang}.  Note that we never compute the Markov chain for \eqref{eq hat1}, we only use it to determine the convergence properties of \eqref{GenLang} since they are identical.

\begin{theorem}
\label{thm genlang conv}
Suppose there are constants $c,C > 0$ and $\kappa \geq 0$ such that the eigenvalues $\lambda_i^2$ of $B = V^{1/2} A V^{1/2}$ (equivalently, $\lambda_i^2$ are eigenvalues of $VA$) satisfy
\[
	c i^\kappa \leq \lambda_i \leq C i^\kappa \qquad \mbox{for $i=1,\dotsc,d$}.
\]
Also suppose that $\psi_d(x) := \phi_d(V^{1/2}x)$ satisfies Assumption \ref{assumption1} (with $|\cdot|_s = |B^s \cdot|$ for $s \in \bbR$) and $r = \max \{ s,s',s'' \}$ satisfies Assumption \ref{assumption2}.

If $h = l^2 d^{-1/3 - 2 \kappa}$ for $l > 0$ and $\tau = \lim_{d\rightarrow \infty} \frac{1}{d^{6\kappa + 1}} \sum_{i=1}^d \lambda_i^6$ then MH algorithm \eqref{GenLang}, in equilibrium, satisfies
\begin{equation}
\label{eq expect}
	\mathrm{E}[\alpha(x,y)] \rightarrow 2 \Phi\left( -\frac{l^3 |\theta-\tsfrac{1}{2}| \sqrt{\tau}}{4} \right) 
\end{equation}
and for normalised eigenvector $q_i$ of $B$ corresponding to $\lambda_i^2$,
\begin{equation}
\label{eq jump2}
	\mathrm{E}[ |q_i^T V^{-1/2} (x' - x)|^2 ] = 2 h \Phi\left( -\frac{l^3 |\theta-\tsfrac{1}{2}| \sqrt{\tau}}{4} \right) + \mathrm{o}(h)
\end{equation}
as $d \rightarrow \infty$.
\end{theorem}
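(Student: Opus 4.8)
The plan is to reduce the MH algorithm \eqref{GenLang} to the equivalent algorithm \eqref{eq hat1} via Lemma \ref{lem ch1}, whose splitting matrices $M,N$ are functions of the reference precision matrix $B=V^{1/2}AV^{1/2}$; this is exactly what is needed to bring Theorems \ref{thm:ng1} and \ref{thm:ng2} to bear. Because the two chains have identical convergence behaviour, and because the quantity $\mathrm{E}[|q_i^T V^{-1/2}(x'-x)|^2]$ for \eqref{GenLang} equals the $i$th coordinate jump size of the transformed chain (with $q_i$ a normalised eigenvector of $B$), it suffices to analyse \eqref{eq hat1}. I will work on the eigenspaces of $B$, writing $\lambda_i^2$ for its eigenvalues, so that all of $G_i,\tilde\lambda_i^2,r_i,\tilde r_i,\hat r_i$ and the $T_{ji}$ of Section \ref{sec notation} are scalar functions of $\lambda_i^2$, $h$ and $\theta$.

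First I would compute these spectral quantities explicitly. From $G=(I+\tsfrac{\theta h}{2}B)^{-1}(I-\tsfrac{(1-\theta)h}{2}B)$ and $\mathcal{A}=WB$, I obtain $G_i=(1-\tsfrac{(1-\theta)h}{2}\lambda_i^2)/(1+\tsfrac{\theta h}{2}\lambda_i^2)$, hence $\tilde g_i=\tsfrac h2\lambda_i^2/(1+\tsfrac{\theta h}{2}\lambda_i^2)$, $\tilde\lambda_i^2=\lambda_i^2(1+(2\theta-1)\tsfrac h4\lambda_i^2)$ and $r_i=(1-2\theta)\tsfrac h4\lambda_i^2$. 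The decisive simplification is that $\mathcal{A}^{-1}\beta=(WB)^{-1}WV^{1/2}b=B^{-1}V^{1/2}b$ coincides with the target mean, so $\hat r_i=0$ for every $i$ and therefore $T_{0i}=T_{1i}=T_{2i}=0$. The eigenvalue bounds $ci^\kappa\le\lambda_i\le Ci^\kappa$ and the choice $h=l^2 d^{-1/3-2\kappa}$ give $\max_i h\lambda_i^2\le C^2 l^2 d^{-1/3}\to 0$, which justifies uniform leading-order expansions of all these quantities.

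Next I would substitute into $\mu_{d,i}=T_{3i}+T_{4i}=\tsfrac12 r_i g_i(1-\tilde r_i)$ and $\sigma_{d,i}^2=2T_{3i}^2+2T_{4i}^2+T_{5i}^2$. The key point is a cancellation: the leading $\mathcal{O}(h^2\lambda_i^4)$ parts of $T_{3i}$ and $T_{4i}$ cancel, leaving $\mu_{d,i}$ of genuine order $h^3\lambda_i^6$; summing and using $\tau=\lim_{d\to\infty}d^{-1-6\kappa}\sum_i\lambda_i^6$ gives $\mu=-\tsfrac{(\theta-1/2)^2}{8}l^6\tau$. Similarly $T_{3i}^2,T_{4i}^2=\mathcal{O}(h^4\lambda_i^8)$ are negligible beside $T_{5i}^2=\mathcal{O}(h^3\lambda_i^6)$, so $\sigma^2=\tsfrac{(\theta-1/2)^2}{4}l^6\tau$. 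The crucial structural identity is then $\sigma^2=-2\mu$ (equivalently $\mathrm{E}[\mathrm e^Z]\to 1$), which collapses the two-term limit of Theorem \ref{thm:ng1} to $2\Phi(-\tsfrac\sigma2)=2\Phi(-\tsfrac{l^3|\theta-1/2|\sqrt\tau}{4})$. To invoke Theorem \ref{thm:ng1} I would verify its hypotheses with $t_{d,i}=h\lambda_i=\mathcal{O}(d^{-1/3-\kappa})$: the quantities $\tilde g_i^2\hat r_i^2\lambda_i,\ \tilde g_i^2\lambda_i^{-1},\ g_i\tilde\lambda_i^{-1}$ are $\mathcal{O}(t_{d,i})$, $\tilde r_i$ is bounded uniformly, each $T_{ji}=\mathcal{O}(d^{-1/2})$ (the binding case being $T_{5i}=\mathcal{O}(h^{3/2}\lambda_i^3)=\mathcal{O}(d^{-1/2})$), and $\sum_i\sigma_{d,i}^2$ converges. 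Since moreover $\sum_i T_{1i}^2+T_{3i}^2=\mathcal{O}(d^{-1/3})\to 0$, Corollary \ref{cor:ng} yields $\mu_{ng}=\mu$ and $\sigma_{ng}^2=\sigma^2$, which establishes \eqref{eq expect}.

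Finally, for \eqref{eq jump2} I would apply Theorem \ref{thm:ng2}. With $\hat r_i=0$ its formula reduces to $\tsfrac{g_i}{\tilde\lambda_i^2}\mathrm{E}_{\pi_d}[\alpha(x,y)]+\tsfrac{\tilde g_i^2\gamma_{d,i}}{\lambda_i^2}v_{d,i}+\mathrm{o}(t_{d,i}\lambda_i^{-1})$; a direct computation gives the exact identity $g_i/\tilde\lambda_i^2=h/(1+\tsfrac{\theta h}2\lambda_i^2)^2=h+\mathrm{o}(h)$ uniformly, the middle term is $\mathcal{O}(h^2\lambda_i^2)=\mathrm{o}(h)$, and $t_{d,i}\lambda_i^{-1}=h$ makes the error $\mathrm{o}(h)$. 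Combining with \eqref{eq expect} gives $2h\Phi(-\tsfrac{l^3|\theta-1/2|\sqrt\tau}{4})+\mathrm{o}(h)$. The principal obstacle throughout is the uniform-in-$i$ order bookkeeping: securing the cancellation that forces $\mu$ down to order $h^3\lambda_i^6$, and confirming the exact relation $\sigma^2=-2\mu$, since this is precisely what produces the clean factor $2\Phi$ in place of the general expression of Theorem \ref{thm:ng1}.
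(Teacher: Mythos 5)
Your proposal is correct and follows essentially the same route as the paper's proof: reduce to the transformed algorithm \eqref{eq hat1} via Lemma \ref{lem ch1}, compute the spectral quantities on the eigenspaces of $B$ (with the key observations $\hat r_i=0$ and hence $T_{0i}=T_{1i}=T_{2i}=0$), verify the hypotheses of Theorems \ref{thm:ng1}--\ref{thm:ng2} with $t_{d,i}=h\lambda_i$ and invoke Corollary \ref{cor:ng}, and exploit the cancellation giving $\mu=-\sigma^2/2$ to collapse the limit to $2\Phi(-\sigma/2)$. The order bookkeeping you identify as the main burden (in particular $T_{5i}^2$ dominating $\sigma^2$ and $\sum_i T_{3i}^2\to 0$ via the analogue of Lemma \ref{lem 6a}) is exactly how the paper handles it, and your explicit forms of $g_i$ and $g_i/\tilde\lambda_i^2$ are consistent with the paper's to leading order.
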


The proof of Theorem \ref{thm genlang conv} is in the Appendix.

For efficiency, as well as considering the expected squared jump size, we must also consider the computing cost of the proposal \eqref{gen langevin prop}, which requires the action of $(I + \frac{\theta h}{2}VA)^{-1}$ and an independent sample from $\normal(0,V)$, as well as the actions of $V$ and $A$ multiplying a vector.

We see that $V$ plays a similar role to a preconditioner in solving a linear system.  For a linear system we choose $V$ to be cheap to compute matrix multiplication and to minimise the condition number of $VA$.  For the MH algorithm \eqref{GenLang} we choose it so that  multiplying with $V$ and sampling from $\normal(0,V)$ are cheap to compute and to minimise $\tau$.  In both cases $V$ is chosen to `control' the eigenvalues of $VA$.

Although SLA and more general discretizations of Langevin diffusion have been successfully analyzed in \cite{BRS2009}, all of these results are stated for target distributions that are a change of measure from product measures.  Theorem \ref{thm genlang conv} extends their theory (in particular \cite[Cor. 1]{BRS2009}) to the case where the reference measure $\tilde{\pi}_d$ may be Gaussian with off-diagonal covariance terms and $\theta \in [0,1]$.  See also \cite[Thm. 7]{RR2001}.

With $\theta = 0$ and $V = I$, SLA is very cheap to compute because we only have to invert the identity matrix and sample from $\normal(0,I)$ at each iteration (as well as multiply by $A$).  Alternatively, pCN, with $\theta = \tsfrac{1}{2}$ and $V = A^{-1}$, requires a sample from $\normal(0,A^{-1})$ which may be computationally expensive, particularly in high dimensions.

%%%%%%%%%%%%%%%%%%%%%%%%%%%%%%%%%%%%%%%%%%%%%%%%%%%%%%%%%%%
%%%%%%%%%%%%%%%%%%%%%%%%%%%%%%%%%%%%%%%%%%%%%%%%%%%%%%%%%%%
%%%%%%%%%%%%%%%%%%%%%%%%%%%%%%%%%%%%%%%%%%%%%%%%%%%%%%%%%%%
%%%%%%%%%%%%%%%%%%%%%%%%%%%%%%%%%%%%%%%%%%%%%%%%%%%%%%%%%%%
\section{$L$-step methods}

Given an AR(1) proposal of the form \eqref{eq ar1}, we can form a new AR(1) proposal by taking $L$ steps of the original proposal before performing the MH accept/reject step.  This may be advantageous when the cost of evaluating $\phi_d$ is significant.  The $L$-step proposal is computed by iterating
$$
	y^{(l)} = G y^{(l-1)} + g + \nu^{(l)} \qquad \mbox{for $l=1,\dotsc,L$},
$$
where $\nu^{(l)}$ is an i.i.d. draw from $\normal(0,\Sigma)$ and $y^{(0)} = x$.  This yields a new proposal in the form of \eqref{eq ar1},
\begin{equation}
\label{eq:lprop}
	y = G_L x + g_L + \nu_L, \qquad \mbox{with } \nu_L \sim \normal(0,\Sigma_L)
\end{equation}
where $G_L = G^L$, $g_L = (I-G)^{-1}(I-G^L)g$ and $\Sigma_L = \sum_{l=0}^{L-1} G^l \Sigma (G^T)^l$.  Hence, the eigenvalues of $G_L$ are $G_i^L$, and if $G_i < 1$ then the $L$-step proposal chain will converge to the same limit as the $1$-step proposal chain (i.e. $\mathcal{A}_L = \mathcal{A}$ and $\beta_L = \beta$).  

We can reduce the computational cost of evaluating the acceptance ratio for the $L$-step proposal using the surrogate transition method \cite[p.194]{Liu2001book}.  The proof of the following lemma is in the Appendix.

\begin{lemma}
\label{lem:lstepalpha}
The $L$-step acceptance probability satisfies
$$
	\alpha(x,y) 
	= 1 \wedge \frac{ \pi_d(y) q_L(y,x)}{ \pi_d(x) q_L(x,y) }
	= 1 \wedge \frac{ \pi_d(y) \pi^*(x)}{\pi_d(x) \pi^*(y)}
$$
where $q_L(x,dy) = q_L(x,y) \dd y$ is the transition kernel for the $L$-step proposal $y$ given $x$ from \eqref{eq:lprop} and $\pi^*(x) \propto \exp( -\frac{1}{2} x^T \mathcal{A} x + \beta^T x )$.  
\end{lemma}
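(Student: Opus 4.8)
The plan is to read this as an instance of the surrogate transition method, in which the $L$-step proposal kernel $q_L$ is $L$ iterations of the single-step AR(1) kernel $q$, and the ``surrogate'' is the proposal limit distribution $\pi^* = \normal(\mathcal{A}^{-1}\beta, \mathcal{A}^{-1})$. The first equality in the statement is just the definition of the Metropolis--Hastings acceptance probability for the proposal kernel $q_L$ from \eqref{eq:lprop}. For the second equality it suffices to show that $q_L$ is reversible with respect to $\pi^*$, i.e. $\pi^*(x) q_L(x,y) = \pi^*(y) q_L(y,x)$, since this gives $q_L(y,x)/q_L(x,y) = \pi^*(x)/\pi^*(y)$ and hence the claimed cancellation in the Hastings ratio.

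First I would establish detailed balance for a single step. Writing $q(x,y)$ as the $\normal(Gx+g,\Sigma)$ density and using $\mathcal{A}^{-1} = \sum_{l\geq 0} G^l \Sigma (G^T)^l$ together with $\Sigma = \mathcal{A}^{-1} - G\mathcal{A}^{-1}G^T$ from Theorem \ref{thm:2.1}, the identity $\pi^*(x)q(x,y) = \pi^*(y)q(y,x)$ reduces, after completing the square, to the symmetry of the joint precision matrix of $(x,y)$ under $\pi^* \otimes q$. The only off-diagonal obstruction is a cross term of the form $x^T(\Sigma^{-1}G - G^T\Sigma^{-1})y$, which vanishes precisely when $G\mathcal{A}^{-1}$ (equivalently $G\Sigma$) is symmetric. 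This symmetry is exactly the hypothesis of Corollary \ref{lem equiv}, under which the splitting is symmetric, so the single-step proposal is reversible with respect to $\pi^*$ for the proposals of interest (for instance the SLA-based proposals).

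Next I would promote single-step reversibility to $q_L$. The cleanest route is operator-theoretic: a kernel is reversible with respect to $\pi^*$ iff it is self-adjoint on $L^2(\pi^*)$, and $q_L = q^L$ is then self-adjoint as a power of a self-adjoint operator, hence reversible; the same conclusion follows by an induction on $L$ through the convolution $q^{k+1}(x,y) = \int q(x,z)\,q^k(z,y)\,\dd z$. Along the way I would record that the $L$-step chain shares the proposal limit of the one-step chain, i.e. $\mathcal{A}_L = \mathcal{A}$ and $\beta_L = \beta$; this is a short telescoping computation from $G_L = G^L$, $\Sigma_L = \sum_{l=0}^{L-1} G^l \Sigma (G^T)^l$ and $\mathcal{A}_L^{-1} = \sum_{k\geq 0} G^{Lk}\Sigma_L (G^T)^{Lk}$, and it guarantees that the $\pi^*$ governing detailed balance for $q_L$ is the same $\pi^* \propto \exp(-\tfrac{1}{2}x^T\mathcal{A}x + \beta^T x)$ as in the statement.

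The main obstacle is the single-step detailed balance computation: isolating the off-diagonal cross term and verifying that it vanishes exactly under symmetry of $G\mathcal{A}^{-1}$. Once that symmetry is in hand, the promotion to $q_L$ and the final cancellation are routine. An alternative to the operator argument, if one prefers to stay self-contained, is a direct Gaussian calculation: write $q_L(x,y) = \normal(y; G^Lx + g_L, \Sigma_L)$, form $\log(q_L(y,x)/q_L(x,y))$, and check term-by-term that it equals $-\tfrac{1}{2}x^T\mathcal{A}x + \tfrac{1}{2}y^T\mathcal{A}y + \beta^T(x-y)$, using $\Sigma_L = \mathcal{A}^{-1} - G^L\mathcal{A}^{-1}(G^T)^L$ and the symmetry of $G^L\mathcal{A}^{-1}$; this is more computational but avoids the $L^2(\pi^*)$ formalism.
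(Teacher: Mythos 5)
Your proposal is correct and follows essentially the same route as the paper: establish detailed balance of the one-step kernel $q$ with respect to $\pi^*$, promote it to $q_L$ via the convolution $q_L(x,y)=\int\!\cdots\!\int q(x,y^{(1)})\cdots q(y^{(L-1)},y)\,\dd y^{(1)}\cdots\dd y^{(L-1)}$, and then cancel in the Hastings ratio. The only cosmetic difference is that the paper obtains the one-step reversibility by rerunning the computation of Lemma \ref{lem1} with $\pi_d$ replaced by $\pi^*$ (so the log-ratio vanishes identically), whereas you verify it by a direct Gaussian completion of squares; both hinge on the same symmetry of the splitting, so the arguments are equivalent.
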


%This is possible since the original proposal (surrogate) satisfies detailed balance with respect to $\pi^*$ (replace $\pi_d$ with $\pi^*$ in the proof of Lemma \ref{lem1}).  Then 

The computational cost of the $L$-step proposal is $L$ times the cost of the original proposal, but the expected squared jump size for the $L$-step method is, in general, not $L$ times the original.  For example, let us consider $L$-step SLA, where $G = (I-\tsfrac{h}{2}A)$, $\beta = \tsfrac{h}{2}b$ and $\Sigma = hI$, and the proposal is given by \eqref{eq:lprop}.  The proof of the following theorem is in the Appendix.

\begin{theorem}
\label{thm:lstep}
Suppose there exist constants $c,C>0$ and $\kappa \geq 0$ such that the eigenvalues $\lambda_i^2$ of $A$ satisfy
$$
	c i^\kappa \leq \lambda_i \leq C i^\kappa \qquad \mbox{for $i=1,\dotsc,d$}.
$$
Also suppose that $\phi_d$ satisfies Assumption \ref{assumption1} and $r = \max \{ s,s',s'' \}$ satisfies Assumption \ref{assumption2}.

If $h = l^2 d^{-1/3-2\kappa}$ for some $l >0$ then $L$-step SLA, in equilibrium, satisfies
\begin{equation}
\label{eq:lstepa}
	\mathrm{E}[\alpha(x,y)] \rightarrow 2 \Phi\left( - \tsfrac{l^3 \sqrt{L\tau}}{8} \right)
\end{equation}
and
\begin{equation}
\label{eq ljump}
	\mathrm{E}[(x_i'-x_i)^2] = 2 L h \Phi\left( -\tsfrac{l^3 \sqrt{L\tau}}{8} \right) + \mathrm{o}(h)
\end{equation}
as $d \rightarrow \infty$ where $\tau = \lim_{d \rightarrow \infty} \frac{1}{d^{1 + 6 \kappa}} \sum_{i=1}^d \lambda_i^6$.
\end{theorem}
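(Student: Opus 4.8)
The plan is to recognise $L$-step SLA as an AR(1) proposal of the form \eqref{eq:lprop} and then apply the general non-Gaussian theory of Theorems~\ref{thm:ng1} and~\ref{thm:ng2}. From Theorem~\ref{thm MALA split}, the one-step SLA splitting has $G = I - \tfrac{h}{2}A$ and $\mathcal{A} = (I-\tfrac{h}{4}A)A$, $\beta = (I-\tfrac{h}{4}A)b$; since $G_i = 1 - \tfrac{h}{2}\lambda_i^2 < 1$ we have $\mathcal{A}_L = \mathcal{A}$ and $\beta_L = \beta$, so that $\mathcal{A}^{-1}\beta = A^{-1}b$ and hence $\hat{r}_i = 0$ for every $i$. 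The eigenvalue of the $L$-step iteration matrix is $G_i^L = (1-\tfrac{h}{2}\lambda_i^2)^L$, while $\tilde\lambda_i^2 = (1-\tfrac{h}{4}\lambda_i^2)\lambda_i^2$ is unchanged. Because $G_L = G^L$ and $\Sigma_L = \sum_{l=0}^{L-1}G^l\Sigma(G^T)^l$ are polynomials in $A$, the hypothesis that $G$ and $\Sigma$ are functions of $A$ holds, and Lemmas~\ref{lem2} and~\ref{lem:phi} apply.

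First I would assemble the quantities of Section~\ref{sec notation}. Since $ci^\kappa \le \lambda_i \le Ci^\kappa$ and $h = l^2 d^{-1/3-2\kappa}$, we have $h\lambda_i^2 = \mathcal{O}(d^{-1/3})$ uniformly in $i$, so I can Taylor expand every quantity in the small parameter $h\lambda_i^2$. With $\hat{r}_i = 0$ the only surviving terms are $T_{3i} = \tfrac12 r_i g_i$, $T_{4i} = -\tfrac12 r_i\tilde r_i g_i$ and $T_{5i} = -r_i G_i^L(\tilde r_i g_i)^{1/2}$ (with $T_{0i}=T_{1i}=T_{2i}=0$), where $r_i = \tfrac{h}{4}\lambda_i^2$, $g_i = 1-G_i^{2L} = Lh\lambda_i^2 + o(h\lambda_i^2)$ and $\tilde r_i = (1-\tfrac{h}{4}\lambda_i^2)^{-1} = 1 + o(1)$. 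The key bookkeeping step is to track orders: $T_{3i}^2$ and $T_{4i}^2$ are $\mathcal{O}(h^4\lambda_i^8)$, so $\sum_i T_{3i}^2, \sum_i T_{4i}^2 = \mathcal{O}(d^{-1/3}) \to 0$, whereas $T_{5i}^2 = \tfrac{Lh^3}{16}\lambda_i^6 + o(h^3\lambda_i^6)$ and summing against $\tfrac{1}{d^{1+6\kappa}}\sum\lambda_i^6 \to \tau$ gives $\sum_i T_{5i}^2 \to \tfrac{Ll^6\tau}{16}$. Hence $\sigma^2 = \lim\sum\sigma_{d,i}^2 = \tfrac{Ll^6\tau}{16}$ and, expanding $T_{3i}+T_{4i} = \tfrac12 r_i g_i(1-\tilde r_i)$ to the same order, $\mu = \lim\sum\mu_{d,i} = -\tfrac{Ll^6\tau}{32} = -\tfrac12\sigma^2$. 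Along the way I verify the Lyapunov-type condition \eqref{eq Tcond} and the $\mathcal{O}(d^{-1/2})$ and summability hypotheses of Theorem~\ref{thm:ng1} directly from the eigenvalue bounds, and I invoke Corollary~\ref{cor:ng} (using $T_{1i}=0$ and $\sum T_{3i}^2\to 0$) to conclude $\mu_{ng}=\mu$ and $\sigma_{ng}^2=\sigma^2$, so the non-Gaussian acceptance rate coincides with the Gaussian limit.

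With $\mu = -\tfrac12\sigma^2$ the formula in Theorem~\ref{thm:ng1} collapses: $\mu/\sigma = -\sigma/2$, $e^{\mu+\sigma^2/2}=1$ and $-\sigma-\mu/\sigma = -\sigma/2$, so $\mathrm{E}[\alpha] \to 2\Phi(-\sigma/2) = 2\Phi(-\tfrac{l^3\sqrt{L\tau}}{8})$, which is \eqref{eq:lstepa}. For the jump size I would feed the same data into Theorem~\ref{thm:ng2}; with $\hat{r}_i=0$ its right-hand side reduces to $\tfrac{g_i}{\tilde\lambda_i^2}\mathrm{E}[\alpha] + \tfrac{\tilde g_i^2\gamma_{d,i}}{\lambda_i^2}v_{d,i} + o(t_{d,i}\lambda_i^{-1})$. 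Since $\tfrac{g_i}{\tilde\lambda_i^2} = Lh + o(h)$ and, for fixed $i$, $\tfrac{\tilde g_i^2}{\lambda_i^2} = \mathcal{O}(h^2\lambda_i^2) = o(h)$ with $\gamma_{d,i}, v_{d,i}$ bounded, the leading term is $2Lh\Phi(-\tfrac{l^3\sqrt{L\tau}}{8})$ and everything else is $o(h)$, giving \eqref{eq ljump}.

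The main obstacle I anticipate is purely the asymptotic bookkeeping: one must expand $G_i^L$ and $G_i^{2L}$ consistently to the order at which $T_{5i}^2$ dominates the variance while $T_{3i}+T_{4i}$ (a higher-order cancellation producing the factor $1-\tilde r_i$) produces exactly $-\tfrac12\sigma^2$, and one must check the uniform (in $i$) hypotheses of Theorems~\ref{thm:ng1} and~\ref{thm:ng2} from the two-sided bound $ci^\kappa\le\lambda_i\le Ci^\kappa$. The verification that $\mu=-\tfrac12\sigma^2$ — the identity responsible for the clean $2\Phi(\cdot)$ acceptance rate — is the crux and the step most prone to algebraic error.
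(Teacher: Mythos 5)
Your proposal is correct and follows essentially the same route as the paper's own proof: reduce $L$-step SLA to the framework of Theorems \ref{thm:ng1} and \ref{thm:ng2} via the matrix-splitting representation with $\hat{r}_i=0$, expand the $T_{ji}$ in $s_i = h\lambda_i^2$ so that only $T_{5i}^2$ survives in $\sigma^2 = Ll^6\tau/16$ while $T_{3i}+T_{4i}$ yields $\mu=-\sigma^2/2$, and invoke Corollary \ref{cor:ng} to pass to the non-Gaussian target. The identity $\mu=-\tfrac12\sigma^2$ and the order bookkeeping you flag as the crux are exactly the computations the paper carries out.
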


To maximise the performance of $L$-step SLA we then tune $l$ to maxmise the expected jump size.  From \eqref{eq ljump}, using $s = l (L\tau)^{1/6}/2$, we have
\begin{equation}
\label{eq:lmj}
	\max_{l > 0} 2 L l^2 d^{-1/3 - 2 \kappa} \Phi\left( -\tsfrac{l^3 \sqrt{L\tau}}{8} \right) = \max_{s>0} 8 L^{2/3} d^{-1/3 - 2 \kappa} \tau^{-1/3} s^2 \Phi( - s^3 ),
\end{equation}
which is maxmised at $s_0 = 0.8252$.  Therefore, the expected jump size of $L$-step SLA is maxmimised when the acceptance rate is $2 \Phi(-s_0^3) = 0.574$, which is the same as SLA, but this corresponds to an expected jump size that is only $L^{2/3}$ times larger than the jump size for SLA (compare \eqref{eq:lmj} and \eqref{eq:mj}) in the limit when $d \rightarrow \infty$.

To compare the efficiency $L$-step SLA for varying $L$ we must also consider the computational cost of the method.  For example, suppose that matrix-vector products with $A$ cost $1$ unit of CPU time, inner products and drawing independent samples from $\normal(0,I)$ are essentially free, and evaluating $\phi_d$ costs $t$ units of CPU time.  From Lemma \ref{lem:lstepalpha}, we can simplify the acceptance ratio for $L$-step SLA to
$$
	\alpha(x,y) = 1 \wedge \exp\left( \tsfrac{h}{4} (|A x|^2 - |Ay|^2) - \tsfrac{h}{2} b^T (Ax-Ay) + \phi_d(x) - \phi_d(y) \right) 
$$
so $L$-step SLA uses $L$ matrix vector products with $A$ per proposal and an additional matrix-vector product and two evaluations of $\phi_d$ in the acceptance ratio.  If the proposal is accepted then we can reuse some of the calculations in the acceptance ratio, but if it is rejected then a matrix-vector product and an evaluation of $\phi_d$ are wasted.  The average cost of an $L$-step SLA iteration is then
$$
	L + t + (1-\alpha)(1 + t) = 1.426 + 0.426 t + L
$$
units of CPU time, assuming that we have tuned $L$-step SLA so that the acceptance rate is $0.574$.  Also let $1$ unit of jump size be the expected jump size of $1$-step SLA, then $L$-step SLA has an expected jump size of $L^{2/3}$ units, and the `efficiency' of $L$-step SLA is calculated as jump size divided by computing cost,
$$
	\frac{L^{2/3}}{1.426 + 0.426 t + L},
$$
which is maxmised at $L = 2(1.426 + 0.426 t)$.  Our conclusion is that SLA can be improved by using $L$-step SLA with $L > 1$, and the optimal value of $L$ depends on the cost of evaluating $\phi_d$.  If $t = 0$, then $L=3$ is optimal.  Figure \ref{fig:1} shows the efficiency of $L$-step SLA for other values of $t$.  

This analysis can be repeated for other $L$-step algorithms.

\begin{figure}
\begin{center}
%\resizebox{0.49\textwidth}{!}{\input{pump-gibbs.tikz}}
%\resizebox{0.49\textwidth}{!}{\input{pump-mh.tikz}}
%\resizebox{0.7\textwidth}{!}{\input{Lstep.tikz}}
% This file was created by matlab2tikz v0.2.3.
% Copyright (c) 2008--2012, Nico Schlömer <nico.schloemer@gmail.com>
% All rights reserved.
% 
% The latest updates can be retrieved from
%   http://www.mathworks.com/matlabcentral/fileexchange/22022-matlab2tikz
% where you can also make suggestions and rate matlab2tikz.
% 
% 
% 
\begin{tikzpicture}

\begin{axis}[%
view={0}{90},
width = 3.6in,
height = 2.7in,
scale only axis,
xmin=0, xmax=11,
xlabel={$L$},
ymin=0.2, ymax=0.5,
ylabel={efficiency},
axis lines*=left,
legend style={at={(0.97,0.03)},anchor=south east,align=left}]
\addplot [
color=black,
only marks,
mark=o,
mark options={solid}
]
coordinates{
 (1,0.412201154163232)(2,0.463339478099299)(3,0.46996923250156)(4,0.464401419054505)(5,0.455029215408165)(6,0.444644121854919)(7,0.434287409212316)(8,0.424358158285593)(9,0.414996039796876)(10,0.406230424786695) 
};
\addlegendentry{$t=0$};

\addplot [
color=black,
only marks,
mark=square,
mark options={solid}
]
coordinates{
 (1,0.350631136044881)(2,0.412097884726947)(3,0.428706476309131)(4,0.430595027305151)(5,0.426739307970354)(6,0.420520536028353)(7,0.413387450296314)(8,0.406008932196508)(9,0.398705188990253)(10,0.391629162471547) 
};
\addlegendentry{$t=1$};

\addplot [
color=black,
only marks,
mark=triangle,
mark options={solid}
]
coordinates{
 (1,0.305064063453325)(2,0.371061489473632)(3,0.394104551544506)(4,0.401376568937519)(5,0.401761162161702)(6,0.398879831951513)(7,0.39440673744589)(8,0.389180774469741)(9,0.383645035549053)(10,0.378041116925621) 
};
\addlegendentry{$t=2$};

\addplot [
color=black,
only marks,
mark=diamond,
mark options={solid}
]
coordinates{
 (1,0.219490781387182)(2,0.285709332607667)(3,0.317279411691871)(4,0.333488896213571)(5,0.34175055378832)(6,0.345534454677127)(7,0.346656471203389)(8,0.346140533056421)(9,0.344596106317476)(10,0.342401064739804) 
};
\addlegendentry{$t=5$};

\addplot [
color=blue,
only marks,
mark=*,
mark options={solid,fill=black,draw=black},
forget plot
]
coordinates{
 (3,0.46996923250156) 
};
\addplot [
color=blue,
only marks,
mark=square*,
mark options={solid,fill=black,draw=black},
forget plot
]
coordinates{
 (4,0.430595027305151) 
};
\addplot [
color=blue,
only marks,
mark=triangle*,
mark options={solid,fill=black,draw=black},
forget plot
]
coordinates{
 (5,0.401761162161702) 
};
\addplot [
color=blue,
only marks,
mark=diamond*,
mark options={solid,fill=black,draw=black},
forget plot
]
coordinates{
 (7,0.346656471203389) 
};
\end{axis}
\end{tikzpicture}%
\end{center}
\caption{Efficiency of the $L$-step SLA method for varying number of steps $L$ and varying computing cost for evaluating $\phi_d$.  Filled markers correspond to maximum efficiency.}
\label{fig:1}
\end{figure}
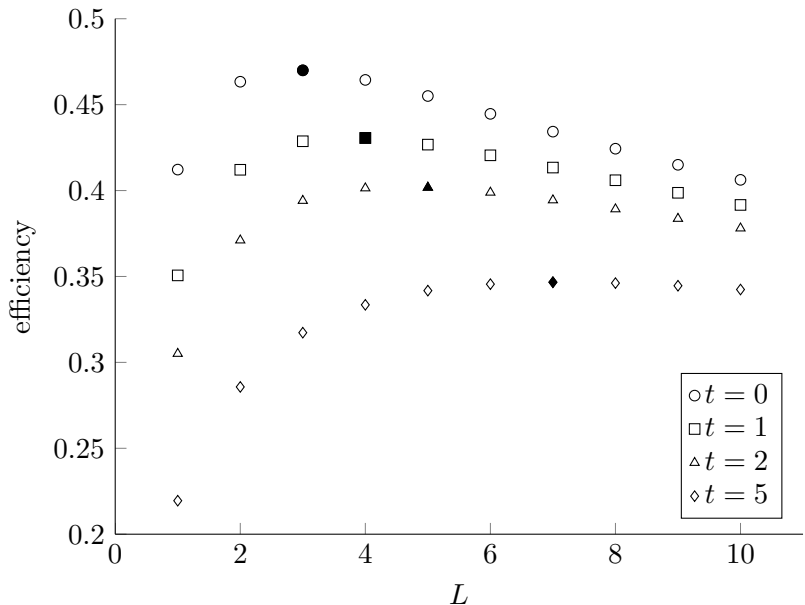

%%%%%%%%%%%%%%%%%%%%%%%%%%%%%%%%%%%%%%%%%%%%%%%%%%%%%%%%%%%
%%%%%%%%%%%%%%%%%%%%%%%%%%%%%%%%%%%%%%%%%%%%%%%%%%%%%%%%%%%
%%%%%%%%%%%%%%%%%%%%%%%%%%%%%%%%%%%%%%%%%%%%%%%%%%%%%%%%%%%
%%%%%%%%%%%%%%%%%%%%%%%%%%%%%%%%%%%%%%%%%%%%%%%%%%%%%%%%%%%
\section{Hybrid Monte Carlo}

Another type of AR(1) proposal, that fits our theory when $\pi_d$ is Gaussian, are proposals from the Hybrid (or Hamiltonian) Monte Carlo algorithm (HMC), see e.g. \cite{DKPR1987,BPRSS2010,N1993}.  For this section, suppose that the target $\pi_d$ is the Gaussian $\normal(A^{-1}b,A^{-1})$.

HMC treats the current state $x \in \bbR^d$ as the initial position of a particle, the initial momentum $p \in \bbR^d$ of the particle is chosen independently at random, and then the motion of the particle is evolved according to a Hamiltonian system for a fixed amount of time.  The final position of the particle is the proposal.  Instead of solving the Hamiltonian system exactly, the evolution of the particle is approximated using a reversible, symplectic numerical integrator.  For example, the leap-frog method (also called the Stormer-Verlet method) is an integrator that preserves a modified Hamiltonian, see e.g. \cite{HLW}.  Hence, the proposal $y$, for target $\normal(A^{-1}b,A^{-1})$, is computed as follows; let $V \in \bbR^{d \times d}$ be a symmetric positive definite matrix and define a Hamiltonian function $H : \bbR^d \times \bbR^d \rightarrow \bbR$ by
\begin{equation}
\label{eq ham1}
	H(q,p) := \frac{1}{2} p^T V p + \frac{1}{2} q^T A q - b^T q. 
\end{equation}
Given a time step $h>0$, a number of steps $L \in \mathbb{N}$, and current state $x \in \bbR^d$, define $q_0 := x$, and sample $p_0 \sim \normal(0,V^{-1})$.  Then for $l=0,\dotsc,L-1$ compute
\begin{align*}
	p_{l+1/2} &= p_{l} - \tsfrac{h}{2} (A q_{l} - b), \\
	q_{l+1} &= q_{l} + h V p_{l+1/2}, \\
	p_{l+1} &= p_{l+1/2} - \tsfrac{h}{2} (A q_{l+1} - b).
\end{align*}
The proposal is then defined as $y := q_L$.  In matrix form we have
$$
	\left[ \twobyone{q_{l+1}}{p_{l+1}} \right] = K 	\left[ \twobyone{q_l}{p_l} \right] + J \left[ \twobyone{0}{\tsfrac{h}{2} b} \right]
$$
where $K,J \in \bbR^{2d \times 2d}$ are defined as
$$
	K = 
	\left[ \!\! \twobytwo{I}{0}{-\tsfrac{h}{2} A}{I} \right]
	\left[ \twobytwo{I}{h V}{0}{I}  \right]
	\left[ \!\! \twobytwo{I}{0}{-\tsfrac{h}{2} A}{I} \right]
	=\left[ \!\! \twobytwo{I-\tsfrac{h^2}{2}VA}{hV}{-h A + \tsfrac{h^3}{4}AVA}{I-\tsfrac{h^2}{2}AV} \!\right]
$$
and
$$
	J = 	
	\left[ \twobytwo{I}{0}{0}{I} \right]	
	+	
	\left[ \twobytwo{I}{0}{-\tsfrac{h}{2} A}{I} \right]
	\left[ \twobytwo{I}{h V}{0}{I} \right]
	=\left[ \twobytwo{2I}{hV}{-\tsfrac{h}{2}A}{2I - \tsfrac{h^2}{2}AV} \right].
$$
Hence, $y$ is given by 
\begin{equation}
\label{hmc prop2}
	\left[ \twobyone{y}{p_L} \right]
	=
	K^L \left[ \twobyone{x}{\xi} \right] + \sum_{l=0}^{L-1} K^l J \left[ \twobyone{0}{\tsfrac{h}{2}b} \right] 
	%=  
	%K^L \left[ \twobyone{x}{\xi} \right] + (I-K)^{-1}(I-K^L) J \left[ \twobyone{0}{\tsfrac{h}{2}b} \right] 
	\quad\mbox{where $\xi \sim \normal(0,V)$},
\end{equation}
or equivalently, 
\begin{equation}
\label{hmc prop}
	y = (K^L)_{11} x + \left( SJ \left[ \twobyone{0}{\tsfrac{h}{2}b} \right] \right)_1 + (K^L)_{12} \xi, \qquad \mbox{where $\xi \sim \normal(0,V^{-1})$,}
\end{equation}
and $(K^L)_{ij}$ is the $ij$ block (of size $d\times d$) of $K^L$, $S = (I-K)^{-1} (I-K^L)$ and $(\cdot)_1$ are the first $d$ entries of the vector $(\cdot)$.

In the case of only one time step of the leap-frog integrator ($L=1$) then HMC is MALA \cite{BPRSS2010}.  Hence, we immediately know that the HMC proposal with $L=1$ is an AR(1) proposal where the proposal limit and target distributions are not the same, and the expected acceptance rate and jump size are given by Theorem \ref{thm MALA conv}.  The case for $L>1$ is more complicated, but \eqref{hmc prop} is still an AR(1) proposal that can be expressed as a matrix splitting using \eqref{eq ar1b}.  The proofs of the following two results are in the Appendix.

\begin{theorem}
\label{thm HMC1}
The HMC proposal \eqref{hmc prop} corresponds to the matrix splitting 
\begin{align*}
	M &= \Sigma^{-1} (I+(K^L)_{11}), &
	\mathcal{A} &= \Sigma^{-1} (I - (K^L)_{11}^2), \\
	N &= \Sigma^{-1} (I+(K^L)_{11}) (K^L)_{11}, &
	\beta &= \Sigma^{-1} (I+(K^L)_{11}) \left(( SJ \left[ \twobyone{0}{\tsfrac{h}{2}b} \right] \right)_1, 
\end{align*}	
where $\Sigma = (K^L)_{12} V^{-1} (K^L)_{12}^T$.
\end{theorem}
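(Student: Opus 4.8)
The plan is to recognise \eqref{hmc prop} as a particular instance of the AR(1) proposal \eqref{eq ar1} and then apply Corollary \ref{lem equiv}. Reading off \eqref{hmc prop}, the proposal has iteration matrix $G = (K^L)_{11}$, fixed vector $g = \left( SJ \left[ \twobyone{0}{\tsfrac{h}{2}b} \right] \right)_1$, and, since $\xi \sim \normal(0,V^{-1})$, noise covariance $\Sigma = (K^L)_{12} V^{-1} (K^L)_{12}^T$. With these identifications the asserted formulae for $M$, $N$, $\mathcal{A}$ and $\beta$ are exactly those produced by Corollary \ref{lem equiv} (with $\beta = \Sigma^{-1}(I+G)g$), so it remains only to verify the two hypotheses of that corollary: that the spectral radius of $G$ is less than $1$, and that $G\Sigma$ is symmetric.

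Both hypotheses become transparent after a change of variables that symmetrises the leap-frog map. I would conjugate by $T := \operatorname{diag}(V^{-1/2},V^{1/2})$ and form $\tilde{K} := T K T^{-1}$, observing that every block of $\tilde{K}$ is a polynomial in the symmetric positive definite matrix $B := V^{1/2} A V^{1/2}$; explicitly both diagonal blocks equal $I - \tsfrac{h^2}{2}B$, the $(1,2)$ block is $hI$, and the $(2,1)$ block is $-hB + \tsfrac{h^3}{4}B^2$. Since all four blocks are polynomials in $B$ they commute, so the blocks of $\tilde{K}^L$ are again polynomials in $B$; writing $(\tilde{K}^L)_{11} = f_L(B)$ and $(\tilde{K}^L)_{12} = g_L(B)$, both are symmetric, and undoing the conjugation gives $(K^L)_{11} = V^{1/2} f_L(B) V^{-1/2}$ and $(K^L)_{12} = V^{1/2} g_L(B) V^{1/2}$.

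With these expressions the symmetry of $G\Sigma$ falls out by substitution. Using that $g_L(B)$ is symmetric,
\begin{align*}
	\Sigma &= (K^L)_{12} V^{-1} (K^L)_{12}^T = V^{1/2} g_L(B)^2 V^{1/2}, \\
	G\Sigma &= (K^L)_{11}\Sigma = V^{1/2} f_L(B) g_L(B)^2 V^{1/2},
\end{align*}
and since $f_L(B) g_L(B)^2$ is a product of commuting symmetric matrices it is symmetric, so $G\Sigma$ is symmetric. For the spectral radius I would note that $G$ is similar to $f_L(B)$, so its eigenvalues are $f_L(\omega_i) = (\hat{K}_i^L)_{11}$, where $\omega_i$ are the eigenvalues of $B$ and $\hat{K}_i = \left[ \twobytwo{1 - \tsfrac{h^2}{2}\omega_i}{h}{-h\omega_i + \tsfrac{h^3}{4}\omega_i^2}{1 - \tsfrac{h^2}{2}\omega_i} \right]$ is the scalar leap-frog propagator; in the stability regime $0 < h^2 \omega_i < 4$ one has $(\hat{K}_i^L)_{11} = \cos(L\theta_i)$ with $\cos\theta_i = 1 - \tsfrac{h^2}{2}\omega_i$, which has modulus strictly less than $1$ away from the finitely many resonant step sizes, so the spectral radius of $G$ is less than $1$. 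Having checked both hypotheses, Corollary \ref{lem equiv} applies and yields the stated splitting.

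The hard part is the symmetry of $G\Sigma$: unlike the scalar SLA situation, here $G = (K^L)_{11}$ and $\Sigma$ are genuinely non-commuting dense matrices and the symmetry is not visible from \eqref{hmc prop} directly. The idea that unlocks it is the conjugation by $T = \operatorname{diag}(V^{-1/2}, V^{1/2})$, which simultaneously symmetrises the leap-frog propagator and exhibits $(K^L)_{11}$ and $(K^L)_{12}$ as conjugates of polynomials in the single symmetric matrix $B$; everything else is bookkeeping and an appeal to the already-established Corollary \ref{lem equiv}.
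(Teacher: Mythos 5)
Your proof is correct and follows essentially the same route as the paper: conjugating $K$ by $\operatorname{diag}(V^{-1/2},V^{1/2})$ to exhibit the blocks of $K^L$ as conjugates of polynomials in $B = V^{1/2}AV^{1/2}$, deducing that $G\Sigma = V^{1/2} f_L(B) g_L(B)^2 V^{1/2}$ is symmetric, and then invoking Corollary \ref{lem equiv}. Your additional verification of the spectral-radius hypothesis via the $\cos(L\theta_i)$ eigenvalue formula is a welcome extra step that the paper's own proof omits.
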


\begin{corollary}
\label{cor 15}
The matrix splitting from HMC satisfies $\mathcal{A}^{-1} \beta \! = A^{-1} b$.
\end{corollary}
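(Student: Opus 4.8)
The plan is to reduce $\mathcal{A}^{-1}\beta$ to a transparent expression and then recognise that expression as a fixed point of the leap-frog integrator. First I would substitute the formulas for $\mathcal{A}$ and $\beta$ from Theorem \ref{thm HMC1}. Since $(K^L)_{11}$ commutes with itself we may factor $I-(K^L)_{11}^2 = (I-(K^L)_{11})(I+(K^L)_{11})$, and all factors are functions of $(K^L)_{11}$ and hence commute. The factor $\Sigma^{-1}$ inside $\mathcal{A}$ cancels the $\Sigma$ produced by inverting, and the factor $(I+(K^L)_{11})$ in $\beta$ cancels against one of the two factors in $\mathcal{A}^{-1}$, leaving
$$
	\mathcal{A}^{-1}\beta = (I-(K^L)_{11})^{-1} \left( S J \left[ \twobyone{0}{\tsfrac{h}{2}b} \right] \right)_1 .
$$
Thus the corollary reduces to showing that this right-hand side equals $A^{-1}b$.

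The key observation is that the pair $(A^{-1}b,0)$ (position $A^{-1}b$, zero momentum) is a fixed point of one step of the leap-frog map. I would verify this directly by inserting $(q_l,p_l)=(A^{-1}b,0)$ into the three updates: one gets $p_{l+1/2} = -\tsfrac{h}{2}(A A^{-1}b - b) = 0$, then $q_{l+1} = A^{-1}b + hV\cdot 0 = A^{-1}b$, and finally $p_{l+1}=0$. (Equivalently, $(A^{-1}b,0)$ is the unique stationary point of the Hamiltonian $H$ in \eqref{eq ham1}, and the symplectic integrator preserves it.) In the matrix form of the recursion this fixed-point identity reads
$$
	(I-K) \left[ \twobyone{A^{-1}b}{0} \right] = J \left[ \twobyone{0}{\tsfrac{h}{2}b} \right] .
$$

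Iterating over $L$ steps preserves the fixed point. Since $S = (I-K)^{-1}(I-K^L)$ and the matrices $I-K$ and $I-K^L$ are commuting polynomials in $K$, I can move $S$ through the one-step identity to obtain
$$
	S J \left[ \twobyone{0}{\tsfrac{h}{2}b} \right] = (I-K^L) \left[ \twobyone{A^{-1}b}{0} \right] .
$$
Reading off the first $d\times 1$ block, and noting that the zero momentum component annihilates the off-diagonal $(1,2)$ block of $I-K^L$ so that only the $(1,1)$ block contributes, gives $\left( S J \left[ \twobyone{0}{\tsfrac{h}{2}b} \right] \right)_1 = (I-(K^L)_{11}) A^{-1}b$. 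Substituting this into the reduced expression cancels the $(I-(K^L)_{11})^{-1}$ and yields $\mathcal{A}^{-1}\beta = A^{-1}b$, as required.

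I do not anticipate a genuine obstacle; the computation is driven by a single structural fact (preservation of the Hamiltonian's stationary point by leap-frog). The only care needed is the $2\times 2$ block bookkeeping — in particular that the zero momentum kills the $(1,2)$ block of $I-K^L$ — and the repeated use of commutativity of polynomials in $(K^L)_{11}$ and in $K$ to justify the cancellations in both displayed steps.
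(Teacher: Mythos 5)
Your proof is correct and follows essentially the same route as the paper: both reduce $\mathcal{A}^{-1}\beta$ to $(I-(K^L)_{11})^{-1}\bigl(SJ\bigl[\twobyone{0}{\tsfrac{h}{2}b}\bigr]\bigr)_1$ via the cancellations in Theorem \ref{thm HMC1}, pass through $S=(I-K)^{-1}(I-K^L)$, and land on the one-step identity $(I-K)\bigl[\twobyone{A^{-1}b}{0}\bigr]=J\bigl[\twobyone{0}{\tsfrac{h}{2}b}\bigr]$, which the paper labels ``easy to check'' and which you verify explicitly (with the nice added interpretation that $(A^{-1}b,0)$ is the leap-frog fixed point).
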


These results imply that the proposal chain for HMC converges to $\normal(A^{-1}b, \mathcal{A}^{-1})$ where $\mathcal{A} \neq A$, rather than the desired target $\normal(A^{-1}b,A^{-1})$, so the MH accept/reject step is necessary even when the target is Gaussian.

For the analysis of HMC we require the eigenvalues of the iteration matrix.  A proof of the following result is in the Appendix.

\begin{theorem}
\label{thm uhmc conv}
Let $\lambda_i^2$ be eigenvalues of $B = V^{1/2} A V^{1/2}$ or $VA$ (these matrices have the same eigenvalues).  Then the iteration matrix $G = (K^L)_{11}$ for the HMC proposal has eigenvalues 
$$
	G_i = \cos( L \theta_i )
$$
where $\theta_i = -\cos^{-1} ( 1 - \tsfrac{h^2}{2} \lambda_i^2 )$.
\end{theorem}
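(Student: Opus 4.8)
The plan is to block-diagonalise $K$ by a short sequence of similarity transformations that reduce it, after a reordering of coordinates, to a direct sum of $d$ explicit $2\times 2$ blocks, one for each eigenvalue $\lambda_i^2$ of $B$, and then simply read off the $(1,1)$ entry of the $L$th power of each block. First I would conjugate $K$ by the block-diagonal matrix $T = \operatorname{diag}(V^{-1/2},V^{1/2})$. Using $V^{1/2}AV^{1/2}=B$ and $V^{1/2}AVAV^{1/2}=B^2$, a direct computation gives
$$
	T K T^{-1} = \left[ \twobytwo{I-\tsfrac{h^2}{2}B}{hI}{-hB + \tsfrac{h^3}{4}B^2}{I-\tsfrac{h^2}{2}B} \right],
$$
so every block is now a function of $B$ and $I$ alone.

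Next I would use the spectral decomposition $B = O\Lambda O^T$ with $O$ orthogonal and $\Lambda = \operatorname{diag}(\lambda_1^2,\dots,\lambda_d^2)$, and conjugate by $\operatorname{diag}(O^T,O^T)$, which replaces each block $f(B)$ by $f(\Lambda)$. Writing the full transformation as $P = \operatorname{diag}(O^T V^{-1/2}, O^T V^{1/2})$ and then \emph{reordering} the $2d$ coordinates as $(q_1,p_1,\dots,q_d,p_d)$, the matrix $PKP^{-1}$ becomes block diagonal with $2\times 2$ diagonal blocks
$$
	\hat{K}_i = \left[ \twobytwo{1-\tsfrac{h^2}{2}\lambda_i^2}{h}{-h\lambda_i^2 + \tsfrac{h^3}{4}\lambda_i^4}{1-\tsfrac{h^2}{2}\lambda_i^2} \right].
$$
A quick calculation shows $\det \hat{K}_i = 1$ and $\operatorname{tr}\hat{K}_i = 2(1-\tsfrac{h^2}{2}\lambda_i^2) =: 2\cos\theta_i$, so the eigenvalues of $\hat{K}_i$ are $\cos\theta_i \pm \sqrt{\cos^2\theta_i - 1} = \mathrm{e}^{\pm \mathrm{i}\theta_i}$, with $\cos\theta_i = 1-\tsfrac{h^2}{2}\lambda_i^2$.

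To extract the $(1,1)$ entry of $\hat{K}_i^L$ I would write $\hat{K}_i = c_i I + B_{0,i}$ with $c_i = 1-\tsfrac{h^2}{2}\lambda_i^2$ and $B_{0,i}$ the off-diagonal (zero-diagonal) part; since $B_{0,i}^2$ is a scalar multiple of $I$, every power $\hat{K}_i^L$ is again of the form $(\text{scalar})\,I + (\text{scalar})\,B_{0,i}$ and hence has equal diagonal entries. Therefore $(\hat{K}_i^L)_{11} = \tfrac{1}{2}\operatorname{tr}(\hat{K}_i^L) = \tfrac12(\mathrm{e}^{\mathrm{i}L\theta_i}+\mathrm{e}^{-\mathrm{i}L\theta_i}) = \cos(L\theta_i)$, and the sign choice $\theta_i = -\cos^{-1}(1-\tsfrac{h^2}{2}\lambda_i^2)$ is immaterial because cosine is even. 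Finally, since $P$ is block diagonal, the position-position block satisfies $(PK^LP^{-1})_{11} = O^T V^{-1/2}(K^L)_{11}V^{1/2}O = \operatorname{diag}\big(\cos(L\theta_i)\big)$, so $G = (K^L)_{11} = V^{1/2}O\,\operatorname{diag}(\cos L\theta_i)\,O^T V^{-1/2}$ is similar to that diagonal matrix and hence has eigenvalues $G_i = \cos(L\theta_i)$. (That $\lambda_i^2$ are shared eigenvalues of $B$ and $VA$ follows from $V^{-1/2}(VA)V^{1/2}=B$.)

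The main obstacle I anticipate is the bookkeeping, not the algebra: one must be careful that extracting the $(1,1)$ block commutes with the block-diagonal conjugations and, especially, that after the reordering into $(q_i,p_i)$ pairs the position-position block of $\hat{K}^L$ really is the \emph{diagonal} matrix whose $i$th entry is $(\hat{K}_i^L)_{11}$. Everything else (verifying $\det\hat{K}_i=1$ and the equal-diagonal property of powers) is routine.
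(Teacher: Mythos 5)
Your proof is correct and follows essentially the same route as the paper: conjugate $K$ by $\operatorname{diag}(V^{-1/2},V^{1/2})$ and then by the orthogonal matrix diagonalising $B$, reduce the problem to the $2\times 2$ blocks $\hat K_i$ built from the $i$th diagonal entries, and read off $(\hat K_i^L)_{11}$. The only (cosmetic) difference is the last step: the paper factors each block explicitly as $\operatorname{diag}(1,a_i)\,R(\theta_i)\,\operatorname{diag}(1,a_i^{-1})$ with $R$ a rotation and $a_i=\lambda_i\sqrt{1-h^2\lambda_i^2/4}$, whereas you use $\det\hat K_i=1$, $\operatorname{tr}\hat K_i=2\cos\theta_i$, and the observation that powers of $c_iI+B_{0,i}$ keep equal diagonal entries to get $(\hat K_i^L)_{11}=\frac{1}{2}\operatorname{tr}(\hat K_i^L)=\cos(L\theta_i)$ --- both arguments implicitly assume $h^2\lambda_i^2\le 4$ so that $\theta_i$ is real.
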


From this theorem we see how the eigenvalues of the iteration matrix depend on $V$, the number of time steps $L$, and the time step $h$.  Again we refer to $V$ as a preconditioner (as in \cite{BPSSS2011}) because it plays a similar role to a preconditioner for solving linear systems of equations.  Alternatively, $V$ may be referred to as a mass matrix since $p$ in the Hamiltonian \eqref{eq ham1} is momentum and $H$ is energy.

To complete our analysis of HMC we restrict our attention to the case when $d \rightarrow \infty$ and try to apply Theorems \ref{thm accept} and \ref{thm jumpsize}.  These theorems require that the splitting matrices are functions of the target precision matrix.  A simple change of coordinates achieves this.

\begin{theorem}
\label{thm iso hmc}
Under the change of coordinates 
$$
	\left[ \twobyone{x}{p} \right] \leftrightarrow \mathcal{V}^{-1} \left[ \twobyone{x}{p} \right], \qquad \mbox{where } 
	\mathcal{V} = \left[ \twobytwo{V^{1/2}}{0}{0}{V^{-1/2}} \right] \in \bbR^{2d \times 2d},
$$
the Hamiltonian \eqref{eq ham1} and HMC with target $\normal(A^{-1}b,A^{-1})$ and proposal \eqref{hmc prop} are transformed to a Hamiltonian, and MH algorithm defined by
\begin{equation}
\label{eq hmchat1}
\begin{split}
	\mbox{Hamiltonian:} & \quad \mathcal{H}(x,p):= \tsfrac{1}{2} p^T p + \tsfrac{1}{2} x^T B x - (V^{1/2}b)^T x, \\
	\mbox{Target:} & \quad \normal(B^{-1} V^{1/2} b, B^{-1}), \\
	\mbox{Proposal:} & \quad 
	y = (\mathcal{K}^L)_{11} x + \left( \mathcal{S} \mathcal{J} \left[ \twobyone{0}{\tsfrac{h}{2} V^{1/2} b} \right] \right)_1 + (\mathcal{K}^L)_{12} \xi, \; \mbox{for $\xi \sim \normal(0,I)$},
\end{split}
\end{equation}
where $B = V^{1/2} A V^{1/2}$, $\mathcal{S} = (I-\mathcal{K})^{-1}(I-\mathcal{K}^L)$,
$$
	\mathcal{K} = \left[ \twobytwo{I-\tsfrac{h^2}{2}B}{hI}{-h B + \tsfrac{h^3}{4}B^2}{I-\tsfrac{h^2}{2}B} \right]
	\qquad \mbox{and} \qquad
	\mathcal{J} = \left[ \twobytwo{2I}{hI}{-\tsfrac{h}{2}B}{2I - \tsfrac{h^2}{2}B} \right].
$$
Moreover, the proposal in \eqref{eq hmchat1} corresponds to the matrix splitting $\mathcal{A} = M-N$ with
\begin{align*}
	M \!&=\! (\mathcal{K}^L)_{12}^{-2} (I\!+\!(\mathcal{K}^L)_{11}),\!\! &
	\mathcal{A} \!&=\! (\mathcal{K}^L)_{12}^{-2} (I \!-\! (\mathcal{K}^L)_{11}^2), \\
	N \!&=\! (\mathcal{K}^L)_{12}^{-2} (I\!+\!(\mathcal{K}^L)_{11}) (\mathcal{K}^L)_{11},\!\! &
	\beta \! &=\! (\mathcal{K}^L)_{12}^{-2} (I\!+\!(\mathcal{K}^L)_{11}) \left( \!\mathcal{S} \mathcal{J} \!\left[\! \twobyone{0}{\tsfrac{h}{2}V^{1/2} b} \!\right] \right)_1 \!. 
\end{align*}	
\end{theorem}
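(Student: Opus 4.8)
The plan is to show that the stated change of coordinates is a \emph{conjugation} of the entire leap-frog recursion, and then to recognise the transformed proposal as an ordinary HMC proposal with identity mass matrix and precision $B$, to which Theorem \ref{thm HMC1} applies directly.

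First I would verify that the quadratic Hamiltonian transforms as claimed. Writing the new variables as $[\tilde{q};\tilde{p}] = \mathcal{V}^{-1}[q;p]$, so that $q = V^{1/2}\tilde{q}$ and $p = V^{-1/2}\tilde{p}$, substitution into \eqref{eq ham1} immediately gives $\tsfrac{1}{2} p^T V p = \tsfrac{1}{2}\tilde{p}^T \tilde{p}$, then $\tsfrac{1}{2} q^T A q = \tsfrac{1}{2}\tilde{q}^T B \tilde{q}$ with $B = V^{1/2}AV^{1/2}$, and $b^T q = (V^{1/2}b)^T\tilde{q}$, which is exactly $\mathcal{H}$. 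The same substitution sends the target covariance $A^{-1}$ to $V^{-1/2}A^{-1}V^{-1/2} = B^{-1}$ and the mean $A^{-1}b$ to $V^{-1/2}A^{-1}b = B^{-1}V^{1/2}b$, giving the stated transformed target. Crucially, the initial momentum $p_0 \sim \normal(0,V^{-1})$ becomes $\tilde{p}_0 = V^{1/2}p_0 \sim \normal(0,I)$, which explains why the driving noise of the transformed proposal is $\normal(0,I)$.

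Next I would show that the leap-frog iteration matrices conjugate to $\mathcal{K}$ and $\mathcal{J}$. Applying $\mathcal{V}^{-1}(\cdot)\mathcal{V}$ block-by-block to $K$ and $J$ and repeatedly using the identities $V^{-1/2}VV^{-1/2}=I$, $V^{1/2}AV^{1/2}=B$, and $V^{1/2}AVAV^{1/2}=B^2$, I expect to obtain $\mathcal{V}^{-1}K\mathcal{V} = \mathcal{K}$ and $\mathcal{V}^{-1}J\mathcal{V} = \mathcal{J}$ exactly as displayed. Since the recursion $[q_{l+1};p_{l+1}] = K[q_l;p_l] + J[0;\tsfrac{h}{2} b]$ becomes, after left-multiplication by $\mathcal{V}^{-1}$ and insertion of $\mathcal{V}\mathcal{V}^{-1}$, the recursion $[\tilde{q}_{l+1};\tilde{p}_{l+1}] = \mathcal{K}[\tilde{q}_l;\tilde{p}_l] + \mathcal{J}[0;\tsfrac{h}{2} V^{1/2}b]$ (using $\mathcal{V}^{-1}[0;\tsfrac{h}{2} b] = [0;\tsfrac{h}{2} V^{1/2}b]$), conjugation also yields $\mathcal{K}^L = \mathcal{V}^{-1}K^L\mathcal{V}$ and hence $\mathcal{S} = \mathcal{V}^{-1}S\mathcal{V}$. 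Extracting the first block of the solved recursion then reproduces the transformed proposal in \eqref{eq hmchat1} precisely, so \eqref{hmc prop} and \eqref{eq hmchat1} describe the same Markov chain in the two coordinate systems. For the matrix splitting I would then apply Theorem \ref{thm HMC1} to \eqref{eq hmchat1}, which is an HMC proposal with precision matrix $B$, linear term $V^{1/2}b$, and mass matrix $I$; the only point needing care is that Theorem \ref{thm HMC1} gives $\Sigma = (\mathcal{K}^L)_{12}\, I\, (\mathcal{K}^L)_{12}^T$, and I would note that every block of $\mathcal{K}$ is a polynomial in the symmetric matrix $B$, so all blocks commute and are symmetric; hence $(\mathcal{K}^L)_{12}$ is symmetric, $\Sigma = (\mathcal{K}^L)_{12}^2$, and $\Sigma^{-1} = (\mathcal{K}^L)_{12}^{-2}$, giving the displayed $M$, $N$, $\mathcal{A}$, $\beta$.

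The main obstacle I anticipate is purely the bookkeeping in the conjugation step: verifying the four blocks of $\mathcal{V}^{-1}K\mathcal{V}$ and $\mathcal{V}^{-1}J\mathcal{V}$, and tracking how the noise covariance collapses to the identity. The conceptual content — that a similarity transform of the Hamiltonian flow turns a non-identity mass matrix into the identity at the cost of replacing $A$ by $B$ — is what makes Theorem \ref{thm HMC1} applicable, and establishing the symmetry of $(\mathcal{K}^L)_{12}$ is the one algebraic fact that must be checked to collapse $\Sigma$.
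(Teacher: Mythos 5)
Your proposal is correct and is exactly the paper's argument: the paper's entire proof is the one line ``Use $K = \mathcal{V}\mathcal{K}\mathcal{V}^{-1}$ and $J = \mathcal{V}\mathcal{J}\mathcal{V}^{-1}$,'' and you have simply written out the block-by-block conjugation, the transformation of the Hamiltonian, target and noise, and the application of Theorem \ref{thm HMC1} (with mass matrix $I$ and precision $B$, using that $(\mathcal{K}^L)_{12}$ is a symmetric polynomial in $B$ so $\Sigma = (\mathcal{K}^L)_{12}^2$) that the paper leaves implicit.
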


\begin{proof}
Use $K = \mathcal{V} \mathcal{K} \mathcal{V}^{-1}$ and $J = \mathcal{V} \mathcal{J} \mathcal{V}^{-1}$.
\end{proof}

Similar coordinate transformations are used in classical mechanics \cite[p. 103]{A1989}, see also \cite{BCSS2014}.

MH algorithm \eqref{eq hmchat1} has splitting matrices that are functions of the target precision matrix, so we can apply Theorems \ref{thm accept} and \ref{thm jumpsize} to \eqref{eq hmchat1} to reveal information about the performance of the original HMC algorithm.  A proof of the following result is in the Appendix.

\begin{theorem}
\label{thm hmc conv}
Suppose there are constants $c,C>0$ and $\kappa \geq 0$ such that the eigenvalues of $B = V^{1/2}AV^{1/2}$ (equivalently, $V A$ or $AV$) satisfy
$$
	c i^\kappa \leq \lambda_i \leq C i^\kappa \qquad \mbox{for $i=1,\dotsc,d$}.
$$
If $h = l d^{-1/4-\kappa}$ for $l > 0$, and $L = \lfloor \tsfrac{T}{h} \rfloor$ for fixed $T$, then the HMC algorithm (with proposal \eqref{hmc prop} and target $\normal(A^{-1}b,A^{-1})$), in equilibrium, satisfies
\begin{equation}
\label{eq hmcexpect}
	\mathrm{E}[\alpha(x,y)] \rightarrow a(l) := 2 \Phi\left( -\frac{l^2 \sqrt{\tau}}{8} \right) 
\end{equation}
where $\tau = \lim_{d \rightarrow \infty} \frac{1}{d^{1+4\kappa}} \sum_{i=1}^d \lambda_i^4 \sin^2(\lambda_i T')$
and for eigenvector $q_i$ of $B$ corresponding to $\lambda_i^2$,
\begin{equation}
\label{eq hmcjump}
	\mathrm{E}[ |q_i^T V^{-1/2} (x' - x)|^2 ] = 2 \frac{1-\cos(\lambda_i T')}{\lambda_i^2} a(l) + \mathrm{o}\left( \frac{1-\cos(\lambda_i T')}{\lambda_i^2} \right)
\end{equation}
as $d \rightarrow \infty$, where $T' = Lh$.
\end{theorem}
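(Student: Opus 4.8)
The plan is to reduce everything to the transformed algorithm \eqref{eq hmchat1}, whose splitting matrices are functions of the target precision $B = V^{1/2}AV^{1/2}$, and then invoke Theorems \ref{thm accept} and \ref{thm jumpsize} directly (the target is Gaussian, so $\phi_d = 0$ and the non-Gaussian machinery is not needed). By Theorem \ref{thm iso hmc} the original HMC chain and \eqref{eq hmchat1} have identical acceptance and jump behaviour, and since the transformed position is $\tilde{x} = V^{-1/2}x$ we have $q_i^T V^{-1/2}(x'-x) = q_i^T(\tilde{x}'-\tilde{x})$ for $q_i$ an eigenvector of $B$; hence the left-hand side of \eqref{eq hmcjump} is exactly the coordinate-$i$ jump size handled by Theorem \ref{thm jumpsize}. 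The decisive simplification is Corollary \ref{cor 15}: since $\mathcal{A}^{-1}\beta = A^{-1}b$ we get $\tilde{m}_i = m_i$, so $\hat{r}_i = 0$ for every $i$, which kills $T_{0i}, T_{1i}, T_{2i}$ and leaves only $T_{3i}, T_{4i}, T_{5i}$.

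Next I would compute the surviving quantities from the eigenstructure. Theorem \ref{thm uhmc conv} gives $G_i = \cos(L\theta_i)$ with $\theta_i = -\cos^{-1}(1-\tsfrac{h^2}{2}\lambda_i^2)$, so $g_i = 1 - G_i^2 = \sin^2(L\theta_i)$. Writing the leapfrog propagator $\mathcal{K}$ of \eqref{eq hmchat1} as a $2\times 2$ symplectic block on each eigenspace of $B$, the off-diagonal iterate is $\mathcal{K}^L = \cos(L\theta_i)I + \sin(L\theta_i)M_i$ with $(\mathcal{K}^L)_{12}$-eigenvalue $h\sin(L\theta_i)/\sin\theta_i$, where $\sin^2\theta_i = h^2\lambda_i^2(1-\tsfrac{h^2}{4}\lambda_i^2)$. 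Substituting into $\tilde{\lambda}_i^2 = (\mathcal{K}^L)_{12}^{-2}(1-G_i^2)$ collapses to $\tilde{\lambda}_i^2 = \sin^2\theta_i/h^2 = \lambda_i^2(1-\tsfrac{h^2}{4}\lambda_i^2)$, whence $r_i = \tsfrac{h^2}{4}\lambda_i^2$ and $\tilde{r}_i = (1-r_i)^{-1}$. Because $h\lambda_i = \mathcal{O}(d^{-1/4})$ uniformly (using $\lambda_i \leq Cd^\kappa$ and $h = ld^{-1/4-\kappa}$), we have $r_i = \mathcal{O}(d^{-1/2})$ uniformly and $\tilde{r}_i \to 1$.

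I would then feed these into $T_{3i} = \tsfrac12 r_i g_i$, $T_{4i} = -\tsfrac12 r_i\tilde{r}_i g_i$, $T_{5i} = -r_i G_i(\tilde{r}_i g_i)^{1/2}$. To leading order $\mu_{d,i} = T_{3i}+T_{4i} = -\tsfrac12 r_i^2 g_i(1+o(1))$ and, using $g_i + G_i^2 = 1$, $\sigma_{d,i}^2 = 2T_{3i}^2 + 2T_{4i}^2 + T_{5i}^2 = r_i^2 g_i(1+o(1))$, so that $\mu = -\tsfrac12\sigma^2$ in the limit and $\sigma^2 = \lim \tsfrac{h^4}{16}\sum_i\lambda_i^4\sin^2(L\theta_i) = \tsfrac{l^4\tau}{16}$. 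The Lyapunov condition \eqref{eq Tcond} needs checking only for $j=3,4,5$ (the others vanish), and it holds because each $|T_{ji}| = \mathcal{O}(d^{-1/2})$ uniformly while $\sum_i T_{ji}^2$ tends to a finite positive limit, so the ratio is bounded by $(\max_i|T_{ji}|)^\delta/(\sum_i T_{ji}^2)^{\delta/2}\to 0$. Theorem \ref{thm accept} then gives $\mathrm{E}[\alpha] \to \Phi(\tsfrac{\mu}{\sigma}) + \mathrm{e}^{\mu+\sigma^2/2}\Phi(-\sigma-\tsfrac{\mu}{\sigma})$, which with $\mu = -\sigma^2/2$ simplifies to $2\Phi(-\sigma/2) = 2\Phi(-\tsfrac{l^2\sqrt{\tau}}{8}) = a(l)$, proving \eqref{eq hmcexpect}. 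For \eqref{eq hmcjump} I apply Theorem \ref{thm jumpsize}: with $\hat{r}_i = 0$, $U_1 = \tsfrac{\tilde{g}_i^2}{\lambda_i^2} + \tsfrac{g_i}{\tilde{\lambda}_i^2} = \tsfrac{(1-\cos(L\theta_i))^2 + \sin^2(L\theta_i)}{\lambda_i^2}(1+o(1)) = \tsfrac{2(1-\cos(L\theta_i))}{\lambda_i^2}(1+o(1))$; removing one vanishing index leaves $\mu^-\to\mu$ and $\sigma^-\to\sigma$ (Lemma \ref{lem 5}), so $U_2\to a(l)$; and $|E_3|\leq U_3 = (\sigma_{d,i}^2+\mu_{d,i}^2)^{1/2}\cdot\mathcal{O}(U_1) = \mathcal{O}(d^{-1/2})\,U_1 = o(U_1)$, yielding the stated jump size.

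The main obstacle is the replacement hidden in the two formulas for $\tau$ and $U_1$: I have worked with $\sin^2(L\theta_i)$ and $\cos(L\theta_i)$, whereas the statement is phrased with $\sin^2(\lambda_i T')$ and $\cos(\lambda_i T')$. Justifying $\sin^2(L\theta_i) = \sin^2(\lambda_i T') + (\text{negligible})$ uniformly in $i$ rests on the expansion $\theta_i = -h\lambda_i(1 + \mathcal{O}(h^2\lambda_i^2))$, giving $L\theta_i = -\lambda_i T' + \mathcal{O}(T' h^2\lambda_i^3)$; the delicate step is then to bound $\tsfrac{1}{d^{1+4\kappa}}\sum_i\lambda_i^4|\sin^2(L\theta_i)-\sin^2(\lambda_i T')|$ using $|\sin^2 a - \sin^2 b|\leq|a-b|$ together with the growth bounds $ci^\kappa\leq\lambda_i\leq Ci^\kappa$, where the HMC scaling $h = ld^{-1/4-\kappa}$ and $L=\lfloor T/h\rfloor$ are essential. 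This is the part I would write out most carefully, as it is also where the limit defining $\tau$ is shown to exist and be expressible in the claimed form.
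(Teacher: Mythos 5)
Your proposal is correct and follows essentially the same route as the paper's proof: reduction via Theorem \ref{thm iso hmc}, $\hat{r}_i=0$ from Corollary \ref{cor 15}, the eigenvalue formulas from Theorem \ref{thm uhmc conv} giving $\tilde{\lambda}_i^2=\lambda_i^2(1-\tsfrac14 s_i)$ and the same $T_{3i},T_{4i},T_{5i}$, leading to $\mu=-\sigma^2/2$, $\sigma^2=l^4\tau/16$, and then Theorems \ref{thm accept} and \ref{thm jumpsize}. The one step you flag as delicate --- replacing $\sin^2(L\theta_i)$ by $\sin^2(\lambda_i T')$ in the limit --- is handled in the paper only slightly more explicitly (via writing $L\theta_i=-T''(d)\lambda_i$ with $T''(d)=T'+\mathcal{O}(d^{-1/2})$), so you have identified exactly the point that deserves the most care.
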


The above result is an extension to the results in \cite{BPSSS2011,BPRSS2010} since their results only cover the situation when the target distribution has diagonal covariance, and $\kappa > 1/2$ or $\kappa = 0$.  

Note that in the above theory if we take $h = l d^{-r}$ for some $r \neq \tsfrac{1}{4}+\kappa$ then we would find that the expected acceptance rate would either tend to $0$ or $1$.  As is well known for Metropolis-Hastings algorithms, the optimal acceptance rate usually lies somewhere between $0$ and $1$ and taking $h = l d^{-1/4 - \kappa}$ is the correct scaling of $h$ to the dimension to achieve this.  However, even with this scaling we are still free to tune $h$ by varying $l$ and from the above result we can derive the optimal acceptance rate for HMC for a wider class of problems than previously studied in \cite{BPRSS2010,BPSSS2011}.  

For the efficiency of HMC, as well as considering the expected squared jump size of the chain, we must also consider the compute time per proposal which is proportional to $L = \lfloor \tsfrac{T}{h} \rfloor$ and depends on $d$.  Therefore, to maximise efficiency of HMC we must maximise the expected jump size divided by the compute time for a proposal.  With all other quantities held constant, this corresponds to varying $l$ to maximise
$$
	l a(l) = C \sqrt{s} \Phi(s) 
$$
where $s = \frac{l^2 \sqrt{\tau}}{8}$ and $C$ is a constant, which is maximised at $s_0 = 0.4250$, which corresponds to an expected acceptance rate of $2 \Phi(s_0) = 0.651$.  This is the same acceptance rate found in \cite{BPRSS2010} where they considered target distributions that are product distributions with $\lambda_i$ constant for all $i$.

Our theory goes further than earlier results in \cite{BPRSS2010} and \cite{BPSSS2011} for HMC because we also provide guidance on how to choose the other parameters in HMC: $V$ and $T$.  From Theorem \ref{thm hmc conv} we see that we should choose $V$ in a similar way to how we would choose a preconditioner for solving a linear system of equations.  We should choose $V$ to control the spread of eigenvalues of $VA$ so that $\kappa$ is as small as possible and $\tau$ is minimised, and we should choose $V$ so that the action of matrix multiplication with $V$ and sampling from $\normal(0,V^{-1})$ are cheap to compute.  This result is touched on in \cite{BPSSS2011} where they suggest taking $V = A^{-1}$, the perfect preconditioner, but if it is possible to sample from $\normal(0,A)$ then other sampling algorithms may be more efficient than HMC, particularly when the target distribution is $\normal(A^{-1}b,A^{-1})$.

Our theory also shows how to choose $T$ to maximise efficiency.  After tuning $h$ to achieve an acceptance rate of $0.651$ the expected squared jump size satisfies
$$
	\mathrm{E}[ |q_i^T V^{-1/2} (x' - x)|^2 ] \rightarrow 1.302 \frac{1-\cos(\lambda_i T')}{\lambda_i^2} \qquad \mbox{as $d \rightarrow \infty$}.
$$
By first deciding which $i$ correspond to directions we need to consider for our statistic of interest (this will depend on the eigenvectors $q_i$ of $AV$) we can then choose $T$ to maximise $1 - \cos(\lambda_i T')$ for those $i$.  Thus, how we choose $T$ depends on the problem, $V$ and our statistic of interest.  In the special case when $\lambda_i$ are equal then we should choose $T = \frac{\pi}{\lambda_i}$.

%The computational cost of HMC should also be considered.  Each proposal of HMC requires an independent sample from $\normal(0,V^{-1})$, $L$ matrix-vector products with $V$ and $L+1$ matrix-vector products with $A$.  Again, there is a balance to be struct optimizing the convergence rate relative to compute time (rather than iteration count).  Extending the results in \cite{BPRSS2010}, we have shown that in the case when the eigenvalues of $VA$ are $i^\kappa$, $T$ is fixed and $d \rightarrow \infty$, if we take $h = l d^{-1/4-\kappa}$ and $L = \lfloor \tsfrac{T}{h} \rfloor$ then the acceptance rate is $\mathcal{O}(1)$, so in high dimensions the number of iterations of HMC per independent sample should scale like $\mathcal{O}(d^{1/4+\kappa})$, which is an improvement over MALA which requires $\mathcal{O}(d^{1/3+\kappa})$.

The theory presented here is for the leap-frog numerical integrator applied to the Hamiltonian system.  Higher order integrators are also suggested in \cite{BPRSS2010} and alternative numerical integrators based on splitting methods (in the ODEs context) are suggested in \cite{BCSS2014} that minimize the Hamiltonian error after $L$ steps of the integrator.  It may be possible to evaluate these other methods by first expressing them as a an AR(1) proposal and writing down the corresponding matrix splitting, then applying Theorems \ref{thm accept} and \ref{thm jumpsize} after a change of variables; as we have done for the leap-frog integrator.

%We also note that the variant of HMC in \cite{BPSSS2011} corresponds to $V = A^{-1}$ and the change of variables $p \leftrightarrow V p$.  Since this method requires the spectral decomposition of $A$ for computing samples of $\normal(0,A^{-1})$ it is infeasible in our context.

For non-Gaussian target distributions, we cannot apply Theorems \ref{thm:ng1} and \ref{thm:ng2} to this HMC algorithm (with $L = \lfloor \tsfrac{T}{h} \rfloor$), since $\mathrm{E}_{\pi_d}[|y - x|_r^{2q}] \nrightarrow 0$ as $d \rightarrow \infty$, since $\tilde{g}_i^2 \lambda_i^{-1}, \; g_i \tilde{\lambda}_i^{-1} \nrightarrow 0$ as $d \rightarrow \infty$, see the proof of Theorem \ref{thm hmc conv} in the Appendix.  

%%%%%%%%%%%%%%%%%%%%%%%%%%%%%%%%%%%%
%%%%%%%%%%%%%%%%%%%%%%%%%%%%%%%%%%%%
\chapter{Concluding remarks}
\label{sec conclusion}

Until now, each MH algorithm with an AR(1) proprosal has required its own analysis, e.g. RWM, MALA, SLA, pCN and HMC.  In this article we have designed a unifying theory that encompasses all of these AR(1) proposals (except RWM) and other general AR(1) proposals where $G$ and $\Sigma$ are functions of $A$, for the case where the target distribution is a change of measure from a Gaussian reference measure.  

%.  analysis of each MH algorithms with an AR(1) proposals has been limited to cases where the target distribution is a change of measure from a reference measure that is a product distribution, and also limited to particular MCMC methods such as RWM, MALA, SLA, pCN and HMC.  In this article we have extended this theory to include general stochastic AR(1) process proposals and target distributions that are a change of measure from a Gaussian reference measure where the covariance is allowed to have off-diagonal terms.  

The main analysis tool we used is matrix splitting.  By writing an AR(1) proposal in terms of a matrix splitting, and requiring that the splitting matrices are functions of the target reference precision matrix $A$, then a simple change of variables diagonalises both the target reference $A$ and the matrices in the proposal; $G$ and $\Sigma$.  A consequence of this fact is that it is sufficient to analyse MH algorithms where the proposal and target reference are defined by diagonal matrices.  Essentially, we reduce the general case back to MH algorithms where the target reference measure is a product distribution, for which existing analysis of MALA, SLA and HMC can be extended to general AR(1) proposals.

In particular, we wrote down the obvious extension of results for Langevin proposals in \cite{BRS2009} to the case where the target reference measure is Gaussian with non-diagonal covariance and $\theta \in [0,1]$.   For HMC, we extended results in \cite{BPRSS2010} to the case when $\kappa \geq 0$ from $\kappa = 0$, we derived a new formula for the eigenvalues of the iteration matrix of the HMC proposal, and we provided criteria on how to choose $T$ and $V$ for HMC (previous analysis only said to adjust $h$ until the acceptance rate is $0.651$).

We also analysed a variation of the SLA algorithm where $L$ steps of the proposal are taken before the accept/reject step.  We simplified the evaluation of the acceptance probability using the surrogate transition method, and we found that, in the circumstances considered, it is optimal to take $L>1$ steps of SLA.  

The analysis presented here requires that the splitting matrices are functions of the target reference precision matrix.  In high dimensions this is a natural assumption to make because factorizing $A$ may be computationally infeasible.

Designing proposals for the MH algorithm to achieve efficient MCMC methods is a challenge, particularly for non-Gaussian target distributions, and the job is made harder by the difficultly we have in analysing the convergence properties of MH algorithms.  By focusing on AR(1) proposals in high dimension we have proven new theoretical results that provide us with criteria for evaluating and constructing new AR(1) proposals for efficient MH algorithms.  

Designing an efficient MH algorithm with an AR(1) proposal is often a balancing act between minimising the integrated autocorrelation time (we use maximising expected jump size as a proxy for this) and minimising compute time for each iteration of the chain.  If the proposal limit and target distributions are Gaussian and identical then it follows from the theory in \cite{F2013,FP2016,FP2014} that to construct an efficient AR(1) process we should try to satisfy the following conditions:
\begin{enumerate}
\item The spectral radius of $G$ should be as small as possible.
\item Computing an iteration of the stochastic AR(1) process should be as cheap.  This means that the action of $G$ and independent sampling from $\normal(0,\Sigma)$ should be cheap to compute.
\end{enumerate}
If the target distribution is Gaussian and different from the proposal distribution then Theorems \ref{thm accept} and \ref{thm jumpsize} suggest that, in addition, we should try to satisfy:
\begin{enumerate}
\setcounter{enumi}{2}
\item The difference between the target and proposal limit distributions should be as small as possible in the sense that the difference in means should be small, and the relative difference in precision matrix eigenvalues should be small.
\end{enumerate}
If the target distribution is non-Gaussian, then Theorems \ref{thm:ng1} and \ref{thm:ng2} suggest how we should try to satisfy:
\begin{enumerate}
\setcounter{enumi}{2}
\item The difference between the target reference and proposal limit distributions should be as small as possible in the sense that the difference in means should be small, and the relative difference in precision matrix eigenvalues should be small.
\end{enumerate}

In particular examples we can quantify these conditions using our theory.  For example, for proposals based on discretized generalised Langevin diffusion, Theorem \ref{thm genlang conv} shows us how the choice of symmetric positive definite matrix $V$ effects efficiency as it effects squared jump size in four ways.  Whilst choosing $V$ to maximise the limit in \eqref{eq jump2} (by minimising $\kappa$ and $\tau$) we should balance this against the scaling and direction that $V$ induces on $E[q_i^T V^{-1/2}(x'-x)^2]$ through $q_i$ and $V^{-1/2}$ on the left-hand side of \eqref{eq jump2}.    

Another example, pCN, satisfies conditions $1$ and $3$ above, but not necessarily condition $2$.  In particular, $G$ is the diagonal matrix with entries all $\tsfrac{1-h/4}{1+h/4}$ on the diagonal, and $\mathcal{A} = A$ and $\beta = b$.  However, each proposal for pCN requires an independent sample from $\normal(0,A^{-1})$, which may be infeasible in high dimensions.  In the special case when $A$ is diagonal, or a spectral decomposition of $A$ is available, then pCN satisfies all of our conditions for an efficient method.

Proposals for MALA and HMC are examples of proposals that are constructed by discretizing a stochastic differential equation that preserves the target distribution.  Our theory allows us to consider a wider selection of possible AR(1) proposals for the MH algorithm, that are not necessarily based on discretizing a stochastic differential equation.

%Whilst we do not specify any new acceleration strategies, our results are an important step in this direction because we give a criteria to evaluate stochastic AR(1) process proposals, including accelerations used in Fox and Parker \cite{F2013,FP2016,FP2014}.  Acceleration techniques for MH algorithms is an avenue for further research.

%%%%%%%%%%%%%%%%%%%%%%%%%%%%%%%%%%%%%%%%%%%%%%%%%%%
%%%%%%%%%%%%%%%%%%%%%%%%%%%%%%%%%%%%%%%%%%%%%%%%%%%
%%%%%%%%%%%%%%%%%%%%%%%%%%%%%%%%%%%%%%%%%%%%%%%%%%%
\appendix
\chapter{Proofs}

\section{Proof of Lemma \ref{lem1}}

First note that
$$
	q(x,y) 
	\propto 
	\exp( \tsfrac{1}{2} (My - Nx - \beta)^T (M+N)^{-1} (My-Nx - \beta)).
$$
Simple algebra then yields
\begin{align*}
	2 \log \left( \frac{\pi_d(y)q(y,x)}{\pi_d(x)q(x,y)} \right) \!\! &= -y^T A y + x^T A x + 2b^T(y-x)\\
	&\qquad - (Mx-Ny-\beta)^T (M+N)^{-1} (Mx-Ny-\beta) \\
	&\qquad + (My-Nx-\beta)^T (M+N)^{-1} (My-Nx-\beta) \\
	&= -y^T A y + x^T A x + 2b^T(y-x)\\
	& \qquad - ((M-N)(x+y))^T (M+N)^{-1} ((M+N)(x - y)) \\
	& \qquad + 2 \beta^T(M+N)^{-1} ((M+N)(x-y)) \\
%	&= -y^T A y + x^T A x + 2b^T(y-x) + (x+y)^T \mathcal{A} (x-y) + 2\beta^T(x-y) \\
	&= -y^T (A-\mathcal{A}) y + x^T (A-\mathcal{A}) x + 2(b-\beta)^T(y-x).
\end{align*}	

\subsection{Proof of Lemma \ref{lem 5}}

Suppose $\lim_{d\rightarrow \infty} (\sum_{i=1}^d |t_i|^r)/ (\sum_{i=1}^d t_i^2 )^{r/2} = 0$.  Then for any $\epsilon > 0$ there exists a $D \in \mathbb{N}$ such that for any $d > D$,
$
	\sum_{i=1}^d |t_i|^r < \epsilon ( \sum_{i=1}^d t_i^2 )^{r/2}.
$
Then for any $k \in \mathbb{N}$, taking $\epsilon = 2^{-r/2}$, there exists a $D \geq k$ such that for any $d > D$,
$$
	|t_k|^r \leq \sum_{i=1}^d |t_i|^r < \frac{1}{2^{r/2}} \left( \sum_{i=1}^d t_i^2 \right)^{r/2}.
$$
Therefore, for any $d > D$, $t_k^2 < \tsfrac{1}{2} \sum_{i=1}^d t_i^2$ and so
$$
	\frac{\sum_{i=1, i \neq k}^d |t_i|^r}{\left( \sum_{i=1, i \neq k}^d t_i^2 \right)^{r/2}}  
%	= \frac{ \left( \sum_{i=1}^d |t_i|^r \right) - |t_k|^r}{\left( \sum_{i=1}^d t_i^2 - t_k^2 \right)^{r/2}} 
	< \frac{ \sum_{i=1}^d |t_i|^r }{\left(\tsfrac{1}{2} \sum_{i=1}^d t_i^2 \right)^{r/2}} 
	= 2^{r/2} \frac{ \sum_{i=1}^d |t_i|^r }{\left(\sum_{i=1}^d t_i^2 \right)^{r/2}}.
$$

%%%%%%%%%%%%%%%%%%%%%%%%%%%%%%%%%%%%%
\section{Proof of Theorem \ref{thm genlang conv}}

We use the following technical lemma in the proof of Theorem \ref{thm genlang conv}.
\begin{lemma} \label{lem 6a} 
Suppose $\{t_i\} \subset \bbR$ is a sequence such that $0 < t_i \leq C d^{-1/3} (\tsfrac{i}{d})^{2\kappa}$ for $C>0$ and $\kappa \geq 0$.  If $s > 3$, then $
\lim_{d\rightarrow \infty} \sum_{i=1}^d t_i^s = 0
$.
\end{lemma}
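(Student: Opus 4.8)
The plan is to use the pointwise upper bound on $t_i$, raise it to the power $s$, and then control the resulting sum by comparing the power sum $\sum_{i=1}^d i^{2\kappa s}$ with an integral. The whole argument is elementary; no real obstacle is expected, and the only point needing care is keeping track of the powers of $d$.

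First I would raise the hypothesis to the $s$-th power. Since $t_i > 0$ and $t_i \leq C d^{-1/3}(i/d)^{2\kappa}$, monotonicity of $x \mapsto x^s$ on $(0,\infty)$ gives
$$
	t_i^s \leq C^s d^{-s/3} \left( \frac{i}{d} \right)^{2\kappa s}.
$$
Summing over $i = 1,\dotsc,d$ and pulling the $d$-dependent factors outside the sum yields
$$
	\sum_{i=1}^d t_i^s \leq C^s d^{-s/3 - 2\kappa s} \sum_{i=1}^d i^{2\kappa s}.
$$

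Next I would bound the power sum. Writing $p := 2\kappa s \geq 0$, the integral comparison $i^p \leq \int_i^{i+1} x^p \, \dd x$ (valid since $x^p$ is nondecreasing for $p \geq 0$) gives $\sum_{i=1}^d i^p \leq \int_1^{d+1} x^p \, \dd x = \tfrac{(d+1)^{p+1}-1}{p+1} = \mathcal{O}(d^{p+1})$ as $d \to \infty$. Substituting $p = 2\kappa s$ back into the previous display, the exponents of $d$ combine as $-s/3 - 2\kappa s + (2\kappa s + 1) = 1 - s/3$, so that
$$
	\sum_{i=1}^d t_i^s = \mathcal{O}\left( d^{1 - s/3} \right) \qquad \mbox{as } d \to \infty.
$$

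Finally, since $s > 3$ we have $1 - s/3 < 0$, so the right-hand side tends to $0$ and the claim follows. The one place to be attentive is the cancellation in the exponent: the factor $d^{-2\kappa s}$ coming from normalising $i/d$ exactly cancels the $d^{2\kappa s}$ growth of the power sum, leaving the net exponent $1 - s/3$ \emph{independent} of $\kappa$. This is precisely why the threshold is $s > 3$ rather than some $\kappa$-dependent value, and it is the only feature of the calculation worth double-checking.
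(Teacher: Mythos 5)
Your proof is correct and follows essentially the same route as the paper: bound $t_i^s$ pointwise by $C^s d^{-s/3}(i/d)^{2\kappa s}$, control the resulting power sum by an integral, and observe that the surviving exponent $1-s/3$ is negative. The only cosmetic difference is that the paper recognises $\sum_{i=1}^d \tfrac{1}{d}(i/d)^{2\kappa s}$ as a Riemann sum for $\int_0^1 z^{2\kappa s}\,\dd z$, whereas you bound $\sum_{i=1}^d i^{2\kappa s}$ directly by $\int_1^{d+1}x^{2\kappa s}\,\dd x$; both yield the same $\mathcal{O}(d^{1-s/3})$ conclusion.
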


\begin{proof}
\begin{align*}
	\lim_{d \rightarrow \infty} \sum_{i=1}^d t_i^s 
	\leq C^s \lim_{d \rightarrow \infty}  d^{1-s/3} \sum_{i=1}^d \tsfrac{1}{d} \left( \tsfrac{i}{d} \right)^{2\kappa s} 
	= C^s \lim_{d \rightarrow \infty}  d^{1-s/3} \int_0^1 z^{2\kappa s} \mathrm{d}z 
	= 0.
\end{align*}
\end{proof}

\emph{Proof of Theorem \ref{thm genlang conv}}.  First note that $VA$ and $V^{1/2}AV^{1/2}$ are similar, so they have the same eigenvalues.  Lemma \ref{lem ch1} implies that it is equivalent to study the MH algorithm with target and proposal given by \eqref{eq hat1}.  We now attempt to apply Theorems \ref{thm:ng1} and \ref{thm:ng2} to \eqref{eq hat1}.   Note that the splitting matrices for \eqref{eq hat1} are functions of $B = V^{1/2}A V^{1/2}$.  Let $t_{d,i}=h \lambda_i = \mathcal{O}(d^{-t})$ for $t = 1/3+\kappa$, let $s_i = h \lambda_i^2 = l d^{-1/3} (\tsfrac{\lambda_i}{d})^{2\kappa} = \mathcal{O}(d^{-1/3})$ and let $\rho = (\theta-\tsfrac{1}{2})/2$.  Then
\begin{align*}
	\tilde{\lambda}_i^2 &= (1+\rho s_i) \lambda_i^2, & G_i &= 1 - \frac{\frac{1}{2}s_i}{1 + \frac{\theta}{2}s_i}, &
	\tilde{g}_i &= \frac{\frac{1}{2}s_i}{1 + \frac{\theta}{2}s_i}, & g_i &= \frac{s_i (1 + \rho s_i)}{1 + \frac{\theta}{2}s_i}, \\
	r_i &= -\rho s_i, & \tilde{r}_i &= \frac{1}{1 + \rho s_i}, & \hat{r}_i &= 0,
\end{align*}
so that
\begin{align*}
	T_{0i} &= T_{1i} = T_{2i} = 0, \\
	T_{3i} &= \frac{-\frac{1}{2} \rho s_i^2 (1 + \rho s_i)}{(1 + \frac{\theta}{2}s_i)^2}, &
	T_{4i} &= \frac{\frac{1}{2} \rho s_i^2 }{(1 + \frac{\theta}{2}s_i)^2}, &
	T_{5i} &= \frac{\frac{1}{2} \rho s_i^{3/2} (1-\frac{1-\theta}{2} s_i) }{(1 + \frac{\theta}{2}s_i)^2}.
\end{align*}
Hence 
$$
	\tilde{g}_i^2 \hat{r}_i^2 \lambda_i = 0, \qquad \tilde{g}_i^2 \lambda_i^{-1} = \mathcal{O}(t_{d,i} s_i) = \mathrm{o}(t_{d,i}), \qquad g_i \tilde{\lambda}_i^{-1} = \mathcal{O}(t_{d,i}), \qquad \tilde{r}_i = \mathcal{O}(1)
$$ 
as $d \rightarrow \infty$, and $T_{3i} = \mathcal{O}(d^{-2/3})$, $T_{4i} = \mathcal{O}(d^{-2/3})$ and $T_{5i} = \mathcal{O}(d^{-1/2})$ as $d \rightarrow \infty$.

%To apply Theorems \ref{thm accept} and \ref{thm jumpsize} we first need to check \eqref{eq Tcond}.  For sufficiently large $d$ we have $1 \leq 1 + \frac{\theta}{2} s_i \leq \frac{3}{2}$ and $1 \leq 1 + \rho s_i \leq \frac{3}{2}$.  Then for some $\delta > 0$ we have
%\begin{align*}
%	\lim_{d \rightarrow \infty} \!\frac{\sum_{i=1}^d |T_{3i}|^{2+\delta}}{\left( \sum_{i=1}^d |T_{3i}|^2 \right)^{1 + \delta/2}}
%	&\leq \lim_{d \rightarrow \infty} \left( \tsfrac{3}{2} \right)^{6+3\delta}  \frac{ \sum_{i=1}^d s_i^{4+2\delta}}{\left(  \sum_{i=1}^d s_i^4 \right)^{1 + \delta/2}} \\
%	&\leq \lim_{d \rightarrow \infty} \left( \tsfrac{3}{2} \right)^{6+3\delta} \!(\tsfrac{C}{c})^{8+4\delta} d^{-\delta/2}  \frac{ \sum_{i=1}^d d^{-1} (\tsfrac{i}{d})^{8\kappa+4\delta\kappa}}{\left( \sum_{i=1}^d d^{-1} (\tsfrac{i}{d})^{8\kappa} \right)^{1 + \delta/2}} \\
%	&= \lim_{d\rightarrow \infty} \left( \tsfrac{3}{2} \right)^{6+3\delta} \!(\tsfrac{C}{c})^{8+4\delta} d^{-\delta/2} \frac{ \int_0^1 z^{8\kappa+4\delta\kappa} \mathrm{d}z}{\left( \int_0^1 z^{8\kappa} \mathrm{d}z \right)^{1 + \delta/2}}    
%	= 0.
%\end{align*}
%Similarly, we can check that \eqref{eq Tcond} is satisfied for $j=4,5$.  

We also have
\begin{align*}
	\mu 
	&= \lim_{d \rightarrow \infty} \sum_{i=1}^d \mu_i 
	= \lim_{d \rightarrow \infty} \sum_{i=1}^d T_{3i} + T_{4i} \\
	&= \lim_{d \rightarrow \infty} -\tsfrac{\rho^2}{2} \sum_{i=1}^d \frac{s_i^3}{(1 + \frac{\theta}{2}s_i)^2} 
	= \lim_{d \rightarrow \infty} -\tsfrac{\rho^2}{2} \sum_{i=1}^d s_i^3 
	= - \frac{l^6 (\theta-\frac{1}{2})^2 \tau}{8},
\end{align*}
and using Lemma \ref{lem 6a},
\begin{align*}
	\sigma^2
	&= \lim_{d \rightarrow \infty} \sum_{i=1}^d \sigma_i^2 
	= \lim_{d \rightarrow \infty} \sum_{i=1}^d 2 T_{3i}^2 + 2 T_{4i}^2 + T_{5i}^2 \\
	&= \lim_{d \rightarrow \infty} \sum_{i=1}^d \frac{1}{(1+\frac{\theta}{2}s_i)^4} \left( \tsfrac{1}{2}\rho^2 s_i^4 (1+\rho s_i)^2 + \tsfrac{1}{2} \rho^2 s_i^4 + \tsfrac{1}{4} \rho^2 s_i^3 (1-\tsfrac{1-\theta}{2}s_i)^2 \right)  \\
	&= \lim_{d \rightarrow \infty} \sum_{i=1}^d \left( \tsfrac{1}{2}\rho^2 s_i^4 (1+\rho s_i)^2 + \tsfrac{1}{2} \rho^2 s_i^4 + \tsfrac{1}{4} \rho^2 s_i^3 (1-\tsfrac{1-\theta}{2}s_i)^2 \right) \\
	&= \lim_{d \rightarrow \infty} \sum_{i=1}^d \tsfrac{1}{4} \rho^2 s_i^3 \\
	&= \lim_{d \rightarrow \infty} \frac{l^6 (\theta - \tsfrac{1}{2})^2 \tau}{4}.
\end{align*}
Hence $\lim_{d\rightarrow\infty} \sum_{i=1}^d T_{1i}^2 + T_{2i}^2 + 2 T_{3i}^2 + 2 T_{4i}^2 + T_{5i}^2 < \infty$.  Also note that, by Lemma \ref{lem 6a},
$$
	\lim_{d\rightarrow \infty} \sum_{i=1}^d T_{1i}^2 + T_{3i}^2 = \lim_{d\rightarrow \infty} \sum_{i=1}^d T_{3i}^2 = \lim_{d\rightarrow \infty} \sum_{i=1}^d \tsfrac{1}{4} \rho^2 s_i^4 = 0.
$$	
It follows that $\frac{\mu}{\sigma} = -\sigma - \frac{\mu}{\sigma} = -l^3 |\theta-\frac{1}{2}| \sqrt{\tau}/4$ and $\mu + \frac{\sigma^2}{2} = 0$.  Hence we obtain \eqref{eq expect} from Theorem \ref{thm:ng1}, using Corollary \ref{cor:ng}.

For the expected jump size, first note that $\hat{r}_i = 0$.  Also,
$$
	\frac{g_i}{\tilde{\lambda}_i^2}  = \frac{h}{1 + \tsfrac{\theta}{2} s_i} = h + \mathcal{O}(h s_i) =h + \mathrm{o}(h), \qquad \mbox{as $d \rightarrow \infty$},
$$
and using the fact that $\gamma_{d,i}^2$ is uniformly bounded (see proof of Theorem \ref{thm:ng1}),
$$
	\frac{\tilde{g}_i^2 \gamma_{d,i}^{1/2}}{\lambda_i^2} = \mathcal{O}\left(\frac{s_i^2}{\lambda_i^2} \right) = \mathcal{O}(h s_i) = \mathrm{o}(h) \qquad \mbox{as $d \rightarrow \infty$}.
$$
Therefore, applying Theorem \ref{thm:ng2} to the MH algorithm \eqref{eq hat1} we find
$$
	\mathrm{E}_{\pi_d}[(q_i^T(x_i' - x_i))^2] = 2 h \Phi\left( -\frac{l^3 |\theta-\tsfrac{1}{2}| \sqrt{\tau}}{4} \right) + \mathrm{o}(h)
$$
as $d \rightarrow \infty$, where $\pi_d$ is given in \eqref{eq hat1}.  Reversing the coordinate transformation $x \leftrightarrow V^{-1/2} x$ we obtain \eqref{eq jump2}.

\section{Proof of Lemma \ref{lem:lstepalpha}}

By replacing $\pi_d$ with $\pi^*$ in the proof of Lemma \ref{lem1} it follows that
$$
	\pi^*(x) q(x,y) = \pi^*(y) q(y,x)
$$
where $q$ corresponds the transition kernel for the SLA proposal.  That is, $q(\cdot,\cdot)$ satisfies detailed balance with respect to $\pi^*(\cdot)$.  It then follows from 
$$
	q_L(x,y) = \idotsint q(x,y^{(1)}) q(y^{(1)},y^{(2)}) \dotsm q(y^{(L-1)},y) \; \dd y^{(1)} \dd y^{(2)} \dotsm \dd y^{(L-1)}
$$
that $q_L(\cdot,\cdot)$ satisfies detailed balance with respect to $\pi^*(\cdot)$, so that
$$
	\pi^*(x) q_L(x,y) = \pi^*(y) q_L(y,x)
$$
Using this fact, we find
$$
	\frac{\pi_d(y)q_L(y,x)}{\pi_d(x) q_L(x,y)} = \frac{ \frac{\pi_d(y)}{\pi^*(y)} \pi^*(y) q_L(y,x) }{ \frac{\pi_d(x)}{\pi^*(x)} \pi^*(x) q_L(x,y)}
	= \frac{ \frac{\pi_d(y)}{\pi^*(y)}\pi^*(y) q_L(y,x) }{ \frac{\pi_d(x)}{\pi^*(x)} \pi^*(y) q_L(y,x) }
	= \frac{ \pi_d(y) \pi^*(x) }{ \pi_d(x) \pi^*(y)}.
$$
Hence, result.

\section{Proof of Theorem \ref{thm:lstep}}

We prove this theorem by applying Theorems \ref{thm:ng1} and \ref{thm:ng2} to $L$-step SLA.  First, let us check the conditions for these theorems.  Let $G = I-\tsfrac{h}{2}A$, $\Sigma = hI$ be the proposal matrices for the SLA proposal.  These matrices are functions of $A$.  It follows that $G_L$ and $\Sigma_L$ for the $L$-step SLA proposal are also functions of $A$ and $G_L \Sigma_L$ is symmetric.  It then follows from Corollary \ref{lem equiv} that the splitting matrices for $L$-step SLA are also functions of $A$.

Let $t_{d,i} = h \lambda_i = \mathcal{O}(d^{-1/3-\kappa})$ and $s_i = h \lambda_i^2 = \mathcal{O}(d^{-1/3})$ as $d \rightarrow \infty$.  Since $\mathcal{A}$ and $\beta$ for $L$-step SLA are the same as for SLA, from the proof of Theorem \ref{thm genlang conv} we have $\tilde{\lambda}_i^2 = (1-\frac{1}{4}s_i)\lambda_i^2$, $\hat{r}_i = 0$, $r_i = \tsfrac{1}{4} s_i$ and $\tilde{r}_i = (1-\tsfrac{1}{4}s_i)^{-1} = \mathcal{O}(1)$.

Also, $G_i = 1 - \tsfrac{1}{2} s_i$ so the eigenvalues of $G_L$ are $G_i^L = (1- \tsfrac{1}{2}s_i)^L = 1 - \tsfrac{L}{2} s_i + \mathcal{O}(s_i^2)$.  Hence, $\tilde{g}_{Li} = 1 - G_i^L = \tsfrac{L}{2} s_i + \mathcal{O}(s_i^2)$ and $g_{Li} = 1 - G_i^{2L} = L s_i + \mathcal{O}(s_i^2)$.

It then follows that $\tilde{g}_i^2 \hat{r}_i^2 \lambda_i = 0$, $\tilde{g}_{Li}^2 \lambda_i^{-1} = \mathcal{O}(s_i^2 \lambda_i^{-1}) = \mathcal{O}(s_i t_{d,i}) = \mathrm{o}(t_{d,i})$, and $g_{Li} \tilde{\lambda}_i^{-1} = \mathcal{O}(t_{d,i})$.

We also have for $L$-step SLA,
\begin{align*}
	T_{3i} = \tsfrac{1}{8} L s_i^2 + \mathcal{O}(s_i^3), \quad 
	T_{4i} = - \tsfrac{1}{8} L s_i^2 + \mathcal{O}(s_i^3), \quad
	T_{5i} = -\tsfrac{1}{4} s_i (L s_i + \mathcal{O}(s_i^2))^{1/2} (1 + \mathcal{O}(s_i)),
\end{align*}
so $T_{3i} = \mathcal{O}(d^{-2/3})$, $T_{4i} = \mathcal{O}(d^{-2/3})$ and $T_{5i} = \mathcal{O}(d^{-1/2})$.

We can also calculate, using Lemma \ref{lem 6a}, 
$$
	\lim_{d\rightarrow \infty} \sum_{i=1}^d T_{3i}^2 = \tsfrac{1}{64} L^2 \lim_{d \rightarrow \infty} \sum_{i=1}^d s_i^4 = 0,
$$
and similarly, $\lim_{d \rightarrow \infty} \sum_{i=1}^d T_{4i}^2 = 0$, and as in the proof of Theorem \ref{thm genlang conv} (calculating $\sigma^2$),
$$
	\lim_{d \rightarrow \infty} \sum_{i=1}^d T_{5i}^2 = \lim_{d \rightarrow \infty} \sum_{i=1}^d \tsfrac{1}{16} L s_i^3 = L \frac{l^6 \tau}{16} =: \sigma_L^2,
$$
Using Lemma \ref{lem 6a}, and as for calculating $\mu$ in the proof of Theorem \ref{thm genlang conv},
\begin{align*}
	\mu_L &= \lim_{d \rightarrow \infty} \sum_{i=1}^d T_{3i} + T_{4i} = \lim_{d \rightarrow \infty} \sum_{i=1}^d -\tsfrac{1}{2} r_i g_i (\tilde{r}_i-1) \\
	&= \lim_{d \rightarrow \infty} \sum_{i=1}^d -\tsfrac{1}{32}L s_i^3 + \mathcal{O}(s_i^4) \\
	&= \lim_{d \rightarrow \infty} \sum_{i=1}^d -\tsfrac{1}{32}L s_i^3 \\
	&= - L \frac{l^6 \tau}{32}.
\end{align*}
Hence, we obtain \eqref{eq:lstepa} from Theorem \ref{thm:ng1} and Corollary \ref{cor:ng}.

For the expected jump size, \eqref{eq ljump} follows from $\hat{r}_i = 0$, $g_{Li} \tilde{\lambda}_i^{-2} = L h + \mathrm{o}(h)$, $\tilde{g}_{Li}^2 \lambda_i^{-2} = \mathrm{o}(h)$, and $\gamma_{d,i}$ bounded uniformly (see proof of Theorem \ref{thm:ng1}).

\section{Proof of Theorem \ref{thm HMC1}}

The result will follow from Corollary \ref{lem equiv} with $G = (K^L)_{11}$ and $\Sigma = (K^L)_{12} V^{-1} (K^L)_{12}^T$ but we must first check that $G \Sigma$ is symmetric.  Define $\mathcal{V}$, $\mathcal{K}$ and $B$ as in Theorem \ref{thm iso hmc}.  Then $K = \mathcal{V} \mathcal{K} \mathcal{V}^{-1}$, so that $K^L = \mathcal{V} \mathcal{K}^L \mathcal{V}^{-1}$ and
$$
	(K^L)_{11} = V^{1/2} (\mathcal{K}^L)_{11} V^{-1/2} 
	\quad \mbox{and} \quad
	(K^L)_{12} = V^{1/2} (\mathcal{K}^L)_{12} V^{1/2}.
$$
Then
\begin{align*}
	G \Sigma = V^{1/2} (\mathcal{K}^L)_{11} (\mathcal{K}^L)_{12}^{2} V^{1/2} 
%	\Sigma^{-1}G &= V^{-1/2} (\mathcal{K}^L)_{12}^{-2}(\mathcal{K}^L)_{11} V^{-1/2} \\
%	\Sigma^{-1} G^2 &= V^{-1/2} (\mathcal{K}^L)_{12}^{-2} (\mathcal{K}^L)_{11}^2 V^{-1/2}
\end{align*}
which is symmetric because $V$ and $B$ are symmetric and $(\mathcal{K}^L)_{11}$ and $(\mathcal{K}^L)_{12}$ are polynomials of $B$.

\section{Proof of Corollary \ref{cor 15}}

First note that 
$$
	(I-(K^L)_{11}) \mathcal{A}^{-1} \beta = \left( SJ \left[ \twobyone{0}{\tsfrac{h}{2}b} \right] \right)_1,
$$
so we are required to show that
$$
	(I-(K^L)_{11}) A^{-1} b = \left( SJ \left[ \twobyone{0}{\tsfrac{h}{2}b} \right] \right)_1,
$$
which holds if
$$
	(I-K^L) \left[ \twobyone{A^{-1}b}{0} \right] = SJ \left[ \twobyone{0}{\tsfrac{h}{2}b} \right].
$$
Using $S = (I-K)^{-1}(I-K^L)$, we can equivalently show
$$
	(I-K) \left[ \twobyone{A^{-1}b}{0} \right] = J \left[ \twobyone{0}{\tsfrac{h}{2}b} \right],
$$
which is easy to check.

\section{Proof of Theorem \ref{thm uhmc conv}}

Define a spectral decomposition 
\begin{equation}
\label{eq specdecomp}
	V^{1/2}AV^{1/2} = Q \Lambda Q^T
\end{equation}
where $Q$ is an orthogonal matrix and $\Lambda = \operatorname{diag}(\lambda_1^2,\dotsc,\lambda_d^2)$ is a diagonal matrix of eigenvalues of $V^{1/2}AV^{1/2}$ ($VA$ is similar to $V^{1/2}AV^{1/2}$ so they have the same eigenvalues).  Also define $\mathcal{V}$ as in Theorem \ref{thm iso hmc} and 
$$
	\tilde{Q} = 	\left[ \twobytwo{Q}{0}{0}{Q} \right] \in \bbR^{2d \times 2d}.
$$
A similarity transform of $K$ is defined by
$$
	K = \mathcal{V} \tilde{Q} \tilde{K} \tilde{Q}^T \mathcal{V}^{-1}
\quad \mbox{
with} \quad
	\tilde{K} = \left[ \twobytwo{I - \tsfrac{h^2}{2}\Lambda}{hI}{-h\Lambda + \tsfrac{h^3}{4} \Lambda^2}{I - \tsfrac{h^2}{2}\Lambda} \right].
$$
Hence $K$ and $\tilde{K}$ have the same eigenvalues.  Moreover, $K^L = \mathcal{V} \tilde{Q} \tilde{K}^L \tilde{Q}^T \mathcal{V}^{-1}$ and it follows that 
$$
	(K^L)_{11} = V^{1/2} Q (\tilde{K}^L)_{11} Q^T V^{-1/2}.
$$
Thus $(K^L)_{11}$ and $(\tilde{K}^L)_{11}$ are similar.% and $\rho((K^L)_{11}) = \rho((\tilde{K}^L)_{11})$.

Notice that $\tilde{K}$ is a $2\times2$ block matrix where each $d\times d$ block is diagonal.  Therefore, $\tilde{K}^L$ is also a $2\times2$ block matrix with diagonal blocks.  In particular, $(\tilde{K}^L)_{11}$ is a diagonal matrix, so the eigenvalues of $(\tilde{K}^L)_{11}$ are on the diagonal of $(\tilde{K}^L)_{11}$.  Moreover, 
$$
	[(\tilde{K}^L)_{11}]_{ii} = (k_i^L)_{11}
$$
where $[(\tilde{K}^L)_{11}]_{ii}$ is the $i^{\mathrm{th}}$ diagonal entry of $(\tilde{K}^L)_{11}$,  $(k_i^L)_{11}$ is the $(1,1)$ entry of the matrix $k_i^L \in \bbR^{2\times2}$, and $k_i \in \bbR^{2\times2}$ is defined by
$$
	k_i = \left[ \twobytwo{(\tilde{K}_{11})_{ii}}{(\tilde{K}_{12})_{ii}}{(\tilde{K}_{21})_{ii}}{(\tilde{K}_{22})_{ii}} \right]
	= \left[ \twobytwo{1-\tsfrac{h^2}{2}\lambda_i^2}{h}{-h\lambda_i^2 + \tsfrac{h^3}{4}\lambda_i^4}{1-\tsfrac{h^2}{2}\lambda_i^2} \right].
$$  
The matrix $k_i$ can be factorized
$$
	k_i = \left[ \twobytwo{1}{0}{0}{a} \right] \left[ \twobytwo{\cos(\theta_i)}{-\sin(\theta_i)}{\sin(\theta_i)}{\cos(\theta_i)} \right]
	\left[ \twobytwo{1}{0}{0}{a^{-1}} \right]
$$
where $a = \lambda \sqrt{1 - \tsfrac{h^2}{4}\lambda_i^2}$ and $\theta_i = -\cos( 1-\tsfrac{h^2}{2}\lambda_i^2)$.  Therefore,
$$
	k_i^L = \left[ \twobytwo{1}{0}{0}{a} \right] \left[ \twobytwo{\cos(L \theta_i)}{-\sin(L \theta_i)}{\sin(L \theta_i)}{\cos(L \theta_i)} \right]
	\left[ \twobytwo{1}{0}{0}{a^{-1}} \right]
$$
and hence
$$
	[(\tilde{K}^L)_{11}]_{ii} = (k_i^L)_{11} = \cos(L\theta_i).
$$

\section{Proof of Theorem \ref{thm hmc conv}}

Theorem \ref{thm iso hmc} implies that it is equivalent to study the MH algorithm with target and proposal given by \eqref{eq hmchat1}, and since $(\mathcal{K}^L)_{11}$ and $(\mathcal{K}^L)_{12}$ are functions of $B$ we can apply Theorems \ref{thm accept} and \ref{thm jumpsize}.  Using the spectral decomposition \eqref{eq specdecomp} note that
$$
	(\mathcal{K}^L)_{ij} = Q (\tilde{K}^L)_{ij} Q^T \qquad \mbox{for $i,j=1,2$}.
$$
where $\tilde{K}$ is defined in the proof of Theorem \ref{thm uhmc conv}, and where it is shown that $(\tilde{K}^L)_{ij}$ is diagonal and
$$
	[(\tilde{K}^L)_{11}]_{ii} = \cos (L \theta_i)
$$
where $\theta_i = -\cos^{-1}(1-\tsfrac{h^2}{2}\lambda_i^2)$.  Similarly,
$$
	[(\tilde{K}^L)_{12}]_{ii} = -a_i^{-1} \sin (L \theta_i)
$$
where $a_i = \lambda_i \sqrt{1-\tsfrac{h^2}{4}\lambda_i^2}$.  Moreover, $\mathcal{A} = Q (\tilde{K}^L)_{12}^2 (I-(\tilde{K}^L)_{11}^2) Q^T$ so that $\tilde{\lambda}^2_i = -a_i^2$ and if we let $s_i = h^2 \lambda_i^2 = l^2 d^{-1/2} (\frac{\lambda_i}{d^\kappa})^2 = \mathcal{O}(d^{-1/2})$, then 
\begin{align*}
	\tilde{\lambda}_i^2 &= \lambda_i^2 (1 - \tsfrac{1}{4}s_i), & G_i &= \cos(L\theta_i), & \tilde{g}_i &= 1 - \cos(L\theta_i), & g_i &= \sin^2(L \theta_i), \\
	r_i &= \tsfrac{1}{4} s_i, & \tilde{r}_i &= \tsfrac{1}{1-\tsfrac{1}{4}s_i}, & \hat{r}_i &= 0.
\end{align*}
Note that we used Corollary \ref{cor 15} to show $\hat{r}_i = 0$.  Then
\begin{align*}
	T_{0i} &= T_{1i} = T_{2i} = 0, \\
	T_{3i} &= \tsfrac{1}{8} s_i \sin^2(L\theta_i), &
	T_{4i} &= - \frac{\tsfrac{1}{8} s_i \sin^2(L\theta_i)}{1-\tsfrac{1}{4}s_i}, &
	T_{5i} &= - \frac{ \tsfrac{1}{8} s_i \sin(2L\theta_i)}{\sqrt{1-\tsfrac{1}{4}s_i}}.
\end{align*}
Using the trigonmetric expansion $\cos^{-1}(1-z) = \sqrt{2z} + \mathcal{O}(z^{3/2})$, and defining $T'$ such that $L = \tsfrac{T'}{h}$ we find 
$$
	L \theta_i
			= L ( - \sqrt{s_i} + \mathcal{O}(s_i^{3/2}) )
			= - L s_i^{1/2} (  1 + \mathcal{O}(s_i) )
			= - T' \lambda_i ( 1 + \mathcal{O}(d^{-1/2}) )
$$
hence, there exists a function $T''(d)$ such that $L \theta_i = - T'' \lambda_i$ and
$T''(d) = T' + \mathcal{O}(d^{-1/2})$.

To apply Theorems \ref{thm accept} and \ref{thm jumpsize} we need to check \eqref{eq Tcond}.  For some $h > 0$, $c l^2 (\tsfrac{i}{d})^{2\kappa} \leq d^{1/2} s_i = l^2 (\tsfrac{\lambda_i}{d^\kappa})^2 \leq C l^2 (\tsfrac{i}{d})^{2\kappa}$ and so we find
\begin{align*}
	\lim_{d \rightarrow \infty} \frac{ \sum_{i=1}^d |T_{3i}|^{2+\delta} }{ \left( \sum_{i=1}^d |T_{3i}|^2 \right)^{1+\delta/2} }
	&= \lim_{d \rightarrow \infty} \frac{ \sum_{i=1}^d |s_i \sin^2(L\theta_i) |^{2+\delta} }{ \left( \sum_{i=1}^d |s_i \sin^2(L\theta_i)|^2 \right)^{1+\delta/2} } \\
	&= \lim_{d \rightarrow \infty} d^{-\delta/2} \left( \frac{ \left( \sum_{i=1}^d | d^{1/2} s_i \sin^2(T'' \lambda_i) |^{2+\delta} \right)^{1/(2+\delta)} }{ \left( \sum_{i=1}^d |d^{1/2} s_i \sin^2(T'' \lambda_i)|^2 \right)^{1/2} } \right)^{2+\delta} \\
	&= 0
\end{align*}
since $\nrm{v}_q \leq \nrm{v}_p$ for $q \geq p> 0 $ for $l^p$-norms on $\bbR^d$.
Similar arguments verify \eqref{eq Tcond} for $T_{4i}$ and $T_{5i}$.  Now we can apply Theorem \ref{thm accept} with
\begin{align*}
	\mu 
	&= \lim_{d \rightarrow \infty} \sum_{i=1}^d T_{3i} + T_{4i} \\
	&= -\frac{1}{32} \lim_{d \rightarrow \infty}  \sum_{i=1}^d \frac{s_i^2 \sin^2(L\theta_i)}{1-\tsfrac{1}{4}s_i} \\
	&= -\frac{1}{32} \lim_{d \rightarrow \infty}  \sum_{i=1}^d s_i^2 \sin^2(L\theta_i) \\
	&= -\frac{l^4}{32} \lim_{d \rightarrow \infty} \frac{1}{d^{1+4\kappa}} \sum_{i=1}^d \lambda_i^4 \sin^2(\lambda_i T''(d)) \\
	&= -\frac{l^4}{32} \lim_{d \rightarrow \infty} \frac{1}{d^{1+4\kappa}} \sum_{i=1}^d \lambda_i^4 \sin^2(\lambda_i T') \\
	&= -\frac{l^4 \tau}{32}
%	&= \lim_{d \rightarrow \infty} -\frac{l^4}{32} \int_0^1 z^{4\kappa} \sin^2(z^\kappa T''(d)) \mathrm{d}z \\
%	&= -\frac{l^4}{32} \frac{1}{2\pi} \int_0^{2\pi} \sin^2(z') \mathrm{d}z' \int_0^1 z^{4\kappa} \mathrm{d}z \\
%	&= -\frac{l^4}{64(1+4\kappa)},
\end{align*}
and similarly,
\begin{align*}
	\sigma^2 &= \lim_{d \rightarrow \infty} \sum_{i=1}^d 2 T_{3i}^2 + 2 T_{4i}^2 + T_{5i}^2 \\
	&= \lim_{d \rightarrow \infty} \sum_{i=1}^d \frac{1}{32} s_i^2 \sin^4(L\theta_i) + \frac{1}{32} \frac{s_i^2 \sin^4(L\theta_i)}{(1-\tsfrac{1}{4}s_i)^2} + \frac{1}{64} \frac{s_i^2 \sin^2(2L\theta_i)}{1-\tsfrac{1}{4}s_i} \\
	&= \lim_{d \rightarrow \infty} \sum_{i=1}^d \frac{1}{16} s_i^2 \sin^4(L\theta_i) + \frac{1}{64} s_i^2 \sin^2(2L\theta_i) \\
	&= \frac{1}{16} \lim_{d \rightarrow \infty} \sum_{i=1}^d s_i^2 \sin^2(L\theta_i) \\
	&= \frac{l^4 \tau}{16}.
\end{align*}
Hence $\frac{\mu}{\sigma} = -\sigma - \frac{\mu}{\sigma} = - \frac{l^2 \sqrt{\tau}}{8}$ and $\mu+\sigma^2/2 = 0$, so from Theorem \ref{thm accept} we obtain \eqref{eq hmcexpect}.

For the expected jump size, we apply Theorem \ref{thm jumpsize} with
\begin{align*}
	U_1 &= \frac{\tilde{g}_i^2}{\lambda_i^2} + \frac{g_i}{\tilde{\lambda}_i^2} = \frac{(1-\cos(L\theta_i))^2}{\lambda_i^2} + \frac{\sin^2(L\theta_i)}{\lambda_i^2 (1-\frac{1}{4}s_i)} 
	= \frac{2(1-\cos(\lambda_i T'))}{\lambda_i^2} + \mathcal{O}\left( \frac{s_i}{\lambda_i^2} \right) \\ 
	&= \frac{2(1-\cos(\lambda_i T'))}{\lambda_i^2} + \mathcal{O}\left( h \right)
\end{align*}
as $d\rightarrow \infty$.  Also, it is straightforward to show that $\mu_i = \mathcal{O}(d^{-1})$ and $\sigma_i^2 = \mathcal{O}(d^{-1})$, so $\mu^-$ and $\sigma^-$ exist.  Recall that $Z \sim \normal(\mu,\sigma^2)$ and $X \sim \normal(\mu^-,(\sigma^-)^2)$, then $X-Z \sim \normal(-\mu_i,\sigma^2 + (\sigma^-)^2)$, and using the fact that $z \mapsto 1 \wedge \mathrm{e}^z$ is monotone and globally Lipschitz continuous
$$
	|U_2 - \mathrm{E}[\alpha(x,y)]| = |\mathrm{E}[ 1 \wedge \mathrm{e}^X] - \mathrm{E}[1 \wedge \mathrm{e}^Z]| \leq |\mathrm{E}[ X-Z]| = |\mu_i| = \mathcal{O}(d^{-1}).
$$
Then $U_2 = \mathrm{E}[\alpha(x,y)] + \mathcal{O}(d^{-1})$ as $d \rightarrow \infty$, and 
\begin{align*}
	|U_3| &= (\sigma_i^2 + \mu_i^2)^{1/2} \frac{\sqrt{3}}{\lambda_i^2} \left( \tilde{g}_i^2 + \tilde{r}_i g_i \right) \\
	&= (\sigma_i^2 + \mu_i^2)^{1/2} \frac{\sqrt{3}}{\lambda_i^2} \left( (1-\cos(L\theta_i))^2 + \frac{1}{1-\frac{1}{4}s_i} \sin^2(L\theta_i) \right) \\
	&= \mathcal{O}(d^{-1/2}).
\end{align*}
Therefore, we obtain \eqref{eq hmcjump}.

\bibliographystyle{plain}
\bibliography{paper8bib} 

\end{document}